\theoremstyle{definition}
\newtheorem{theorem}{Theorem}[section]
\newtheorem{lemma}[theorem]{Lemma}
\newtheorem{proposition}[theorem]{Proposition}
\newtheorem{corollary}[theorem]{Corollary}
\newtheorem{definition}[theorem]{Definition}
\theoremstyle{remark}
\newtheorem{remark}[theorem]{Remark}
\newcommand {\mm}[1] {\ifmmode{#1}\else{\mbox{\(#1\)}}\fi}
\newcommand{\Mspace}        {\mm{{\mathbb M}}}
\newcommand{\Rspace}        {\mm{{\mathbb R}}}
\newcommand{\Sspace}        {\mm{{\mathbb S}}}
\newcommand{\Xspace}        {\mm{{X}}}
\newcommand{\Yspace}        {\mm{{Y}}}
\newcommand{\Pset}        {\mm{\mathcal{P}}}
\newcommand{\zr}			{\mm{{\mathbb R}}}
\newcommand{\Scal}        {\mm{\mathcal S}}
\newcommand{\Vcal}        {\mm{\mathcal V}}
\newcommand{\cone}        {\mm{\mathrm{cone}}}
\newcommand{\bdr}        {{\partial}}
\newcommand{\dime}[1]       {\mm{\rm dim\,}{#1}}
\newcommand{\ep}            {\mm{\varepsilon}}
\newcommand{\para}[1]{\vspace{2mm}\noindent{\textbf{#1}}}
\newcommand{\denselist}{\vspace{-5pt} \itemsep -2pt\parsep=-1pt\partopsep -2pt}
\title{Discrete Stratified Morse Theory: Algorithms and A User's Guide}
\author{Kevin Knudson\\ University of Florida \\
kknudson@ufl.edu
\and Bei Wang \\ University of Utah\\beiwang@sci.utah.edu}
\date{}
\begin{document}

\maketitle 
\begin{abstract}

Inspired by the works of Forman on discrete Morse theory, which is a combinatorial adaptation to cell complexes of classical Morse theory on manifolds, we introduce a discrete analogue of the stratified Morse theory of Goresky and MacPherson.
We describe the basics of this theory and prove fundamental theorems relating the topology of a general simplicial complex with the critical simplices of a discrete stratified Morse function on the complex. 
We also provide an algorithm that constructs a discrete stratified Morse function out of an arbitrary function defined on a finite simplicial complex; 
this is different from simply  constructing a discrete Morse function on such a complex. 
We then give simple examples to convey the utility of our theory. 
Finally, we relate our theory with the classical stratified Morse theory in terms of triangulated Whitney stratified spaces. 

\end{abstract}

\pagestyle{plain}

\section{Introduction}
\label{sec:introduction}

It is difficult to overstate the utility of classical Morse theory in the study of manifolds. 
A Morse function $f:\Mspace \to \Rspace$ determines an enormous amount of information about the manifold $\Mspace$:~a handlebody decomposition, a realization of $\Mspace$ as a CW-complex whose cells are determined by the critical points of $f$, a chain complex for computing the integral homology of $\Mspace$, and much more. 

With this as motivation, Forman developed discrete Morse theory on general cell complexes~\cite{Forman1998}. This is a combinatorial theory in which function values are assigned not to points in a space but rather to entire cells. Such functions are not arbitrary; the defining conditions require that function values generically increase with the dimensions of the cells in the complex. Given a cell complex with set of cells $K$, a discrete Morse function $f: K \to \Rspace$ yields information about the cell complex similar to what happens in the smooth case.

While the category of manifolds is rather expansive, it is not sufficient to describe all situations of interest. Sometimes one is forced to deal with singularities, most notably in the study of algebraic varieties. One approach to this is to expand the class of functions one allows, and this led to the development of stratified Morse theory by Goresky and MacPherson~\cite{GoreskyMacPherson1988}. The main objects of study in this theory are {\em Whitney stratified spaces}, which decompose into pieces that are smooth manifolds. Such spaces are triangulable.

The goal of this paper is to generalize stratified Morse theory to finite simplicial complexes, much as Forman did in the classical smooth case. 
Given that stratified spaces admit simplicial structures, and any simplicial complex admits interesting discrete Morse functions, this could be the end of the story. 
However, we present examples in this paper illustrating that the class of discrete stratified Morse functions defined here is much larger than that of discrete Morse functions. Moreover, there exist discrete stratified Morse functions that are nontrivial and interesting from a data analysis point of view. 
Our motivations are three-fold. We address the first movitation in Section \ref{sec:pointdata}; the second and third are the subjects of ongoing research.

\begin{enumerate}
\item \textbf{Generating discrete stratified Morse functions from point cloud data.}
Consider the following scenario. Suppose $K$ is a simplicial complex and that $f$ is a function defined on the $0$-skeleton of $K$. Such functions arise naturally in data analysis where one has a sample of function values on a space. Algorithms exist to build discrete Morse functions on $K$ extending $f$ (see, for example, \cite{KingKnudsonMramor2005}). Unfortunately, these are often of potentially high computational complexity and might not behave as well as we would like. In our framework, we may take this input and generate a discrete stratified  Morse function which will not be a global discrete Morse function in general, but which will allow us to obtain interesting information about the underlying complex.

\item \textbf{Filtration-preserving reductions of complexes in persistent homology and parallel computation.}
As discrete Morse theory is useful for providing a filtration-preserving reduction of complexes in the computation of both persistent homology~\cite{DlotkoWagner2012,MischaikowNanda2013,RobinsWoodSheppard2011} and multi-parameter
persistent homology~\cite{AlliliKaczynskiLandi2017}, we believe that  discrete stratified Morse theory could help to push the computational boundary even further. 
First, given any real-valued function defined on a simplicial complex, $f: K \to \Rspace$, our algorithm generates a stratification of $K$ such that the restriction of $f$ to each stratum is a discrete Morse function. Applying \emph{Morse pairing} to each stratum reduces $K$ to a  smaller complex of the same homotopy type. 
Second, if such a reduction can be performed in a filtration-preserving way with respect to each stratum, it would lead to a faster computation of persistent homology in the setting where the function is not required to be Morse. 
Finally, since discrete Morse theory can be applied independently to each stratum of $K$, we can design a parallel algorithm that computes persistent homology pairings by strata and uses the stratification, which captures relations among strata pieces, to combine the results. 

\item \textbf{Applications in imaging and visualization.}
Discrete Morse theory can be used to construct discrete Morse complexes in imaging (e.g.~\cite{Delgado-FriedrichsRobinsSheppard2015,RobinsWoodSheppard2011}), as well as Morse-Smale complexes~\cite{EdelsbrunnerHarerNatarajan2003,EdelsbrunnerHarerZomorodian2003} in visualization (e.g.~\cite{GuntherReininghausSeidel2014,GyulassyBremerPascucci2008}). 
In addition, it plays an essential role in the visualization of scalar fields and vector fields (e.g.~\cite{Reininghaus2012,ReininghausKastenWeinkauf2011}). 
Since discrete stratified Morse theory leads naturally to  stratification-induced domain partitioning where discrete Morse theory becomes applicable, we envision our theory to have wide applicability for the analysis and visualization of large complex data. 
\end{enumerate}

\para{Contributions.} 
Throughout the paper, we hope to convey via simple examples the usability of our theory. 
It is important to note that our discrete stratified Morse theory is \emph{not} a simple reinterpretation of  discrete Morse theory; it considers a larger class of functions defined on any finite simplicial complex and has potentially many implications for data analysis. 
Our contributions are: 
\begin{enumerate} 
\item We describe the basics of a discrete stratified Morse theory and prove fundamental theorems that relate the topology of a finite  simplicial complex with the critical simplices of a discrete stratified Morse function defined on the complex. 
\item We provide an algorithm that constructs a discrete stratified Morse function on any finite simplicial complex equipped with a real-valued function.
\item We prove that given a stratified set $S$ equipped with a triangulation $T$ and a stratified Morse function $f:S\to\zr$, there is an integer $r$ such that the $r$-th barycentric subdivision of $T$ supports a discrete stratified Morse function whose critical cells correspond to the critical points of $f$.
\item We demonstrate how to build a discrete stratified Morse function from a function defined on the vertices of a simplicial complex, based on a modification of the algorithm by King et al.~\cite{KingKnudsonMramor2005}; therefore addressing the first motivation. 
\end{enumerate}

An extended abstract of the present paper previously appeared as a conference paper~\cite{KnudsonWang2018}, which gave preliminary results surrounding contributions 1 and 2 above. 
The current paper contains the following extensions that encompass improvements of and changes to the conference version as well as new results.
In particular, we change the definition of a stratified simplicial complex (Definition~\ref{def:stratified-sc}) to be well-aligned with its continuous counterpart (e.g.~Whitney stratification) that considers the condition of the frontier. 
Given such a new definition, Theorems~\ref{thm:weak-dsmt-a} and~\ref{thm:weak-dsmt-b} discuss the change of homotopy type surrounding critical cells. 
We give new results that relate discrete Morse and discrete stratified Morse functions (Theorems~\ref{thm:dvf-union} and~\ref{thm:dmf-strata}).
We further characterize the coarseness property of our algorithm in constructing stratified Morse functions from any real-valued function on a simplicial complex (Proposition~\ref{prop:algcoarse}).
Finally, we discuss the applications of our theory to classical stratified Morse theory in discretizing a stratified Morse function (Theorem~\ref{thm:dsmftriangulation}) and provide an algorithm to generate discrete stratified Morse functions from point data (Theorem~\ref{thm:dsmfextdmf}). 

\para{A simple example.} 
We begin with an example from~\cite{Forman2002}, where we demonstrate how a discrete stratified Morse function can be constructed from a function that is not a discrete Morse function. 
As illustrated in Figure~\ref{fig:dsmf-circle-0}, the function on the left is a discrete Morse function where the green arrows can be viewed as its discrete gradient vector field; function $f$ in the middle is not a discrete Morse function, as the vertex $f^{-1}(5)$ and the edge $f^{-1}(0)$ both violate the defining conditions of a discrete Morse function. However, we can equip $f$ with a stratification $s$ by treating such violators as their own independent strata and taking care of boundary conditions, therefore converting $f$ into a discrete stratified Morse function.  

\begin{figure}[!ht]
 \begin{center}
  \includegraphics[width=0.7\linewidth]{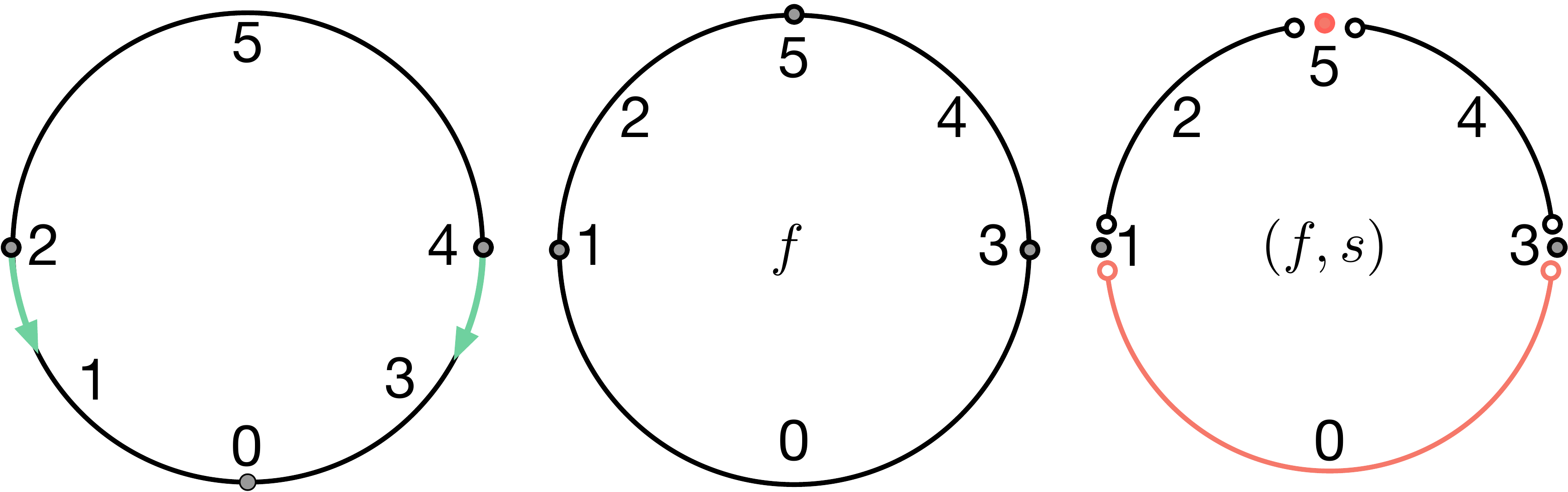}
 \end{center}
\caption{The function on the left is a discrete Morse function. The function $f$  in the middle is not a discrete Morse function; however, it can be converted into a discrete stratified Morse function on the right when it is equipped with an appropriate stratification $s$.}
\label{fig:dsmf-circle-0}
\end{figure}

\section{Preliminaries on Discrete Morse Theory}
\label{sec:prelim}

We review the most relevant definitions and results on discrete Morse theory and refer the reader to Appendix \ref{sec:prelim-SMT} for a review of classical Morse theory. 
Discrete Morse theory is a combinatorial version of Morse theory \cite{Forman1998,Forman2002}. It can be defined for any CW complex but in this paper we will restrict our attention to finite simplicial complexes. 

\para{Discrete Morse functions.}
Let $K$ be any finite simplicial complex, where $K$ need not be a triangulated manifold nor have any other special property~\cite{Forman1999}. When we write $K$ we mean the set of simplices of $K$; by $|K|$ we mean the underlying topological space. 
Let $\alpha^{(p)} \in K$ denote a simplex of dimension $p$. 
Let $\alpha < \beta$ denote that simplex $\alpha$ is a face of simplex $\beta$. If $f:K\to \Rspace$ is a function
define $U(\alpha) = \{\beta^{(p+1)} > \alpha \mid f(\beta) \leq f(\alpha) \}$ 
and $L(\alpha) = \{\gamma^{(p-1)} < \alpha \mid f(\gamma) \geq f(\alpha) \}$.
In other words, $U(\alpha)$ contains the immediate cofaces of $\alpha$ with lower (or equal) function values, while $L(\alpha)$ contains the immediate  faces of $\alpha$ with higher (or equal) function values. 
Let $|U(\alpha)|$ and $|L(\alpha)|$ be their sizes.  

\begin{definition}\label{def:dmf}
A function $f: K \to \Rspace$ is a \emph{discrete Morse function} if for every $\alpha^{(p)} \in K$, 
(i) $|U(\alpha)| \leq 1$ and 
(ii) $|L(\alpha)| \leq 1$.
\end{definition}

Forman showed that conditions (i) and (ii) are exclusive -- if one of the sets $U(\alpha)$ or $L(\alpha)$ is nonempty then the other one must be empty (\cite{Forman1998}, Lemma 2.5).
Therefore each simplex $\alpha \in K$ can be paired with at most one exception simplex: either a face $\gamma$ with larger function value, or a coface $\beta$ with smaller function value. 
Formally, this means that 
if $K$ is a simplicial complex with a discrete Morse function $f$, then for any simplex $\alpha$, 
either (i) $|U(\alpha)| = 0$ or (ii) $|L(\alpha)| = 0$ (\cite{Forman2002}, Lemma 2.4).

\begin{definition}
A simplex $\alpha^{(p)}$ is \emph{critical} if (i) $|U(\alpha)| = 0$ and (ii) $|L(\alpha)| = 0$. A {\em critical value} of $f$ is its value at a critical simplex. 
\end{definition}

\begin{definition}
A simplex $\alpha^{(p)}$ is \emph{noncritical} if either of the following conditions holds: (i) $|U(\alpha)| = 1$; (ii) $|L(\alpha)| = 1$; as noted above these conditions can not both be true (\cite{Forman1998}, Lemma 2.5). 
\end{definition}

Given $c\in \Rspace$, we have the \emph{sublevel complex} $K_c = \cup_{f(\alpha) \leq c} \cup_{\beta \leq \alpha} \beta$. 
That is, $K_c$ contains all simplices $\alpha$ of $K$ such that $f(\alpha) \leq c$ along with all of their faces.  

\para{Results.}
We have the following two combinatorial versions of the main results of classical Morse theory. 
\begin{theorem}[DMT Theorem A,~\cite{Forman1999}] 
\label{theorem:dmt-a}
Suppose the interval $(a,b]$ contains no critical value of $f$. 
Then $K_b$ is homotopy equivalent to $K_a$. 
In fact, $K_b$ simplicially collapses onto $K_a$.
\end{theorem}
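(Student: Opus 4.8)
The plan is to prove the stronger claim that $K_b$ simplicially collapses onto $K_a$; homotopy equivalence is then automatic. I would first reduce to a single transition. Since $K$ is finite, $f$ takes only finitely many values, so I can choose $a = c_0 < c_1 < \cdots < c_m = b$ such that each interval $(c_{j-1}, c_j]$ contains at most one value of $f$. A composition of simplicial collapses is again a simplicial collapse, so it suffices to prove $K_{c_j} \searrow K_{c_{j-1}}$ when $(c_{j-1}, c_j]$ contains exactly one value $c$ of $f$. Writing $a' = c_{j-1}$, the hypothesis that $(a,b]$ contains no critical value guarantees that $c$ is not critical, so every simplex with $f$-value $c$ is noncritical, hence (by the exclusivity property of \cite{Forman1998}, Lemma 2.5) paired with a unique face or coface.

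The heart of the argument is to understand which simplices are added in passing from $K_{a'}$ to $K_c$. For a simplex $\delta$, let $e(\delta) = \min\{f(\rho) : \rho \ge \delta\}$ be the smallest value among $\delta$ and its cofaces, so that $\delta \in K_t$ precisely when $e(\delta) \le t$. I claim that $e(\delta) = f(\delta)$ if $U(\delta) = \emptyset$, while $e(\delta) = f(\tau)$ if $U(\delta) = \{\tau\}$; in words, a simplex enters the sublevel complex either at its own value, or together with the unique codimension-one coface to which the discrete gradient pairs it. This is where Lemma 2.5 does the real work: given a coface $\rho > \delta$ with $f(\rho) < f(\delta)$, the two intermediate codimension-one cells of the interval $[\delta, \rho]$ cannot both exceed $f(\delta)$ (otherwise $\rho$ would have two faces of higher value, violating exclusivity), so one of them lies in $U(\delta)$; an induction on codimension upgrades this to the stated formula. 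Consequently the simplices of $K_c \setminus K_{a'}$ are exactly those $\delta$ with $e(\delta) = c$, and each is matched to a partner with the same value of $e$. Thus $K_c \setminus K_{a'}$ is a disjoint union of gradient pairs $(\sigma^{(p)}, \tau^{(p+1)})$, each with $f(\tau) = c$ and $f(\sigma) \ge c$.

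I would then collapse these pairs, removing them in order of decreasing dimension of the top cell $\tau$. The point to check is that at the moment a pair $(\sigma, \tau)$ is removed, $\sigma$ is a free face of $\tau$, i.e. $\tau$ is the only coface of $\sigma$ still present. Every coface of $\sigma$ other than $\tau$ has $f$-value exceeding $f(\sigma) \ge c$ (since $U(\sigma) = \{\tau\}$), and by the formula for $e$ it can lie in $K_c$ only as the bottom of a strictly higher-dimensional pair, or it is itself a higher-dimensional cell of $K_c \setminus K_{a'}$; in both cases it belongs to a pair whose top has dimension larger than $\dim \tau$ and so was already removed in an earlier round. Within a single round the tops all have value $c$, and such a top cannot be an extra coface of another bottom without again contradicting $U(\cdot) = \{\cdot\}$, so the elementary collapses in one round do not interfere. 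This yields $K_c \searrow K_{a'}$, and composing over $j$ gives $K_b \searrow K_a$.

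I expect the main obstacle to be the structural lemma in the second paragraph, namely that sublevel complexes are compatible with the discrete gradient, so that $K_c \setminus K_{a'}$ breaks into complete matched pairs. The naive guess that $f$ increases along every chain of faces is false, and a simplex can enter the sublevel complex well below its own value through a high-codimension coface; controlling this via the exclusivity lemma and the codimension induction is the delicate step. Once that compatibility is established, the ordered collapse in the third paragraph is essentially bookkeeping.
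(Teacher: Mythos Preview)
The paper does not prove this theorem itself; it is quoted from Forman and only the key mechanism is stated (``when passing from one sublevel complex to the next, the noncritical simplices are added in pairs, each of which consists of a simplex and a free face''). Your proposal is a correct fleshing-out of exactly that mechanism and matches Forman's original argument: identify $K_c\setminus K_{a'}$ as a disjoint union of gradient pairs via the entry-time function $e(\delta)=\min_{\rho\ge\delta}f(\rho)$, then collapse the pairs by decreasing top-dimension.

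One phrasing to tighten: the sentence about ``the two intermediate codimension-one cells of the interval $[\delta,\rho]$'' is only literally correct when $\dim\rho=\dim\delta+2$. The clean version of the induction step is: pick $\rho>\delta$ realizing $e(\delta)$; among the $\dim\rho-\dim\delta\ge 2$ codimension-one faces of $\rho$ that contain $\delta$, at most one lies in $L(\rho)$, so some such face $\nu$ satisfies $\delta<\nu<\rho$ with $f(\nu)<f(\rho)$, contradicting minimality of $f(\rho)$. With that adjustment your structural lemma and the subsequent collapse argument go through as written.
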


A key component in the proof of Theorem~\ref{theorem:dmt-a} is the following fact~\cite{Forman1998}: for a simplicial complex equipped with an arbitrary discrete Morse function, when passing from one sublevel complex to the next, the noncritical simplices are added in pairs, each of which consists of a simplex and a free face. 

The next theorem explains how the topology of the sublevel complexes changes as one passes a critical value of a discrete Morse function. In what follows, $\dot{e}^{(p)}$ denotes the boundary of a $p$-simplex $e^{(p)}$. 

\begin{theorem}[DMT Theorem B,~\cite{Forman1999}] 
\label{theorem:dmt-b}
Suppose $\sigma^{(p)}$ is a critical simplex with $f(\sigma) \in (a,b]$, and there are no other critical simplices with values in $(a,b]$. Then $K_b$ is homotopy equivalent to the space obtained by attaching a $p$-cell $e^{(p)}$ along its entire boundary in $K_a$; that is, $K_b= K_a \cup_{\dot{e}^{(p)}} e^{(p)}$.
\end{theorem}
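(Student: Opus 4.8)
The plan is to translate the classical Morse-theoretic statement—that passing a critical point of index $\lambda$ attaches a $\lambda$-cell—into the combinatorial setting, using Theorem~\ref{theorem:dmt-a} to clear away the noncritical simplices so that the entire change in homotopy type is concentrated in the single critical simplex $\sigma^{(p)}$. The elementary collapses supplied by Theorem~\ref{theorem:dmt-a} will play the role that deformation retractions play in the smooth Morse Lemma.

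First I would isolate the critical value. Since $K$ is finite, $f$ assumes only finitely many values; write $v=f(\sigma)$ and choose $\epsilon>0$ so small that $f$ takes no value in the open interval $(v-\epsilon,v)$ and that neither $(a,v-\epsilon]$ nor $(v,b]$ contains a critical value. Theorem~\ref{theorem:dmt-a} then yields simplicial collapses $K_{v-\epsilon}\searrow K_a$ and $K_b\searrow K_v$, so in particular $K_a\simeq K_{v-\epsilon}$ and $K_b\simeq K_v$. It therefore suffices to analyze the passage from $K_{v-\epsilon}$ to $K_v$, which differs only by simplices of value exactly $v$ together with any of their faces not already present.

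Next I would locate the attaching region from the criticality of $\sigma$. The condition $|L(\sigma)|=0$ says every codimension-one face of $\sigma$ has value strictly less than $v$, hence at most $v-\epsilon$; as sublevel complexes are closed under taking faces, the entire boundary $\dot{\sigma}$ already lies in $K_{v-\epsilon}$. The condition $|U(\sigma)|=0$ says every codimension-one coface of $\sigma$ has value strictly greater than $v$; iterating the discrete Morse inequality $|L(\cdot)|\le 1$ on higher cofaces (a coface of dimension $p+k$, $k\ge 2$, would otherwise acquire at least two faces of larger value, forcing $|L|\ge 2$) promotes this to the statement that \emph{every} coface of $\sigma$ has value exceeding $v$, so none lies in $K_v$. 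Thus $\sigma$ is a maximal simplex of $K_v$, present with its full boundary below it and nothing glued on top; adjoining $\sigma$ to $K_{v-\epsilon}$ along $\dot{\sigma}$ is exactly the cell attachment $K_{v-\epsilon}\cup_{\dot{e}^{(p)}}e^{(p)}$ with $e^{(p)}=\sigma$, glued along its entire boundary sphere.

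The remaining—and main—difficulty is that other, noncritical simplices may also take the value $v$, so that $K_v$ is not literally $K_{v-\epsilon}\cup\sigma$. These extra simplices occur in noncritical pairs, each consisting of a simplex and a free face, and the plan is to argue, exactly as in the proof of Theorem~\ref{theorem:dmt-a}, that introducing such a pair is an elementary expansion preserving homotopy type; equivalently, that $K_v$ collapses onto $K_{v-\epsilon}\cup\sigma$. The delicate bookkeeping is to verify that these level-$v$ pairs are genuinely free and involve neither $\sigma$ nor its boundary, so that the collapse is legitimate and $\sigma$ is attached along all of $\dot{\sigma}$ and nowhere else; here I would use that $\sigma$, being critical, is paired with nothing and, by $|U(\sigma)|=0$, shares no codimension-one coface of value $v$, which separates it from every such pair. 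Granting this, $K_v\simeq K_{v-\epsilon}\cup_{\dot{e}^{(p)}}e^{(p)}$, and combining with $K_a\simeq K_{v-\epsilon}$ and $K_b\simeq K_v$ from the first step yields $K_b\simeq K_a\cup_{\dot{e}^{(p)}}e^{(p)}$, as desired.
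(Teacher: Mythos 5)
The paper does not prove this statement; it is quoted directly from Forman, so there is no internal proof to compare against. Your argument is a correct reconstruction of Forman's standard proof: isolating the critical value via Theorem~\ref{theorem:dmt-a}, using $|L(\sigma)|=0$ to place $\dot{\sigma}$ in the lower sublevel complex, the codimension-inductive $|L(\cdot)|\le 1$ argument to show no coface of the critical $\sigma$ enters $K_v$, and the free-pair collapse of the remaining noncritical simplices at the critical level --- and you correctly flag the free-pair bookkeeping as the one step that must be inherited from the proof of Theorem~\ref{theorem:dmt-a}.
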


\para{The associated gradient vector field.} Given a discrete Morse function $f:K\to\Rspace$ we may associate a discrete gradient vector field as follows. Since any noncritical simplex $\alpha^{(p)}$ has at most one of the sets $U(\alpha)$ and $L(\alpha)$ nonempty, there is a unique face $\nu^{(p-1)}<\alpha$ with $f(\nu)\ge f(\alpha)$ or a unique coface $\beta^{(p+1)}>\alpha$ with $f(\beta)\le f(\alpha)$. Denote by $V$ the collection of all such pairs $\{\sigma <\tau\}$. Then every simplex in $K$ is in at most one pair in $V$ and the simplices not in any pair are precisely the critical cells of the function $f$. We call $V$ the {\em gradient vector field associated to $f$}. We visualize $V$ by drawing an arrow from $\alpha$ to $\beta$ for every pair $\{\alpha<\beta\}\in V$. Theorems \ref{theorem:dmt-a} and \ref{theorem:dmt-b} may then be visualized in terms of $V$ by collapsing the pairs in $V$ using the arrows. Thus a discrete gradient (or equivalently a discrete Morse function) provides a collapsing order for the complex $K$, simplifying it to a complex $L$ with potentially fewer cells but having the same homotopy type.

The collection $V$ has the following property. By a \emph{$V$-path}, we mean a sequence $$\alpha^{(p)}_0 < \beta_0^{(p+1)}>\alpha_1^{(p)}<\beta_1^{(p+1)}>\cdots <\beta_r^{(p+1)}>\alpha_{r+1}^{(p)}$$ where each $\{\alpha_i<\beta_i\}$ is a pair in $V$. Such a path is {\em nontrivial} if $r>0$ and {\em closed} if $\alpha_{r+1}=\alpha_0$. Forman proved the following result.

\begin{theorem}[\cite{Forman1998}] If $V$ is a gradient vector field associated to a discrete Morse function $f$ on $K$, then $V$ has no nontrivial closed $V$-paths.
\end{theorem}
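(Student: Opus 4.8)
The plan is to show that the discrete Morse function $f$ strictly decreases along any $V$-path, which immediately precludes a nontrivial closed one, since a closed path would have to return to its starting function value. Concretely, for a $V$-path $\alpha_0^{(p)} < \beta_0^{(p+1)} > \alpha_1^{(p)} < \beta_1^{(p+1)} > \cdots > \alpha_{r+1}^{(p)}$ I would establish, for each index $i$, the chain of inequalities $f(\alpha_{i+1}) < f(\beta_i) \le f(\alpha_i)$.

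First I would unpack the two inequalities separately. The right-hand inequality $f(\beta_i) \le f(\alpha_i)$ is immediate from the definition of $V$: the pair $\{\alpha_i < \beta_i\}$ belongs to $V$ precisely because $\beta_i \in U(\alpha_i)$, i.e.\ $\beta_i$ is the coface of $\alpha_i$ with $f(\beta_i) \le f(\alpha_i)$. The left-hand inequality is where the defining Morse condition enters. Since $\beta_i$ is a noncritical simplex paired downward with $\alpha_i$, we have $\alpha_i \in L(\beta_i)$, and combined with condition (ii) of Definition~\ref{def:dmf} this forces $|L(\beta_i)| = 1$; thus $\alpha_i$ is the \emph{unique} codimension-one face of $\beta_i$ satisfying $f(\alpha_i) \ge f(\beta_i)$. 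The next term $\alpha_{i+1}$ is, by the definition of a $V$-path, a codimension-one face of $\beta_i$ distinct from $\alpha_i$; hence it cannot lie in $L(\beta_i)$, which yields the strict inequality $f(\alpha_{i+1}) < f(\beta_i)$.

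Chaining these inequalities across the path produces $f(\alpha_0) > f(\alpha_1) > \cdots > f(\alpha_{r+1})$, a strictly decreasing sequence of real numbers. If the path were closed, so $\alpha_{r+1} = \alpha_0$, and nontrivial, so $r > 0$ and the chain contains at least one strict step, then we would obtain $f(\alpha_0) > f(\alpha_0)$, a contradiction. Therefore $V$ admits no nontrivial closed $V$-paths.

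The one point requiring care --- and the main (modest) obstacle --- is the distinctness $\alpha_{i+1} \ne \alpha_i$ that makes the left inequality strict. I would make explicit that this is built into the notion of a $V$-path: consecutive lower simplices are required to differ. Indeed, were $\alpha_{i+1} = \alpha_i$, then since each simplex lies in at most one pair of $V$ we would be forced to have $\beta_{i+1} = \beta_i$, so the path would merely stall in place rather than advance. With this convention in hand, the strict-decrease argument applies verbatim and the theorem follows.
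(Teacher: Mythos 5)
Your proof is correct: the monotonicity argument $f(\alpha_{i+1}) < f(\beta_i) \le f(\alpha_i)$, with the strict step coming from $|L(\beta_i)| \le 1$ and the convention $\alpha_{i+1} \ne \alpha_i$, is exactly the standard argument. The paper itself offers no proof of this statement (it simply cites Forman), and what you have written is precisely Forman's original proof, so there is nothing to reconcile.
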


In fact, if one defines a discrete vector field $V$ to be a collection of pairs of simplices of $K$ such that each simplex is in at most one pair in $V$, then one can show that if $V$ has no nontrivial closed $V$-paths there is a discrete Morse function $f$ on $K$ whose associated gradient is precisely $V$. 

We note here the following result that will be needed below. A proof may be found in \cite[p.~99]{Knudson2015}.

\begin{lemma}\label{lem:refinedmf} Suppose $K'$ is the barycentric subdivision of $K$ and let $V$ be a discrete gradient vector field on $K$. Then there is a discrete gradient vector field $V'$ on $K'$ such that the critical cells of $V$ and $V'$ are in one-to-one correspondence. In fact, for a critical $p$-cell $\alpha \in K$ of $V$, one may choose a $p$-cell $\alpha' \in K'$ which is critical for $V'$. \hfill 
\end{lemma}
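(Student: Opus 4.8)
The plan is to build $V'$ by following the Morse filtration that $V$ determines on $K$ and subdividing one cell of $K$ at a time. Since $V$ has no nontrivial closed $V$-paths, it is the gradient of some discrete Morse function $f$, and $f$ organizes $K$ into a filtration $\emptyset = K_0 \subset K_1 \subset \cdots \subset K_N = K$ in which each $K_{i+1}$ is obtained from $K_i$ either by attaching a single critical cell $\alpha$ of $V$ along its entire boundary (as in Theorem~\ref{theorem:dmt-b}) or by attaching a \emph{free pair} $\{\sigma < \tau\}$ of $V$ (the noncritical simplices being added in pairs, as noted after Theorem~\ref{theorem:dmt-a}). I set $K'_i = \mathrm{sd}(K_i)$ and construct $V'$ as the union of matchings $V'_i$, each supported on the cells of $K'_{i+1}$ not already in $K'_i$, that is, on the open star in $K'$ of the relatively open cell(s) being attached. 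It then remains to define each $V'_i$ so that a critical-cell attachment contributes exactly one critical $p$-cell and a free-pair attachment contributes none, and to verify that the union is acyclic.

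For a critical $p$-cell $\alpha$, its entire boundary already lies in $K_i$, so the new cells are exactly the simplices of $\mathrm{sd}(\overline\alpha)$ having the barycenter $\hat\alpha$ as their top vertex; writing such a simplex as $G \cup \{\hat\alpha\}$, these are in dimension-shifted bijection with the faces $G$ of the subdivided boundary sphere $\mathrm{sd}(\partial\alpha)$, a triangulated $(p-1)$-sphere, together with the empty face. Because $\mathrm{sd}(\partial\alpha)$ is a shellable sphere, it carries a gradient $W$ with exactly two critical cells, a vertex $v^{(0)}$ and a facet $w^{(p-1)}$. I transfer $W$ to the new cells by coning, replacing each pair $\{a < b\}$ of $W$ with $\{a\cup\{\hat\alpha\} < b\cup\{\hat\alpha\}\}$, and then add the single apex pair $\{\hat\alpha < v\cup\{\hat\alpha\}\}$. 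Every new cell is now matched except $\alpha' := w \cup \{\hat\alpha\}$, which has dimension $p$ and becomes the unique critical cell contributed by this step. For a free pair $\{\sigma^{(p-1)} < \tau^{(p)}\}$ the new cells are the simplices of $\mathrm{sd}(\overline\tau)$ whose top vertex is $\hat\sigma$ or $\hat\tau$; matching each flag with top $\hat\sigma$ to the flag obtained by appending $\hat\tau$, and treating the remaining top-$\hat\tau$ flags by the same coning trick over the subdivided ball $\partial\tau \setminus \mathrm{relint}(\sigma)$ (which is collapsible), one matches all new cells and introduces no critical cell. Equivalently, one may invoke the classical fact that the barycentric subdivision of an elementary collapse is again a collapse, hence realizable by a gradient with no critical cells.

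Taking $V'= \bigcup_i V'_i$ yields a discrete vector field on $K'$ whose critical cells are precisely the cells $\alpha'$ arising from the critical cells $\alpha$ of $V$, with $\dim \alpha' = \dim \alpha$ by construction; this gives the required one-to-one, dimension-preserving correspondence, and in particular exhibits for each critical $p$-cell $\alpha$ a critical $p$-cell $\alpha'$ of $V'$. The one genuinely delicate point — and the step I expect to be the main obstacle — is verifying that $V'$ has no nontrivial closed $V'$-paths. Within a single block $V'_i$ one checks acyclicity directly: the coned matching inherits acyclicity from $W$ because $G \mapsto G \cup \{\hat\alpha\}$ is order-preserving, and the lone apex pair cannot lie on a closed path since following it downward leaves the block. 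Across blocks, any $V'$-path that exits the new cells of step $i$ must enter $K'_i$, which lies earlier in the filtration and whose matching is already fixed; since the blocks are linearly ordered by $f$ and each $V'_i$ pairs a cell only with a strictly larger coface inside its own block, no path can return to a higher block, so no nontrivial closed path exists. Making this ordering argument precise — in particular pinning down the interaction between the apex pairs and the coned sphere and ball gradients so that the global matching is provably acyclic — is the technical heart of the argument; the existence of a near-perfect gradient on the shellable sphere $\mathrm{sd}(\partial\alpha)$ is the other ingredient that must be supplied.
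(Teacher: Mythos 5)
Your argument is correct and is essentially the standard proof: the paper does not prove this lemma itself but cites \cite[p.~99]{Knudson2015}, and the argument there proceeds exactly as you do, by filtering $K$ via the gradient $V$, observing that the subdivision of an elementary collapse is again a collapse, and handling each critical $p$-cell $\alpha$ through the cone structure $\mathrm{sd}(\overline{\alpha})=\hat\alpha\ast\mathrm{sd}(\partial\alpha)$ together with a polar (two-critical-cell) gradient on the shellable sphere $\mathrm{sd}(\partial\alpha)$, with global acyclicity following from the fact that the block index is non-increasing along any $V'$-path. The only detail worth adding is the trivial base case $p=0$, where $\partial\alpha$ is empty and $\alpha'=\hat\alpha=\alpha$ is simply left critical.
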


\section{A Discrete Stratified Morse Theory}
\label{sec:theory}

Our goal is to describe a combinatorial version of  stratified Morse theory. 
To do so, we need to: (a) define a discrete stratified Morse function; and (b) prove the combinatorial versions of the relevant fundamental results. 
Our results are very general as they apply to any finite simplicial complex $K$ equipped with a real-valued function 
$f: K \to \Rspace$. 
Our work is motivated by relevant concepts from (classical) stratified Morse theory~\cite{GoreskyMacPherson1988}, whose details are found in Appendix \ref{sec:prelim-SMT}.

\subsection{Background}
\label{subsec:dsmf-background}

\para{Open simplices.}
To state our main results, we need to consider open simplices (as opposed to the closed simplices of Section~\ref{sec:prelim}).  
Let $\{a_0, a_1, \cdots, a_k\}$ be a geometrically independent set in $\Rspace^N$, a \emph{closed $k$-simplex} $[\sigma]$ is the set of all points $x$ of $\Rspace^N$ such that $x = \sum_{i=0}^{k}t_ia_i$, where $\sum_{i=0}^{k}t_i = 1$ and $t_i \geq 0$ for all $i$~\cite{Munkres1984}. 
An \emph{open simplex} $(\sigma)$ is the interior of the closed simplex $[\sigma]$.

A \emph{simplicial complex} $K$ is a finite set of open simplices such that: 
(a) If $(\sigma) \in K$ then all open faces of $[\sigma]$ are in $K$; 
(b) If $(\sigma_1), (\sigma_2) \in K$ and $(\sigma_1) \cap (\sigma_2) \neq \emptyset$, then $(\sigma_1) = (\sigma_2)$. 
For the remainder of this paper, we always work with a finite open simplicial complex $K$.  Unless otherwise specified, we work with open simplices $\sigma$ and define the boundary $\dot{\sigma}$ to be the boundary of its closure. 

\para{Stratified simplicial complexes.} 
In the conference version of this paper~\cite{KnudsonWang2018}, we worked with a weak notion of stratification. We have since discovered technical issues with that definition; work in progress seeks to find the most general setting in which our theory can be applied. In this paper,  we employ Definition~\ref{def:stratified-sc}, which resembles the definition of $\Pset$-decomposition for $\Pset$ being a partially ordered set (poset)  in~\cite[p.~36]{GoreskyMacPherson1988}. 
Recall a subset $S$ of a topological space $Z$ is \emph{locally closed} if it is the intersection of an open and a closed set in $Z$. For a topological space $S$, let $\overline{S}$ denote its closure, $\mathring{S}$ its interior. 

\begin{definition}
\label{def:poset-stratification}
Let $\Pset$ be a poset with order relation denoted by $<$.
A \emph{$\Pset$-decomposition} of a topological space $Z$ is a locally finite collection of disjoint locally closed subsets called \emph{pieces}, $S_i \subset Z$ (one for each $i \in \Pset$) such that $Z = \bigcup_{i \in \Pset} S_i$,  and $S_i \cap \overline{S_j} = \emptyset$ if and only if $S_i \subset \overline{S_j}$. 
\end{definition}

We now define a stratified simplicial complex as follows. 

\begin{definition}
\label{def:stratified-sc} 
A {\em stratification} of $K$, $\Scal = \{S_i\}$, is a locally finite collection of disjoint locally closed subsets called \emph{strata}, $S_i \subset K$, such that $K = \bigcup S_i$, and which satisfies the condition of the frontier: $S_i\cap \overline{S}_j\ne \emptyset$ if and only if $S_i \subset \overline{S_j}$; equivalently, the frontier $\overline{S_i} \setminus \mathring{S_i}$ of each $S_i$ is a union of strata. Each $S_i$ is a union of (open) simplices; its connected components are called \emph{strata pieces}. 
\end{definition}

The condition of the frontier in Definition~\ref{def:stratified-sc} imposes a partial order $\Pset = (\Scal, <)$ on the strata: $S_i <  S_j$ if and only if $S_i \subset \overline{S_j}$. 
\begin{lemma}
\label{lem:minimal-element} 
A minimal element in the partial order $\Pset = (\Scal, <)$ is a subcomplex of $K$.
\end{lemma}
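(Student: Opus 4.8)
The plan is to reduce the statement to a closure computation. A minimal stratum $S_i$ is, by hypothesis, a union of open simplices; such a subset of $|K|$ is a subcomplex precisely when it contains every face of each of its simplices, which happens exactly when it is closed. So it suffices to prove $\overline{S_i} = S_i$, and the whole lemma becomes a question of showing that a minimal element has empty frontier.

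First I would record two preliminary facts about closures. Since $K$ is finite and the closure of an open simplex $(\sigma)$ is the closed simplex $[\sigma]$, the closure $\overline{S_i} = \bigcup_{(\sigma)\in S_i}[\sigma]$ is itself a subcomplex, namely the smallest subcomplex containing $S_i$. Second, because $S_i$ is locally closed it is open in its own closure (this is the standard characterization of local closedness), so the relative interior $\mathring{S_i}$ appearing in Definition~\ref{def:stratified-sc} coincides with $S_i$; consequently the frontier of $S_i$ is exactly $\overline{S_i}\setminus S_i$.

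The core of the argument then combines the condition of the frontier with minimality. By Definition~\ref{def:stratified-sc}, the frontier $\overline{S_i}\setminus S_i$ is a union of strata. Let $S_j$ be any stratum contained in it. Then $S_j\subset\overline{S_i}$, and since $S_j$ lies in $\overline{S_i}\setminus S_i$ it is disjoint from $S_i$, hence $S_j\ne S_i$ (strata being nonempty). Therefore $S_j\subset\overline{S_i}$ with $S_j\ne S_i$ yields $S_j < S_i$ in the partial order $\Pset$. But $S_i$ is minimal, so no stratum lies strictly below it; the frontier can contain no stratum and must be empty. Thus $\overline{S_i}=S_i$, so $S_i$ is closed and hence a subcomplex of $K$.

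The step I would flag as needing the most care—rather than being a genuine difficulty—is the bookkeeping around the interior and frontier in the second paragraph: one must use the relative-to-closure notion of $\mathring{S_i}$ so that the frontier is precisely $\overline{S_i}\setminus S_i$, which relies on $S_i$ being open in $\overline{S_i}$. It is also worth confirming the convention that strata are nonempty, since that is exactly what guarantees a frontier stratum $S_j$ is distinct from $S_i$ and therefore strictly below it; once these definitional points are settled, the remainder is a direct unwinding of Definition~\ref{def:stratified-sc}.
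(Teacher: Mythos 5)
Your proof is correct and rests on the same two ingredients as the paper's own argument --- the condition of the frontier and the minimality of $S_i$ --- so it is essentially the same proof in point-set (closure/frontier) packaging rather than simplex-by-simplex packaging. The paper simply takes a face $\tau<\sigma$ of a simplex $\sigma\in S_i$, notes that the stratum $S_j$ containing $\tau$ meets $\overline{S_i}$, and concludes $S_j=S_i$ by minimality; your detour through local closedness and the emptiness of the frontier reaches the same conclusion and is equally valid.
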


\begin{proof} Suppose $S_i$ is such a minimal element and suppose $\sigma \in S_i$. It suffices to show that $\partial\sigma\in S_i$ as well. Suppose $\tau<\sigma$. Then $\tau\in S_j$ for some $j$ and so $\tau\in S_j\cap \overline{S_i}$. This implies that $S_j\subseteq \overline{S_i}$. But $S_i$ is minimal in the order $\Pset$ and so $S_j=S_i$; that is, $\tau \in S_i$.
\end{proof}

A stratification gives an \emph{assignment} from $K$ to the set $\Scal$, denoted $s: K \to \Scal$.
In our setting, each $S_i$ is the union of finitely many open simplices (that may not form a subcomplex of $K$); 
and each open simplex $\sigma$ in $K$ is assigned to a particular stratum $s(\sigma)$ via the mapping $s$. Since these subspaces may not be simplicial complexes, we must modify Definition~\ref{def:dmf} as follows.

\begin{definition}
\label{def:dmf-on-strata}
Suppose $S_i$ is a stratum in $\Scal$. A function $f:S_i\to \Rspace$ is a {\em discrete Morse function} if for every $\alpha^{(p)}\in S_i$, (i) $|U(\alpha)|\le 1$, (ii) $|L(\alpha)|\le 1$, and (iii) if one of the sets $U(\alpha)$ or $L(\alpha)$ is nonempty then the other must be empty.
\end{definition}

Condition (iii) above is not necessary for functions defined on simplicial complexes, but the proof of that relies on the fact that all faces of a simplex are in the complex as well. This need not be true for the various strata and so we impose the condition here.

\para{Stratum-preserving homotopies.}
If $\Xspace$ and $\Yspace$ are two stratified spaces, we call a map $f: \Xspace \to \Yspace$ \emph{stratum-preserving} if the
image of each component of a stratum of $\Xspace$ lies in a stratum of $\Yspace$~\cite{Friedman2003}.
Such a map $f: \Xspace \to \Yspace$ is a \emph{stratum-preserving homotopy equivalence} if there exists a stratum-preserving map $g: \Yspace \to \Xspace$ such that $g\circ f$ and $f \circ g$ are stratum-preserving homotopic to the identity~\cite{Friedman2003}.

\subsection{Discrete Stratified Morse Function}
\label{subsec:dsmf-primer}

\para{Discrete stratified Morse function.}
Let $K$ be a simplicial complex equipped with a stratification $s$ and a function $f: K \to \Rspace$. 
We define 
\begin{align}
U_s(\alpha) & = \{\beta^{(p+1)} > \alpha \mid s(\beta)=s(\alpha)\;\text{and}\;f(\beta) \leq f(\alpha) \}, \nonumber\\
L_s(\alpha) & = \{\gamma^{(p-1)} < \alpha \mid s(\gamma)=s(\alpha)\;\text{and}\; f(\gamma) \geq f(\alpha) \}. \nonumber
\end{align}

\begin{definition}
\label{def:dsmf}
Given a simplicial complex $K$ equipped with a stratification $s: K \to \Scal$,
a function $f: K \to \Rspace$ (equipped with $s$) is a \emph{discrete stratified Morse function} if for every $\alpha^{(p)} \in K$, 
(i) $|U_s(\alpha)| \leq 1$, 
(ii) $|L_s(\alpha)| \leq 1$, and
(iii) if one of these sets is nonempty then the other must be empty.
\end{definition}
In other words, a discrete stratified Morse function is a pair $(f,s)$ where $f:K\to\Rspace$ is a discrete Morse function when restricted to each stratum $S_j\in\Scal$ (in the sense of Definition \ref{def:dmf-on-strata}). We omit the symbol $s$ whenever it is clear from the context.

\begin{definition}
A simplex $\alpha^{(p)}$ is {\em globally critical} if $|U(\alpha)|=|L(\alpha)|=0$.
A simplex $\alpha^{(p)}$ is \emph{locally critical} if it is not globally critical and if  $|U_s(\alpha)| = |L_s(\alpha)|= 0$. A {\em critical value} of $f$ is its value at a critical simplex. 
\end{definition}

\begin{definition}
\label{def:dsmf-noncritical}
A simplex $\alpha^{(p)}$ is {\em globally noncritical} if $|U(\alpha)| + |L(\alpha)| = 1$. A simplex $\alpha^{(p)}$ is \emph{locally noncritical} if it is not globally noncritical and
exactly one of the following two conditions holds: (i) $|U_s(\alpha)| = 1$ and $|L_s(\alpha)| = 0$; or (ii) $|L_s(\alpha)| = 1$ and $|U_s(\alpha)| = 0$.
\end{definition}

The two conditions in Definition~\ref{def:dsmf-noncritical} mean that, within the same stratum as $s(\alpha)$: (i) there is a $\beta^{(p+1)} > \alpha$ with $f(\beta) \leq f(\alpha)$ or (ii) there is a $\gamma^{(p-1)} < \alpha$ with $f(\gamma) \geq f(\alpha)$; conditions (i) and (ii) cannot both be true.

A classical discrete Morse function $f:K\to\Rspace$ is a discrete stratified Morse function with the trivial stratification ${\mathcal S} = \{K\}$. We will present several examples in Section~\ref{sec:examples} illustrating that the class of discrete stratified Morse functions is much larger.

\para{Violators.}
The following definition is central to our algorithm in constructing a discrete stratified Morse function from any real-valued function defined on a simplicial complex. 

\begin{definition}
\label{definition:violator}
Suppose $K$ is a simplicial complex equipped with a real-valued function $f: K \to \Rspace$.
A simplex $\alpha^{(p)}$ is a \emph{violator} of the conditions associated with a discrete Morse function if one of these conditions holds: (i) $|U(\alpha)| \geq 2$; (ii) $|L(\alpha)| \geq 2$; (iii) $|U(\alpha)| = 1$ and $|L(\alpha)| = 1$. 
These are referred to as type I, II and III violators; the sets containing such violators are not necessarily mutually exclusive. 
\end{definition}

Here is a useful fact about violators that we shall need later.

\begin{lemma}\label{lem:violator-boundary}
Suppose $(f,s):K\to\Rspace$ is a discrete stratified Morse function. If $\alpha$ is a violator for $f$, then either $\alpha$ is locally critical or $\alpha$ is a boundary simplex for the stratification; that is, either some face $\nu$ of $\alpha$ is in the frontier of $s(\alpha)$ or $\alpha$ is in the frontier of the stratum $s(\tau)$ of a coface $\tau$.
\end{lemma}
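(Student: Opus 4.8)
The plan is to exploit the tension between $\alpha$ being a violator, which is a stratification-blind condition on $f$ alone, and $(f,s)$ being a genuine discrete stratified Morse function, which constrains only the same-stratum cofaces and faces. The numerical input is easy to record first. Inspecting the three types in Definition~\ref{definition:violator}, each forces $|U(\alpha)|+|L(\alpha)|\ge 2$: types I and II give $|U(\alpha)|\ge 2$ or $|L(\alpha)|\ge 2$ outright, and type III gives $|U(\alpha)|=|L(\alpha)|=1$. In particular $\alpha$ is not globally critical. On the other side, since $(f,s)$ is a discrete stratified Morse function, Definition~\ref{def:dsmf} yields $|U_s(\alpha)|+|L_s(\alpha)|\le 1$. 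If this sum is $0$ then $\alpha$ is locally critical and we land in the first alternative of the lemma, so we may assume the sum equals $1$.

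Next I would compare the stratified and unstratified sets. Since $U_s(\alpha)\subseteq U(\alpha)$ and $L_s(\alpha)\subseteq L(\alpha)$ — the stratified versions only add the constraint $s(\cdot)=s(\alpha)$ — the estimate $\bigl(|U(\alpha)|-|U_s(\alpha)|\bigr)+\bigl(|L(\alpha)|-|L_s(\alpha)|\bigr)\ge 2-1=1$ produces a ``wrong-stratum'' incidence: there is either a coface $\beta^{(p+1)}>\alpha$ with $f(\beta)\le f(\alpha)$ and $s(\beta)\ne s(\alpha)$, or a face $\gamma^{(p-1)}<\alpha$ with $f(\gamma)\ge f(\alpha)$ and $s(\gamma)\ne s(\alpha)$. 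It then remains only to convert such an incidence into a frontier statement.

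The translation is the core of the argument and reuses the mechanism from the proof of Lemma~\ref{lem:minimal-element}. In the coface case, $\alpha<\beta$ gives $\alpha\subseteq\overline{\beta}\subseteq\overline{s(\beta)}$; as $\alpha\subseteq s(\alpha)$, the nonempty set $\alpha$ is contained in $s(\alpha)\cap\overline{s(\beta)}$, so the condition of the frontier forces $s(\alpha)\subseteq\overline{s(\beta)}$, i.e. $s(\alpha)<s(\beta)$ in $\Pset$. Since the strata are disjoint and $\mathring{s(\beta)}\subseteq s(\beta)$, we obtain $s(\alpha)\subseteq\overline{s(\beta)}\setminus\mathring{s(\beta)}$, so $\alpha$ lies in the frontier of the stratum $s(\beta)$ of the coface $\beta$. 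The face case is symmetric: $\gamma<\alpha$ gives $\gamma\subseteq\overline{s(\alpha)}$, the frontier condition yields $s(\gamma)<s(\alpha)$, and hence the face $\gamma$ of $\alpha$ lies in the frontier of $s(\alpha)$. Either outcome exhibits $\alpha$ as a boundary simplex, completing the proof.

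The step I expect to demand the most care is this last translation, because of the open-simplex conventions in Section~\ref{subsec:dsmf-background}: I must read ``lies in the frontier'' with the interior $\mathring{\,\cdot\,}$ taken in $|K|$, and it is precisely the containment $\mathring{s(\beta)}\subseteq s(\beta)$ together with disjointness of strata that licenses passing from $\overline{s(\beta)}\setminus s(\beta)$ to the genuine frontier $\overline{s(\beta)}\setminus\mathring{s(\beta)}$. Everything else is bookkeeping across the violator types, all of which funnel into the single counting inequality above.
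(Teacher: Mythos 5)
Your proof is correct and follows essentially the same route as the paper's: rule out global criticality from the violator conditions, use $|U_s(\alpha)|+|L_s(\alpha)|\le 1$ versus $|U(\alpha)|+|L(\alpha)|\ge 2$ to either land in the locally critical case or extract an out-of-stratum face/coface incidence, and convert that incidence into a frontier statement. The only difference is presentational: you make explicit the counting inequality and the appeal to the condition of the frontier (via the mechanism of Lemma~\ref{lem:minimal-element}), both of which the paper's proof leaves implicit.
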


\begin{proof}
By definition, $f|_{s(\alpha)}$ is a discrete Morse function on $s(\alpha)$. It is possible that $\alpha$ is critical for this restriction and since $\alpha$ is a violator it cannot be globally critical. Otherwise, there is either a face $\nu<\alpha$ with $f(\nu)\ge f(\alpha)$ or a coface $\tau>\alpha$ with $f(\alpha)\ge f(\tau)$, and this paired simplex ($\nu$ or $\tau$) also lies in $s(\alpha)$. But $\alpha$ is a global violator. So in either case, there is another face $\nu'<\alpha$ or coface $\tau'>\alpha$ causing the violation. But $\nu',\tau'\not\in s(\alpha)$ and hence either $\nu'$ belongs to the frontier of $s(\alpha)$ or $\alpha$ belongs to the frontier of $s(\tau')$ (and hence $s(\alpha)\subset \overline{s(\tau')}$).
\end{proof}

\subsection{Back and Forth: Discrete Morse and Discrete Stratified Morse Functions}
\label{subsec:backandforth}
An honest discrete Morse function $f:K\to\Rspace$ is a discrete stratified Morse function for the trivial stratification $\Scal=\{K\}$. More is true however.

\begin{lemma}
\label{lem:dmf-restrict} 
Suppose $f:K\to\Rspace$ is a discrete Morse function and let $\Scal =\{S_i\}$ be a stratification of $K$, with $s:K\to\Scal$ the assignment map. Then $(f,s)$ is a discrete stratified Morse function.
\end{lemma}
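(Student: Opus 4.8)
The plan is to reduce the three conditions of Definition~\ref{def:dsmf} to the corresponding properties of the global discrete Morse function $f$ via a pair of elementary set inclusions. First I would observe that the stratified sets are always subsets of their unstratified counterparts: for any $\alpha^{(p)}\in K$, the set $U_s(\alpha)$ is obtained from $U(\alpha)$ by imposing the additional constraint $s(\beta)=s(\alpha)$, so $U_s(\alpha)\subseteq U(\alpha)$; symmetrically, $L_s(\alpha)\subseteq L(\alpha)$. This single observation drives the whole argument.

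Granting the inclusions, conditions (i) and (ii) of Definition~\ref{def:dsmf} are immediate. Since $f$ is a discrete Morse function on $K$, we have $|U(\alpha)|\le 1$ and $|L(\alpha)|\le 1$ for every $\alpha$; passing to a subset can only decrease cardinality, so $|U_s(\alpha)|\le 1$ and $|L_s(\alpha)|\le 1$.

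The only point requiring care is condition (iii), which is precisely the property that need not hold for functions on arbitrary subsets (cf.\ the remark following Definition~\ref{def:dmf-on-strata}), and which must therefore be deduced rather than assumed. Here I would invoke Forman's exclusivity result quoted above (\cite{Forman1998}, Lemma~2.5): for the discrete Morse function $f$ on $K$, at least one of $U(\alpha)$, $L(\alpha)$ is empty. If $U_s(\alpha)\ne\emptyset$ then $U(\alpha)\ne\emptyset$, which forces $L(\alpha)=\emptyset$ and hence $L_s(\alpha)=\emptyset$; the symmetric argument handles the case $L_s(\alpha)\ne\emptyset$. Thus at most one of $U_s(\alpha)$, $L_s(\alpha)$ is nonempty, establishing (iii). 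I expect no genuine obstacle here; the entire content lies in noticing that the global exclusivity enjoyed by $f$ is inherited by the stratified sets through the inclusions. This is exactly why the hypothesis that $f$ is a \emph{global} discrete Morse function, rather than merely Morse on each stratum in the sense of Definition~\ref{def:dmf-on-strata}, is what makes condition (iii) come for free.
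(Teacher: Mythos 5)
Your argument is correct and is essentially the paper's own proof: both rest on the inclusions $U_s(\alpha)\subseteq U(\alpha)$ and $L_s(\alpha)\subseteq L(\alpha)$, from which (i) and (ii) follow by cardinality and (iii) follows from Forman's exclusivity lemma for the global function $f$. Your version simply spells out the exclusivity step in more detail than the paper does.
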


\begin{proof} Since $f$ is a discrete Morse function, for every simplex $\alpha$ the sets $U(\alpha)$ and $L(\alpha)$ satisfy the required conditions. In particular, if one of them is nonempty then the other is empty. Since $U_s(\alpha)\subseteq U(\alpha)$ and $L_s(\alpha)\subseteq L(\alpha)$, the conditions of Definition \ref{def:dsmf} hold.
\end{proof}

Lemma~\ref{lem:dmf-restrict} is in contrast with the smooth case. Indeed, a Morse function on a manifold $M$ may not be a stratified Morse function on an arbitrary stratification of $M$. For example, for a torus equipped with the standard height function $h$, choose a regular value $c$ such that $h^{-1}(c)$ consists of two disjoint circles $C_1$ and $C_2$. Take the stratification of the torus consisting of a point on $C_1$, the circle $C_1$, and the complement of $C_1$. Then $h$ is {\em not} a stratified Morse function with respect to this stratification since $h|_{C_1}$ is constant. However, a small perturbation of $h$ is a stratified Morse function.

Lemma \ref{lem:dmf-restrict} is {\em not} true for discrete gradient vector fields, however. Suppose $V$ is a discrete gradient on $K$ associated to some function $f:K\to\Rspace$ and suppose $\Scal = \{S_i\}$ is a stratification. It is entirely possible that a regular simplex $\alpha$ is paired with a simplex $\beta$ with $s(\alpha)\ne s(\beta)$. That is, the vector field $V$ may be orthogonal to the strata. We do have the following result.

\begin{theorem}
\label{thm:dvf-union} 
Suppose $(f,s):K\to\Rspace$ is a discrete stratified Morse function with stratification $\Scal=\{S_i\}$. For each $i$, denote by $V_i$ the discrete gradient vector field associated to $f|_{S_i}$, and let $V = \bigcup_i V_i$. Then $V$ is a discrete gradient vector field on $K$.
\end{theorem}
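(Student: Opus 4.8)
The plan is to verify directly the two defining properties of a discrete gradient vector field: that $V$ is a discrete vector field (every simplex lies in at most one pair), and that $V$ contains no nontrivial closed $V$-path. The first property is essentially immediate, whereas the second requires tracking how a $V$-path interacts with the stratification, and this is where the condition of the frontier does the real work.

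First I would dispose of the discrete-vector-field property. Every pair in $V$ belongs to some $V_i$, and both simplices of such a pair lie in the single stratum $S_i$. Since the strata are pairwise disjoint, a given simplex $\alpha$ lies in exactly one stratum $S_{s(\alpha)}$, and hence can only occur in a pair coming from $V_{s(\alpha)}$. As $V_{s(\alpha)}$ is itself a discrete vector field on $S_{s(\alpha)}$, the simplex $\alpha$ appears in at most one pair of $V_{s(\alpha)}$, and therefore in at most one pair of $V$.

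For the absence of nontrivial closed $V$-paths, the key step is a monotonicity lemma: along any $V$-path
\[
\alpha_0 < \beta_0 > \alpha_1 < \beta_1 > \cdots < \beta_r > \alpha_{r+1},
\]
the sequence of strata $s(\alpha_0), s(\alpha_1), \dots, s(\alpha_{r+1})$ is nonincreasing in the poset $\Pset$. Each pair $\{\alpha_i < \beta_i\}$ lies in a single $V_{s(\alpha_i)}$, so $s(\beta_i) = s(\alpha_i)$; and for the descending step $\beta_i > \alpha_{i+1}$, the simplex $\alpha_{i+1}$ is a face of $\beta_i$, hence $\alpha_{i+1}\in \overline{S_{s(\beta_i)}}$. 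Since also $\alpha_{i+1}\in S_{s(\alpha_{i+1})}$, the condition of the frontier (Definition~\ref{def:stratified-sc}) forces $S_{s(\alpha_{i+1})}\subseteq \overline{S_{s(\beta_i)}}$, i.e.\ $s(\alpha_{i+1}) \le s(\beta_i) = s(\alpha_i)$ in $\Pset$. Now if $V$ had a nontrivial closed $V$-path, so that $\alpha_{r+1} = \alpha_0$ with $r>0$, then $s(\alpha_0) \ge s(\alpha_1)\ge\cdots\ge s(\alpha_{r+1}) = s(\alpha_0)$, and antisymmetry of $\Pset$ forces all these strata to coincide; consequently every $\alpha_i$ and every $\beta_i$ lies in a single stratum $S_k$. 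The path would then be a nontrivial closed $V_k$-path, contradicting the fact that $V_k$, being the gradient of the discrete Morse function $f|_{S_k}$, has no nontrivial closed paths.

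The main obstacle I anticipate is the final step: I must be sure that Forman's theorem ruling out nontrivial closed gradient paths still applies to $V_k$, even though the stratum $S_k$ need not be a subcomplex and $f|_{S_k}$ is a discrete Morse function only in the sense of Definition~\ref{def:dmf-on-strata}. To justify this I would reprove the underlying monotonicity of $f$-values \emph{inside} $S_k$: for a pair $\{\alpha_i < \beta_i\}$ one has $f(\beta_i)\le f(\alpha_i)$, and for the descending step, since $\beta_i$ satisfies $|L(\beta_i)|\le 1$ within $S_k$ with unique lower face $\alpha_i$, any other face $\alpha_{i+1}\neq\alpha_i$ of $\beta_i$ lying in $S_k$ satisfies $f(\alpha_{i+1}) < f(\beta_i)$. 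Chaining these inequalities gives $f(\alpha_0) > f(\alpha_{r+1})$ for a nontrivial path, which is incompatible with $\alpha_{r+1} = \alpha_0$. Because this argument invokes only conditions (i)--(iii) of Definition~\ref{def:dmf-on-strata} and never uses that $S_k$ is a complex, it remains valid on the non-simplicial strata, completing the reduction.
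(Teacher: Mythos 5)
Your proof is correct and rests on the same mechanism as the paper's: the condition of the frontier forces any closed $V$-path to live in a single stratum, where it contradicts the absence of closed gradient paths for $f$ restricted to that stratum. The execution differs in organization and is actually more complete in two respects. First, where the paper treats the case of a path meeting two strata pieces and asserts that ``the general case follows inductively,'' you prove a global monotonicity statement --- $s(\alpha_0)\ge s(\alpha_1)\ge\cdots\ge s(\alpha_{r+1})$ in $\Pset$ --- and close the loop with antisymmetry; this handles an arbitrary number of strata in one stroke and avoids the paper's slightly delicate argument that $\overline{S_1}=\overline{S_2}$ plus disjointness forces $S_1=S_2$. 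Second, you explicitly address a point the paper passes over silently: Forman's theorem that a gradient field has no nontrivial closed paths is proved for simplicial complexes, whereas a stratum $S_k$ need not be a subcomplex; your re-derivation of the strict decrease of $f$-values along a $V_k$-path using only conditions (i)--(iii) of Definition~\ref{def:dmf-on-strata} fills that gap. You also verify that $V$ is a discrete vector field (each simplex in at most one pair), which the paper takes for granted. In short, same idea, but your write-up is the more airtight of the two.
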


\begin{proof} It suffices to show that there are no closed $V$-paths. Suppose $$\gamma := \{\alpha_0<\beta_0>\alpha_1<\beta_1>\cdots >\alpha_t<\beta_t>\alpha_0\}$$ is a closed $V$-path. Then $\gamma$ is not contained in a single stratum piece. Say it lies in two strata pieces: $\{\alpha_0<\beta_0>\alpha_1<\beta_1>\cdots <\beta_k\} \subset S_1$, $\{\alpha_{k+1}<\beta_{k+1}>\cdots <\beta_u\} \subset S_2$, and $\{\alpha_{u+1}<\beta_{u+1}>\cdots <\beta_t>\alpha_0\} \subset S_1$. Note that $\gamma$ must decompose this way since simplices can be paired only within the same stratum piece. Since $\beta_k>\alpha_{k+1}$, we have $\alpha_{k+1}\in \overline{S_1}$ and so by the frontier condition we have $S_2\subset \overline{S_1}$. Also, since $\beta_u>\alpha_{u+1}$, we have $\alpha_{u+1}\in \overline{S_2}$ and again the frontier condition implies $S_1\subset \overline{S_2}$. It follows that $\overline{S_1}=\overline{S_2}$ and since strata pieces are disjoint we conclude that $S_1=S_2$; that is, $\gamma$ lies in a single stratum piece, a contradiction. The general case of $\gamma$ passing through multiple strata pieces follows inductively.
\end{proof}

Of course, the vector field $V$ is {\em not} associated to the function $f$; that is, it is not the gradient vector field of $f$ (more on this later). 
The gradient field $V$ produced in Theorem~\ref{thm:dvf-union} respects the strata in the sense that each pair $\{\alpha<\beta\}$ in $V$ satisfies $s(\alpha)=s(\beta)$. The following result is a useful technical tool for us in the sequel.

\begin{theorem}
\label{thm:dmf-strata}
Suppose $(f,s):K\to\Rspace$ is a discrete stratified Morse function with stratification $\Scal=\{S_i\}$, and let $V$ be the induced discrete gradient vector field on $K$. If necessary, extend the partial order on $\Scal$ to a linear order and write the strata $S_1<\cdots <S_n$. Then there is a discrete Morse function $g:K\to\Rspace$ satisfying the following properties.
\begin{enumerate} 
    \item The gradient of $g$ is $V$.
    \item There are real numbers $a_1< a_2 < \cdots <a_n$ such that $g^{-1}(-\infty,a_i] = \displaystyle \bigcup_{j\le i} S_j$ for $1 \leq i \leq n$.
\end{enumerate}
\end{theorem}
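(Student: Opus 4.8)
The plan is to construct $g$ by rescaling $f$ on each stratum so that the strata occupy disjoint, increasing bands of function values arranged according to the chosen linear order $S_1<\cdots<S_n$. First I would record the structural fact that makes this possible: for each $i$ the union $K_i:=\bigcup_{j\le i}S_j$ is a \emph{subcomplex} of $K$, so that we obtain a filtration $\emptyset\subset K_1\subset\cdots\subset K_n=K$. Indeed, if $\sigma\in S_k$ with $k\le i$ and $\tau<\sigma$ is a face, then $\tau\in S_m$ for some $m$ and $\tau\in S_m\cap\overline{S_k}$; the frontier condition of Definition~\ref{def:stratified-sc} gives $S_m\subseteq\overline{S_k}$, i.e. $S_m\le S_k$ in the partial order, and since the linear order extends the partial order we get $m\le k\le i$, whence $\tau\in K_i$. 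Because every pair of $V$ lies within a single stratum, $V$ restricts to a gradient vector field on each $K_i$.

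Next I would define $g$ explicitly. For each $j$ let $m_j=\min_{\alpha\in S_j}f(\alpha)$ and $M_j=\max_{\alpha\in S_j}f(\alpha)$, finite since $K$ is finite, and set $c_1=0$ and $c_{j+1}=c_j+M_j-m_{j+1}+1$. Define $g(\alpha)=f(\alpha)+c_j$ for $\alpha\in S_j$. By construction the value bands $[m_j+c_j,\,M_j+c_j]$ are pairwise disjoint and strictly increasing in $j$, separated by gaps of size at least $1$.

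The crux will be to check that this $g$ is a genuine discrete Morse function on all of $K$ whose gradient is $V$, and this is exactly where the frontier condition does the essential work. Fix $\alpha\in S_j$; I would show that $U_g(\alpha)=U_s(\alpha)$ and $L_g(\alpha)=L_s(\alpha)$. For a coface $\beta>\alpha$ with $\beta\in S_{j'}$ and $j'\ne j$, the simplex $\alpha$ is a face of $\beta$, so $\alpha\in\overline{S_{j'}}$, and the frontier condition forces $S_j\subset\overline{S_{j'}}$, hence $j<j'$ and $g(\beta)>g(\alpha)$, so $\beta\notin U_g(\alpha)$; symmetrically a face $\gamma<\alpha$ in a different stratum $S_{j'}$ satisfies $\gamma\in\overline{S_j}$, forcing $j'<j$ and $g(\gamma)<g(\alpha)$, so $\gamma\notin L_g(\alpha)$. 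Thus only within-stratum faces and cofaces contribute, and on those $g$ differs from $f$ by the single constant $c_j$, leaving all comparisons unchanged, so indeed $U_g(\alpha)=U_s(\alpha)$ and $L_g(\alpha)=L_s(\alpha)$. Since $(f,s)$ is a discrete stratified Morse function these sets have size at most one and at most one is nonempty, so $g$ satisfies Definition~\ref{def:dmf} (exclusivity being automatic on the complex $K$ by Forman's Lemma 2.5~\cite{Forman1998}). Reading off the induced pairing, the gradient of $g$ agrees stratum-by-stratum with the gradient of $f|_{S_j}$, so it equals $\bigcup_i V_i=V$, giving property (1).

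Finally, for property (2) I would take $a_i=M_i+c_i$, the top of the $i$-th band. These are strictly increasing because $a_{i+1}\ge m_{i+1}+c_{i+1}=a_i+1$, and by the disjoint-band construction $g(\alpha)\le a_i$ holds precisely when $\alpha$ lies in some $S_j$ with $j\le i$, so $g^{-1}(-\infty,a_i]=\bigcup_{j\le i}S_j=K_i$, as required. I expect the only genuinely delicate point to be the interface analysis in the crux paragraph, namely extracting from the frontier condition the correct inequalities $j<j'$ for cofaces and $j'<j$ for faces across strata; once those directions are pinned down, the remaining verifications are routine consequences of the value-shifting.
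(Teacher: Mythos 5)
Your proof is correct, but it takes a genuinely different route from the paper's. The paper constructs $g$ inductively via the modified Hasse diagram: it uses Lemma~\ref{lem:minimal-element} to get a discrete Morse function on the minimal stratum $S_1$, then at each stage collapses the subgraph corresponding to $S_1\cup\cdots\cup S_i$ to a sink and chooses a compatible function on $S_{i+1}$ with values above $a_i$, relying on the general fact that acyclic digraphs admit functions decreasing along directed paths. You instead give an explicit formula, $g = f + c_j$ on $S_j$, with the shifts $c_j$ chosen so the strata occupy disjoint increasing bands, and then verify directly that $U_g(\alpha)=U_s(\alpha)$ and $L_g(\alpha)=L_s(\alpha)$ by showing the frontier condition forces every cross-stratum coface to lie in a strictly higher band and every cross-stratum face in a strictly lower one. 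Your opening observation that each $K_i=\bigcup_{j\le i}S_j$ is a subcomplex is the natural generalization of Lemma~\ref{lem:minimal-element} and is essentially the same ``sink'' fact the paper uses implicitly in its inductive step. What your approach buys: it is more self-contained and concrete (an explicit $g$ rather than an existence argument), and it does not presuppose Theorem~\ref{thm:dvf-union} --- indeed, since you verify directly that $g$ is a discrete Morse function whose gradient is $V$, you obtain the acyclicity of $V$ as a byproduct rather than as an input. What the paper's approach buys is uniformity with the standard Hasse-diagram machinery for realizing arbitrary gradient fields by functions. Both hinge on the same essential use of the frontier condition to orient cross-stratum incidences; your interface analysis makes that dependence explicit and is correct as written.
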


\begin{proof}
There are infinitely many discrete Morse functions compatible with $V$; we need only construct one satisfying the second property. The standard way to construct discrete Morse functions with gradient $V$ is to consider the Hasse diagram of $K$, modified by reversing arrows from $\beta$ to $\alpha$ whenever $\{\alpha<\beta\}$ is a pair in $V$. This is an acyclic directed graph and a standard result in graph theory is that such graphs support functions on their vertices whose function values decrease along every directed path. Such a function yields a discrete Morse function on $K$ with gradient $V$. We know that the minimal element $S_1$ is a subcomplex of $K$ (Lemma \ref{lem:minimal-element}); choose a discrete Morse function $g_1$ on $S_1$ compatible with $V_1$ and set $a_0=\max_{\sigma\in S_1}g_1(\sigma)$. Assume inductively that we have constructed $g_i$, a discrete Morse function on $S_1\cup\cdots \cup S_i$ satisfying the second property. We extend it to $S_{i+1}$ as follows. Collapse the subgraph of the Hasse diagram corresponding to $S_1\cup\cdots \cup S_i$ to a point. This is then a sink in this directed graph. Since $f|_{S_{i+1}}$ is a discrete Morse function, we can find a function $g_{i+1}$ whose gradient agrees with $V_{i+1}$ on $S_{i+1}$ and which satisfies $g_{i+1}(\sigma)> a_i$ for all $\sigma\in S_{i+1}$. Set $a_{i+1}= \max_{\sigma\in S_{i+1}} g_{i+1}(\sigma)$. This completes the inductive step.
\end{proof}

We say that a function $g$ satisfying the conclusions of Theorem \ref{thm:dmf-strata} {\em separates} the strata. 

\subsection{Homotopy Type}
\label{subsec:homotopytype}

In both smooth and discrete Morse theory, we have theorems about how the topology of the sublevel sets (or sublevel complexes, in the discrete case) vary as we move through increasing function values. The same is true in stratified Morse theory, where a neighborhood of a critical point consists of {\em Morse data}, which is a product of tangential and normal data (see Appendix \ref{sec:prelim-SMT}). Our definition of a discrete stratified Morse function is too loose to allow for such theorems as it stands. The issue is that we have no control on how the function values change as we cross from one stratum to another, as opposed to the smooth case where the function is continuous and so function values cannot vary too much in a neighborhood of a critical point.

We can still say something, however, in the form of Theorems~\ref{thm:weak-dsmt-a} and~\ref{thm:weak-dsmt-b}. 

\begin{theorem}[Weak DSMT Theorem A]
\label{thm:weak-dsmt-a}
Given a discrete stratified Morse function $(f,s)$, performing a collapse of either a global noncritical pair or a local noncritical pair is a stratum-preserving homotopy equivalence.
\end{theorem}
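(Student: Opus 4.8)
The plan is to reduce the statement to the classical discrete Morse collapsing results by working stratum by stratum. The key observation is that a noncritical pair, whether global or local, is by definition a pair $\{\alpha < \beta\}$ in which $\alpha$ and $\beta$ lie in the \emph{same} stratum $S_i$, with $f(\beta)\le f(\alpha)$ (or the dual face condition); this is forced because the sets $U_s(\alpha)$ and $L_s(\alpha)$ only count cofaces and faces within $s(\alpha)$. First I would recall that $f|_{S_i}$ is a discrete Morse function on $S_i$ in the sense of Definition~\ref{def:dmf-on-strata}, so the pair $\{\alpha<\beta\}$ belongs to the gradient vector field $V_i$ of $f|_{S_i}$. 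Collapsing this pair is the standard elementary operation of discrete Morse theory: one removes $\alpha$ and $\beta$, and the effect on the underlying space (or complex, suitably interpreted for the strata that need not be subcomplexes) is a deformation retraction, exactly as in the proof of Theorem~\ref{theorem:dmt-a}.

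The main content, then, is not the homotopy equivalence itself (which is classical) but the adjective \emph{stratum-preserving}. Here I would argue that because both simplices of the collapsed pair lie in the single stratum $S_i$, the collapse only alters $S_i$ and leaves every other stratum $S_j$ ($j\ne i$) pointwise fixed. The deformation retraction can be chosen to be supported in (a neighborhood of) the pair $\{\alpha,\beta\}$ within $S_i$, so each stratum is carried into itself throughout the homotopy. Consequently the retraction and its homotopy are stratum-preserving in the sense defined in Section~\ref{subsec:dsmf-background}: the image of each component of a stratum stays inside a stratum. This establishes that the collapse is a stratum-preserving homotopy equivalence.

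The step I expect to require the most care is ensuring that the frontier condition of Definition~\ref{def:stratified-sc} is not disturbed by the collapse, so that what remains is genuinely a stratified simplicial complex of the same combinatorial type on the unchanged strata. Concretely, I would verify that removing the interior pair $\{\alpha,\beta\}$ from $S_i$ does not change any closure relation $S_j\subset\overline{S_k}$ involving the other strata; the delicate point is whether $\alpha$ or $\beta$ could appear in the frontier $\overline{S_j}\setminus\mathring{S_j}$ of a higher stratum. By Lemma~\ref{lem:violator-boundary} and the definition of a noncritical pair, the pairing is purely internal to $S_i$, so neither $\alpha$ nor $\beta$ is a boundary simplex forced by the stratification, and the frontier relations among the remaining strata are preserved. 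Finally I would note that the global case is literally the special case of the trivial stratification $\Scal=\{K\}$ handled by Theorem~\ref{theorem:dmt-a}, while the local case is the same argument carried out inside the relevant stratum $S_i$, so the two cases are unified by this stratumwise reduction.
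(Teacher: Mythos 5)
Your proposal takes a genuinely different route from the paper: you try to perform the collapse as a purely local operation inside the single stratum $S_i$ containing the pair, whereas the paper first invokes Theorem~\ref{thm:dmf-strata} to build an auxiliary honest discrete Morse function $g$ on all of $K$ whose gradient is $V=\bigcup_i V_i$ and which \emph{separates the strata}, so that the pair becomes an ordinary regular pair of $g$ and Theorem~\ref{theorem:dmt-a} applies verbatim to the simplicial complex $K$; stratum-preservation then follows because every pair of $V$ lies in a single stratum. The localization you propose is where the gap sits. An elementary collapse of $\{\alpha<\beta\}$ is a deformation retraction of $|K|$ only if $\alpha$ is a \emph{free} face of $\beta$ in the complex in which the collapse is performed. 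A noncritical pair of $(f,s)$ is internal to $S_i$, but $\alpha$ may perfectly well have cofaces lying in a higher stratum $S_j$ with $S_i\subset\overline{S_j}$ (the strata are generally not subcomplexes, and lower strata sit in the frontiers of higher ones), in which case $\alpha$ is not free in $K$ and your ``deformation retraction supported in a neighborhood of the pair within $S_i$'' does not exist. Your attempt to dismiss this via Lemma~\ref{lem:violator-boundary} does not work: that lemma concerns violators, and nothing in the definition of a noncritical pair prevents $\alpha$ from lying in the frontier of a higher stratum.

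What the paper's auxiliary function $g$ buys is precisely the resolution of this ordering problem: since $g^{-1}(-\infty,a_i]=\bigcup_{j\le i}S_j$ is a subcomplex and the collapses of Theorem~\ref{theorem:dmt-a} are performed in decreasing order of $g$-value, the cofaces of $\alpha$ in higher strata have already been dealt with by the time the pair $\{\alpha<\beta\}$ is collapsed, so the face really is free at that stage. To repair your argument you would either need to reproduce this device (or an equivalent global ordering of the collapses that processes higher strata first), or restrict attention to pairs whose face has no cofaces outside its stratum; as written, the stratum-by-stratum reduction does not establish the homotopy equivalence.
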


\begin{proof}
We make use of the auxiliary discrete Morse function constructed in Theorem \ref{thm:dmf-strata}. Suppose $(f,s):K\to\Rspace$ is a discrete stratified Morse function with associated discrete gradient $V$. Let $g$ be a discrete Morse function with gradient $V$ which separates the strata. Then any noncritical pair, global or local, is simply a regular pair for $g$. By Theorem \ref{theorem:dmt-a} we may collapse such a pair without changing the homotopy type of the complex. Moreover, since all noncritical pairs lie within a stratum, this homotopy equivalence is stratum-preserving.
\end{proof}

Describing what happens around a critical cell is much more complicated. We discuss this further below, but for now we can say the following. A consequence of Theorems \ref{theorem:dmt-a} and \ref{theorem:dmt-b} is that if the simplicial complex $K$ has a discrete gradient vector field $V$, then $K$ has the homotopy type of a CW-complex with one cell for each critical cell of $V$ (of the same dimension). Theorem \ref{thm:dvf-union} then implies the following result.

\begin{corollary}[Weak DSMT Theorem B]
\label{thm:weak-dsmt-b} 
Suppose $(f,s):K\to\Rspace$ is a discrete stratified Morse function and denote by $V$ the discrete gradient vector field obtained as the union of the $V_i$ associated with $f|_{S_i}$. Then $K$ has the homotopy type of a CW-complex with one cell for each critical cell of $V$.
\end{corollary}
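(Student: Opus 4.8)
The plan is to reduce the statement entirely to the classical discrete Morse machinery, leaning on the fact --- already established in Theorem~\ref{thm:dvf-union} --- that $V=\bigcup_i V_i$ is a genuine discrete gradient vector field on all of $K$, and not merely a disjoint collection of gradients on the individual strata. Once $V$ is known to be an honest discrete gradient, the corollary becomes a formal consequence of the two fundamental theorems of discrete Morse theory recalled above, with no stratum-by-stratum bookkeeping required.

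First I would invoke Theorem~\ref{thm:dvf-union} to guarantee that $V$ has no nontrivial closed $V$-paths, so that $V$ is indeed the gradient of a discrete Morse function on $K$. Concretely, I would take the auxiliary discrete Morse function $g:K\to\Rspace$ produced by Theorem~\ref{thm:dmf-strata}, whose gradient is exactly $V$. The point to record here is that the simplices critical for $g$ are precisely those left unpaired by $V$, that is, the critical cells of $V$; a simplex is paired in $V$ exactly when it is paired in some $V_i$, i.e. when it is locally noncritical within its stratum.

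Next I would run the standard sublevel-complex argument for $g$. As the parameter increases through an interval containing no critical value, Theorem~\ref{theorem:dmt-a} shows that the sublevel complex collapses onto the previous one, leaving the homotopy type unchanged; as the parameter crosses a critical value realized by a critical $p$-cell, Theorem~\ref{theorem:dmt-b} shows that the new sublevel complex is obtained from the old one by attaching a single $p$-cell along its entire boundary. Assembling these steps across all critical values of $g$ exhibits $|K|$ as homotopy equivalent to a CW-complex carrying exactly one $p$-cell for each critical $p$-cell of $g$, hence one cell of the same dimension for each critical cell of $V$.

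The only genuine content lies in knowing that the union of the strata-wise gradients is a legitimate gradient field on the whole complex, which is exactly the crux supplied by Theorem~\ref{thm:dvf-union}; I do not expect any further obstacle, which is precisely why the result is stated as a corollary. Everything downstream is inherited from the single global discrete Morse function $g$, so the cell attachments need never be tracked stratum by stratum.
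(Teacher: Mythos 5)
Your proposal is correct and follows essentially the same route as the paper: the paper likewise reduces the corollary to the general fact that a discrete gradient vector field on $K$ yields, via Theorems~\ref{theorem:dmt-a} and~\ref{theorem:dmt-b}, a CW-complex with one cell per critical cell, with Theorem~\ref{thm:dvf-union} supplying the only nontrivial input that $V=\bigcup_i V_i$ is a genuine gradient on all of $K$. Your explicit use of Theorem~\ref{thm:dmf-strata} to realize $V$ as the gradient of an actual discrete Morse function $g$ just makes precise a step the paper leaves implicit.
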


\subsection{Algorithm for Constructing Discrete Stratified Morse Functions}
\label{subsec:dsmf-algorithm}
We give an algorithm to construct a discrete stratified Morse function from any real-valued function on a simplicial complex, $f: K \to \Rspace$ as follows. 
\begin{enumerate}
\item Make a single pass of all simplices in $K$, order the violators $\Vcal = \{\sigma_1,\sigma_2,\dots,\sigma_r\}$ by increasing dimension and by increasing function value within each dimension. 
\item Initialize $\Scal = \emptyset$, $i=1$. 
\item Remove $\sigma_i$ from $\Vcal$ and add $\sigma_i$ to $\Scal$.  
\item Consider $K_i=K\setminus \{\sigma_1,\dots,\sigma_i\}$: 
\begin{itemize} 
\item If the restriction of $f$ to $K_i$, $f|_{K_i}$,  is a discrete Morse function, then let $J$ denote the set of indices $k\le i$ such that $\sigma_k\in \overline{K_i}$ and add the following strata to $\Scal$ (which may contain more than two strata pieces): the frontier $\overline{K_i}\setminus (\mathring{K_i} \cup \{\sigma_j\}_{j\in J})$ and $\mathring{K_i}$. 
\item Otherwise, if $f|_{K_i}$ is not a discrete Morse function, then at least one $\sigma_j$ with $j>i$ remains a violator.
\end{itemize}
\item Remove simplices that are no longer violators from the list and repeat the steps 2-4 above until no violators are left.  
\end{enumerate}

\begin{lemma}
\label{lem:algo-strat} 
The collection $\Scal$ satisfies the condition of the frontier and therefore meets the conditions of Definition~\ref{def:stratified-sc}.
\end{lemma}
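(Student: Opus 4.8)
The plan is to verify the condition of the frontier directly for every ordered pair of strata produced by the algorithm. These strata are of three kinds: the singleton violator strata $\{\sigma_1\},\dots,\{\sigma_i\}$, the open stratum $\mathring{K_i}$, and the frontier stratum $\overline{K_i}\setminus(\mathring{K_i}\cup\{\sigma_j\}_{j\in J})$ together with its connected components. A convenient first reduction is to note that $\overline{K_i}=K_i\cup\{\sigma_j\}_{j\in J}$, so that the frontier stratum is simply $K_i\setminus\mathring{K_i}$, the set of surviving simplices at least one of whose cofaces has been deleted. I would also record the combinatorial meaning of closure: since every stratum is a union of open simplices, its closure in $|K|$ is its face-closure, so the assertion $S\cap\overline{S'}\neq\emptyset\iff S\subset\overline{S'}$ is equivalent to saying that if one open simplex of $S$ is a face of a simplex of $S'$, then every open simplex of $S$ is a face of some simplex of $S'$. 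It is this combinatorial form that I would check case by case.

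Two families of pairs are routine. When the lower stratum is a singleton $\{\sigma_k\}$ the condition is automatic, since a one-element set is contained in $\overline{S'}$ as soon as it meets it. When the lower stratum is the open piece $\mathring{K_i}$, the key observation is that $\tau\in\mathring{K_i}$ exactly when no coface of $\tau$ in $K$ was removed as a violator; hence every coface of an interior simplex is again interior, and therefore $\mathring{K_i}$ meets the closure of no stratum other than itself. Both of these cases follow with no further input from the structure of the removal.

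The heart of the proof, and the step I expect to be the main obstacle, is the frontier stratum $F=K_i\setminus\mathring{K_i}$ in the role of the \emph{lower} stratum: I must show that if $F$ meets $\overline{\{\sigma_k\}}$, or meets $\overline{\mathring{K_i}}$, then $F$ is contained in that closure. The point of difficulty is that a connected piece of $F$ is a maximal connected union of surviving simplices each of which has lost a coface, and a priori such a piece could run simultaneously along the boundaries of several distinct deleted violators, so that it meets $\overline{\{\sigma_k\}}$ without lying inside it. Ruling this out is exactly what the two structural features built into the algorithm are for: the violators are deleted in order of \emph{increasing dimension}, so that when a simplex is assigned to $F$ all of its lower-dimensional neighbours have already been classified, and $f|_{K_i}$ is a genuine discrete Morse function, which restricts how a surviving simplex and its faces may share deleted cofaces. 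Combining these to show that each frontier piece sits inside a single closure is the delicate point of the argument; once it is in hand, assembling all the cases, together with the disjointness and local finiteness that are immediate from the construction, shows that $\Scal$ meets Definition~\ref{def:stratified-sc}.
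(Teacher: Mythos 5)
Your setup and the two ``routine'' families are handled correctly, and in fact more explicitly than in the paper: a singleton violator stratum satisfies the frontier condition trivially once it meets a closure (this is essentially all the paper's own proof checks -- it verifies the condition only for pairs whose first member is a single violator simplex, where ``meets the closure'' gives ``is contained in the closure'' for free), and your observation that every coface of a simplex of $\mathring{K_i}$ again lies in $\mathring{K_i}$ correctly shows that the open stratum meets no other stratum's closure. Your reduction of the frontier stratum to $K_i\setminus\mathring{K_i}$ and the combinatorial reformulation of closure as face-closure are also fine.

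The problem is that the case you yourself call the heart of the proof is never proved: you assert that the dimension-increasing removal order together with the fact that $f|_{K_i}$ is a discrete Morse function should force each connected piece of the frontier stratum into the closure of a single violator, and then write ``once it is in hand.'' That is a gap, not an argument -- and the sub-claim you are reaching for is in fact too strong. If two removed violators share a face that survives into $K_i$ (for instance, two adjacent type~II violator triangles $t_1,t_2$ whose common edge $e$ is not a violator: take $f(e)=0$, $f(t_1)=f(t_2)=10$, the remaining four edges equal to $20$, and all vertices negative), then the surviving faces of $t_1$ and of $t_2$ form a single connected piece $P$ of the frontier stratum; $P$ meets $\overline{t_1}$ (it contains $e$) but is not contained in $\overline{t_1}$ (it contains edges of $t_2$). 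Neither the removal order, which sorts violators by dimension and value but not by adjacency, nor the discrete Morse condition on $K_i$, which constrains function values rather than which violators share faces, excludes this configuration. So the route you sketch for the hard case does not close; to be consistent with the paper you would have to restrict attention to the direction the paper actually verifies (violator strata meeting closures of the others), and you should be aware that the remaining direction, which you correctly isolated as the delicate point, is not addressed by the paper's proof either.
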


\begin{proof} First note that every simplex in $K$ belongs to some strata piece; the strata pieces are obviously disjoint. If $\sigma_k$ and $\sigma_\ell$ are distinct violators in $\Scal$, then $\sigma_k\cap \overline{\sigma_\ell}$ if and only if $\sigma_k\in\partial\sigma_\ell$. Similarly, if $\sigma_k$ intersects the closure of the frontier strata piece, then it must lie in the boundary of one of the simplices in that strata piece. A violator in $\Scal$ cannot intersect the open strata piece  $\mathring{K_i}$ by definition, and if it intersects its closure then it intersects the frontier strata piece and we are done.
\end{proof}

\begin{theorem}
\label{theorem:algorithm}
The function $(f,s)$ associated to the stratification defined in the algorithm above is a discrete stratified Morse function.
\end{theorem}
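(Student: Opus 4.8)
The plan is to reduce the theorem to checking, stratum by stratum, that $f$ restricted to each piece produced by the algorithm is a discrete Morse function in the sense of Definition~\ref{def:dmf-on-strata}; this is exactly what Definition~\ref{def:dsmf} requires. Lemma~\ref{lem:algo-strat} already guarantees that the output $\Scal$ is a genuine stratification satisfying the frontier condition, so the only outstanding obligation is the Morse condition on each stratum. The strata fall into two families: the singleton strata $\{\sigma_1\},\dots,\{\sigma_i\}$ recording the removed violators, and the two pieces $\mathring{K_i}$ and $\overline{K_i}\setminus(\mathring{K_i}\cup\{\sigma_j\}_{j\in J})$ created once the termination test ``$f|_{K_i}$ is a discrete Morse function'' succeeds. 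The singleton strata are immediate: if $S=\{\sigma\}$ then no coface or face of $\sigma$ lies in $S$, so $U_s(\sigma)=L_s(\sigma)=\emptyset$ and conditions (i)--(iii) of Definition~\ref{def:dsmf} hold trivially, with every removed violator locally critical.

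The heart of the argument is the pair $\mathring{K_i}$ and the frontier piece. The structural fact I would establish is that, since $K$ is a simplicial complex, the only simplices of $\overline{K_i}$ missing from $K_i$ are removed violators, whence $\overline{K_i}=K_i\,\sqcup\,\{\sigma_j\}_{j\in J}$ and therefore $\overline{K_i}\setminus\{\sigma_j\}_{j\in J}=K_i$. It follows that $\mathring{K_i}$ and the frontier piece $K_i\setminus\mathring{K_i}$ together partition the open simplices of $K_i$, and that each is a subset of $K_i$. By the termination test, $f|_{K_i}$ is a discrete Morse function. For any $\alpha$ in one of these pieces $S\subseteq K_i$ we have $s(\alpha)=S$, so the sets $U_s(\alpha)$ and $L_s(\alpha)$ computed inside $S$ are contained in the corresponding sets computed inside $K_i$. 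Passing to a subset can only shrink these sets, so the bounds $|U_s(\alpha)|\le 1$ and $|L_s(\alpha)|\le 1$ are inherited from $f|_{K_i}$, and if one of them is empty in $K_i$ it stays empty in $S$, giving condition (iii). Hence $f$ restricted to each of $\mathring{K_i}$ and the frontier piece is a discrete Morse function.

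The step I expect to require the most care is the explicit tracking of the exclusivity condition (iii) together with the simplex-level identity $\overline{K_i}\setminus\{\sigma_j\}_{j\in J}=K_i$. Condition (iii) is not a consequence of (i) and (ii) on a general stratum---precisely the reason Definition~\ref{def:dmf-on-strata} states it separately---so it cannot be taken for granted and must be carried along; fortunately the ``restriction shrinks the sets'' observation makes inheritance of emptiness automatic. Once the singletons and the two remaining pieces are all verified and one recalls (from the proof of Lemma~\ref{lem:algo-strat}) that every simplex of $K$ lies in some stratum, every simplex sits in a stratum on which $f$ is a discrete Morse function, which is exactly the assertion that $(f,s)$ is a discrete stratified Morse function.
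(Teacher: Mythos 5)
Your proposal is correct and follows essentially the same route as the paper's proof: handle the singleton violator strata trivially, and verify the Morse condition on the interior and frontier pieces by observing that they are subsets of the violator-free remainder $K_i$, on which $f$ is already a discrete Morse function and whose $U$- and $L$-sets only shrink under restriction. Your explicit tracking of the exclusivity condition (iii) and of the identity $\overline{K_i}\setminus\{\sigma_j\}_{j\in J}=K_i$ is, if anything, slightly more careful than the published argument.
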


\begin{proof}
We assume $K$ is connected. 
If $f$ itself is a discrete Morse function, then there are no violators in $K$. The algorithm produces the trivial stratification $\Scal = \{K\}$ and since $f$ is a discrete Morse function on the entire complex, the pair $(f,s)$ trivially satisfies Definition~\ref{def:dsmf}.

If $f$ is not a discrete Morse function, let $\Scal = {\Vcal} \cup \{F\}\cup \{I\}$ denote the stratification produced by the algorithm, where ${\mathcal V}$ is the set of violators that form their own strata, $F$ is the set of frontier strata pieces and $I$ is the interior complementary to $F$. Since each violator $\alpha \in \Vcal$ forms its own strata piece  $s(\alpha)$, the  restriction of $f$ to $s(\alpha)$ is trivially a discrete Morse function in which $\alpha$ is a critical simplex. 

Recall that the sets $F$ and $I$ are obtained as follows. We remove the collection ${\mathcal V}$ from $K$ to obtain $L$ and consider the restriction of $f$ to this subspace. The function $f$ is a discrete Morse function here, and since $I$ is the interior of $L$, $f$ restricts to a discrete Morse function on $I$. The set $F$ is obtained as $\overline{L} \setminus (I\cup {\mathcal V})$. If $\sigma$ is a simplex in $F$, then $\sigma$ is not one of the violators for $f$ that get removed (or it is not a violator at all in the first place).  It follows that $f|_F$ is a discrete Morse function and we are done.
\end{proof}

\begin{remark} When we restrict the function $f:K\to\Rspace$ to one of the strata $S_i \in \Scal$, a non-violator $\sigma$ that is regular globally (that is, $\sigma$  forms a gradient pair with a unique simplex $\tau$) may become a critical simplex for the restriction of $f$ to $S_i$. 
\end{remark}

The algorithm is relatively efficient. Suppose $K$ has $n$ simplices and let $c$ be the maximum number of codimension-1 faces and cofaces of any simplex in $K$ (in other words, $c$ could be considered as the maximum ``degree" of a simplex in $K$). The first step of the algorithm takes $O(cn)$ steps to identify the set of violators $\Vcal$ by checking for each simplex $\alpha^{(p)}$, its faces $\gamma^{(p-1)} < \alpha$ and cofaces $\beta^{(p+1)} > \alpha$ . Then for each violator $\sigma_i$ removed from the set $\Vcal$, the algorithm must check the complex $K_i$ for remaining violators by paying attention to simplicies adjacent to $\sigma_i$, which takes $O(c)$. Since the number of violators $r$ is at most $n$, this requires $O(cn)$ as well. So the algorithm runs in $O(cn)$ time. If $c$ is a constant, then the algorithm runs linear in the number of simplices. 

\subsection{Coarseness}\label{subsec:coarse}

Suppose $f$ is a function on $K$ and denote the set of stratifications $\Scal$ of the complex $K$ on which $f$ is a discrete stratified Morse function by $\Sigma(K, f)$.  The set $\Sigma(K, f)$ is partially ordered by inclusion: $\Scal\le \Scal'$ if each stratum piece $S_i \in \Scal$ is contained in some element of $\Scal'$. Generally, we wish to work with coarse stratifications; that is, we seek maximal elements of $\Sigma(K,f)$. Our algorithm in Section~\ref{subsec:dsmf-algorithm}  does just that.

\begin{proposition}\label{prop:algcoarse}
The stratification produced by the algorithm of Section~\ref{subsec:dsmf-algorithm} is a maximal element of $\Sigma(K,f)$.
\end{proposition}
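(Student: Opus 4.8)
The plan is to argue by contradiction, exploiting the explicit description of the algorithm's output together with the rigidity of the condition of the frontier. Recall from the proof of Theorem~\ref{theorem:algorithm} that the output stratification has the form $\Scal=\Vcal\cup\{F\}\cup\{I\}$, where each violator $\sigma\in\Vcal$ is its own singleton stratum, $I=\mathring{L}$ is the interior of $L=K\setminus\Vcal$, and $F=\overline{L}\setminus(I\cup\Vcal)$ is the frontier; moreover $f|_L$ is a discrete Morse function and, as the construction shows, every simplex of $F$ is a face of some violator while no simplex of $I$ is. Suppose $\Scal$ were not maximal, so that some $\Scal'\in\Sigma(K,f)$ satisfies $\Scal<\Scal'$. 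Since both are decompositions of $K$ into disjoint pieces with every piece of $\Scal$ contained in a piece of $\Scal'$, there is a stratum $T$ of $\Scal'$ that is a union of at least two pieces of $\Scal$, and I will derive a contradiction from the existence of such a $T$.

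First I would dispose of the merges that reintroduce a discrete Morse violation. Suppose $T$ contains a violator $\sigma$. By Definition~\ref{definition:violator}, $\sigma$ carries at least two offending cofaces, or two offending faces, or one offending face together with one offending coface. If $T$ contains two of these offenders, then inside $T$ we have $|U_s(\sigma)|\ge 2$, or $|L_s(\sigma)|\ge 2$, or both $U_s(\sigma)$ and $L_s(\sigma)$ nonempty, so $f|_T$ is not a discrete Morse function in the sense of Definition~\ref{def:dmf-on-strata}; hence $(f,s')$ is not a discrete stratified Morse function, contradicting $\Scal'\in\Sigma(K,f)$.

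The remaining enlargements keep $f$ Morse on $T$, so the obstruction must come from the condition of the frontier, and here the decisive feature is that this condition is an \emph{iff}: for strata $S,T$ one has $S\cap\overline{T}\ne\emptyset$ exactly when $S\subseteq\overline{T}$. If $T$ contains a violator $\sigma$ but omits one of its offenders $\rho$ (which then lies in another stratum), then $\sigma\in T\cap\overline{\rho}$ is nonempty, so the condition forces $T\subseteq\overline{\rho}$, i.e. every cell of $T$ must be a face of the single closed simplex $\overline{\rho}$. I would then produce a cell of $T$ that is not a face of $\rho$, contradicting this containment. The witness comes from a dimension count: a second offender $\rho'$ of $\sigma$ has the same dimension as $\rho$ and so cannot be a face of it, and in fact $\overline{\rho}\cap\overline{\rho'}=\overline{\sigma}$, so any genuine enlargement of $\{\sigma\}$ must import a cell lying outside $\overline{\rho}$, be it a sibling violator's offender, an interior cell of $I$, or a frontier cell of $F$. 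The separation of $F$ from $I$ is the same mechanism: a stratum straddling both contains a frontier cell $\nu$ that is a face of some violator $\sigma$ together with an interior cell $\mu$ that is not, so it meets $\overline{\sigma}$ without being contained in it. The face-offender case (type~II) is handled symmetrically, with faces $\nu,\nu'$ of $\sigma$ playing the role of $\rho,\rho'$.

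I expect the frontier-driven case to be the main obstacle. Once the Morse violations are removed, the real work is the incidence-and-dimension bookkeeping: for every enlargement that survives the Morse test, one must name a concrete cell witnessing the failure of the containment demanded by the iff condition. The essential structural fact that makes this possible is that each violator sits at the center of a fan of equal-dimensional offenders whose closures intersect only along $\overline{\sigma}$; combined with the rigidity of the frontier condition, this leaves no admissible way to group a violator with anything else, and no way to fuse the frontier stratum with the interior. Assembling these observations into a uniform argument that covers unions of arbitrarily many pieces, rather than a single pairwise merge, is the part that requires the most care.
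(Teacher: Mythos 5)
Your overall skeleton---assume a strictly coarser $\Scal'\in\Sigma(K,f)$ exists, find a stratum $T$ of $\Scal'$ fusing at least two pieces of $\Scal$, and rule out each kind of fusion---matches the paper's, and your first case (a merged stratum containing a violator together with two of its offenders fails to be Morse) is fine. The gap is in the step you yourself identify as the crux. You write that if $T$ contains a violator $\sigma$ but omits an offender $\rho$, then $\sigma\in T\cap\overline{\rho}$ forces $T\subseteq\overline{\rho}$. The condition of the frontier does not say that: it relates \emph{strata}, so from $T\cap\overline{S'}\ne\emptyset$ you may only conclude $T\subseteq\overline{S'}$, where $S'$ is the stratum of $\Scal'$ containing $\rho$, not $T\subseteq\overline{\rho}$. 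If $\rho$ sits in the top-dimensional interior stratum of $\Scal'$, then $\overline{S'}$ can be all of $K$ and the containment is vacuous, so your witness cell ``outside $\overline{\rho}$'' witnesses nothing. This case is not marginal: it is exactly the scenario of merging a violator into the stratum containing just one of its offenders, which survives your Morse test (with only one offender present, the restriction of $f$ to $T$ can still satisfy Definition~\ref{def:dmf-on-strata}). The paper closes precisely this case by a different mechanism: the stratum of $\Scal'$ containing $\mathring{K_\ell}$ cannot absorb a removed violator because the algorithm removed only those violators whose removal was necessary for the remainder to be a discrete Morse function (``otherwise the algorithm would have terminated sooner''); the frontier condition is invoked only afterwards, to separate $\mathring{K_\ell}$ from the frontier stratum. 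You need an appeal of that kind to the algorithm's minimality here; the frontier condition alone will not do it.

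Two secondary issues. Your structural claim that the offenders of a violator are equal-dimensional with $\overline{\rho}\cap\overline{\rho'}=\overline{\sigma}$ fails for type~III violators, where one offender is a face and the other a coface of $\sigma$, so one offender's closure contains the other's. And, as you concede, the pairwise observations are not yet assembled into an argument covering a stratum of $\Scal'$ that is a union of many pieces of $\Scal$; the paper sidesteps this by pinning down the strata of $\Scal'$ one at a time---first $\mathring{K_\ell}$, then the frontier, and only then the violators, which are forced to be singletons once everything else is accounted for.
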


\begin{proof}
The algorithm produces a stratification $\Scal$ consisting of some violators $\sigma_1,\sigma_2,\dots ,\sigma_\ell$ for the function $f$, the interior of $K_\ell = K\setminus\{\sigma_1,\dots ,\sigma_\ell\}$, and the frontier of $K_\ell$ (with the violators removed). Suppose there is a stratification $\Scal'\in\Sigma(K,f)$ with $\Scal\le\Scal'$. Then there is some stratum piece $S \in \Scal'$ containing $\mathring{K_\ell}$. If they are not equal, then there is a simplex $\alpha$ in $S \setminus \mathring{K_\ell}$. If $\alpha$ is one of the violators for $f$ then $f|_S$ cannot be a discrete Morse function on $S$, otherwise the algorithm would have terminated sooner. If $\alpha$ lies in the frontier of $K_\ell$ then $S$ contains the entire frontier by definition and hence must be all of $K_\ell$. But then $\Scal'$ would not satisfy the frontier condition. Thus, $\mathring{K_\ell}$ must be one of the elements of $\Scal'$. Similarly, the frontier of $K_\ell$ must also be an element of $\Scal'$. The violators are disjoint and so they must then also be elements of $\Scal'$. It follows that $\Scal=\Scal'$ and that $\Scal$ is maximal.
\end{proof}

Note that if $f$ is a discrete Morse function on $K$, then the algorithm produces the trivial stratification $\Scal = \{K\}$, which is indeed maximal in $\Sigma(K,f)$.

\section{Discrete Stratified Morse Theory by Example}
\label{sec:examples}

We apply the algorithm described in Section~\ref{subsec:dsmf-algorithm} to a collection of examples to demonstrate the utility of our theory. 
For each example, given a function $f: K \to \Rspace$ that is not necessarily a discrete Morse function, 
we equip $f$ with a particular stratification $s$, thereby converting it to a discrete stratified Morse function $(f,s)$.
These examples help to illustrate that the class of discrete stratified Morse functions is much larger than that of discrete Morse functions.  

\begin{figure}[h!]
 \begin{center}
  \includegraphics[width=0.7\linewidth]{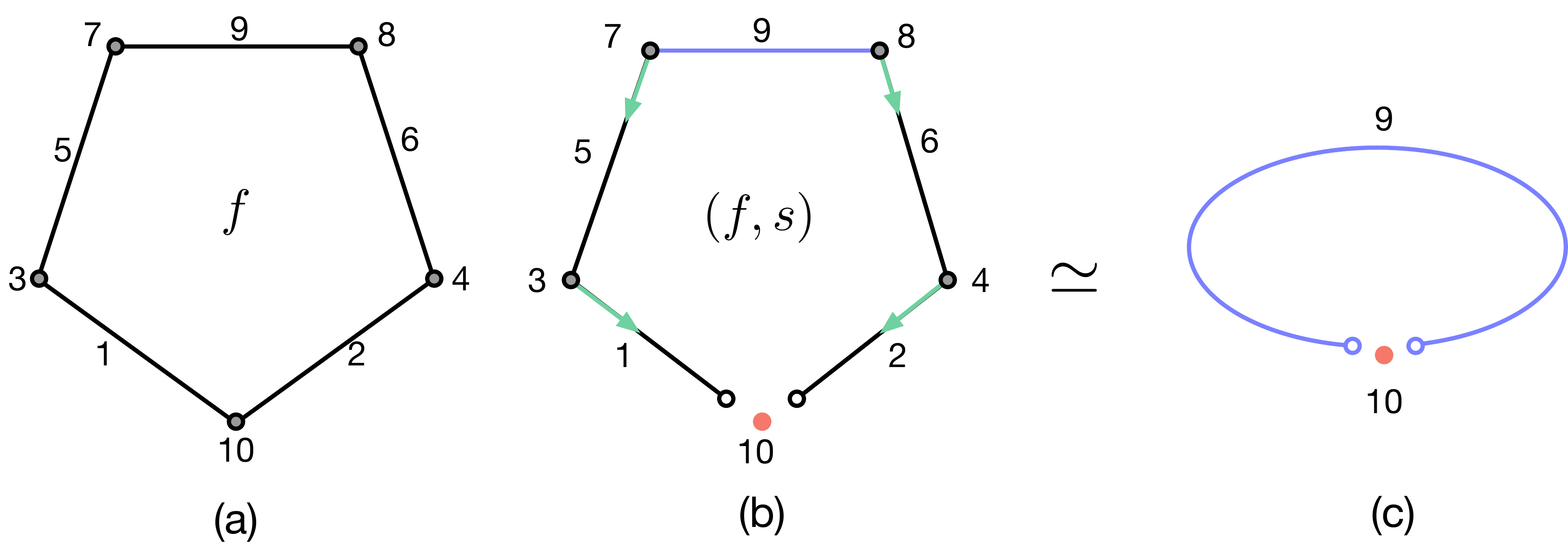}
 \end{center}
\caption{Upside-down pentagon. (a): $f$ is not a discrete Morse function. (b): $(f,s)$ is a discrete stratified Morse function where violators  removed by the algorithm are in red; critical simplicies (other than the violators) are in blue; the discrete gradient vector field is marked by green arrows. (c): the simplified simplicial complex by removing the Morse pairs following the discrete gradient vector field.}
\label{fig:dsmf-upside-down-pentagon}
\end{figure}

\para{Example 1:~upside-down pentagon.}
As illustrated in Figure~\ref{fig:dsmf-upside-down-pentagon} (a), $f: K \to \Rspace$ defined on the boundary of an upside-down pentagon is not a discrete Morse function, as it contains a set of violators: 
$\Vcal = \{f^{-1}(10), f^{-1}(1), f^{-1}(2)\}$, since  $|U(f^{-1}(10))| = 2$ and $|L(f^{-1}(1))| = |L(f^{-1}(2))| = 2$, respectively. 

By following the algorithm in Section~\ref{subsec:dsmf-algorithm}, we would first remove the violator $f^{-1}(10)$ and check to see if what remains is a discrete Morse function. 
We see that this is indeed the case: we have four Morse pairs illustrated by green arrows in Figure~\ref{fig:dsmf-upside-down-pentagon} (b): $(f^{-1}(3),f^{-1}(1))$, $(f^{-1}(4),f^{-1}(2))$, $(f^{-1}(7),f^{-1}(5))$, and $(f^{-1}(8),f^{-1}(6))$.  
The resulting discrete stratified Morse function $(f,s)$ is a discrete Morse function when restricted to each stratum. 
Recall that a simplex is critical for $(f,s)$ if it is neither the source nor the target of a discrete gradient vector. The critical values of $(f,s)$ are therefore $9$ and $10$. 

One of the primary uses of classical discrete Morse theory is {\em simplification}. In this example, we can collapse a portion of each stratum following the discrete gradient field (illustrated by green arrows, see Section~\ref{sec:prelim}). Removing the Morse pairs simplifies the original complex as much as possible without changing its homotopy type, and the resulting simplification yields a complex with one vertex and one edge, see Figure~\ref{fig:dsmf-upside-down-pentagon} (c).

\para{Example 2:~pentagon.}
For our second pentagon example, $f$ can be made into a discrete stratified Morse function $(f,s)$ by making $f^{-1}(0)$ (a type II violator) and $f^{-1}(9)$ (a type I violator) their own strata following the algorithm in Section~\ref{subsec:dsmf-algorithm} (Figure~\ref{fig:dsmf-pentagon}).  
The critical values of $(f,s)$ are $0, 1, 3, 7, 8$ and $9$.
It is important to note that $f^{-1}(1)$ and $f^{-1}(3)$ are considered critical as they form their own strata pieces; however they are not the violators removed by the algorithm. 
The simplicial complex can be reduced to one with fewer cells by canceling the Morse pairs, as shown in Figure~\ref{fig:dsmf-pentagon} (d).   
\begin{figure}[h!]
 \begin{center}
  \includegraphics[width=0.9\linewidth]{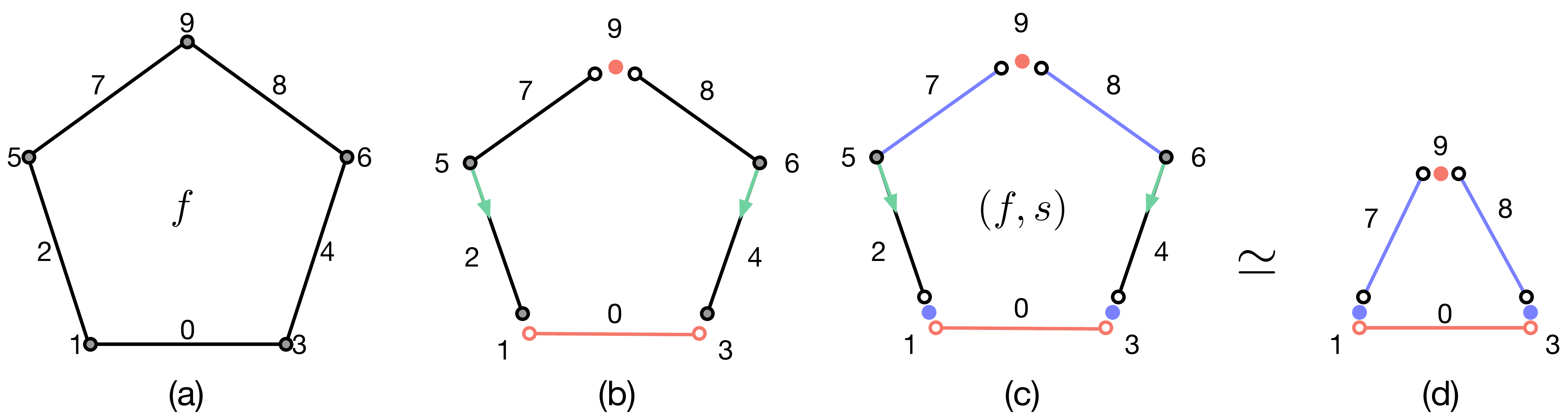}
 \end{center}
\caption{Pentagon. (a): $f$ is not a discrete Morse function. (b): an intermediate simplicial complex after removing violators in red. (c): separating simplicies $f^{-1}(1)$ and $f^{-1}(3)$ from (b) results in a stratification that satisfies the frontier condition. There are six strata pieces associated with the discrete stratified Morse function $(f,s)$. (d): the simplified simplicial complex. }
\label{fig:dsmf-pentagon}
\end{figure}

\para{Example 3: split octagon.}
The split octagon example (Figure~\ref{fig:dsmf-split-octagon}) begins with a function $f$ defined on a triangulation of a stratified space that consists of two $0$-dimensional and three $1$-dimensional strata pieces  (Figure~\ref{fig:dsmf-split-octagon}(a)). The set of violators to be considered is $\Vcal = \{f^{-1}(0)$, $f^{-1}(10)$, $f^{-1}(24)$, $f^{-1}(30)$, $f^{-1}(31)\}$.  
However, after removing $f^{-1}(30)$, then $f^{-1}(31)$, the rest of the simplicies in $\Vcal$ are no longer violators and the restriction of $f$ to what is left is a discrete Morse function (Figure~\ref{fig:dsmf-split-octagon}(b)). 
The result of canceling Morse pairs yields a simpler complex shown in Figure~\ref{fig:dsmf-split-octagon}(c).
\begin{figure}[h!]
 \begin{center}
  \includegraphics[width=0.8\linewidth]{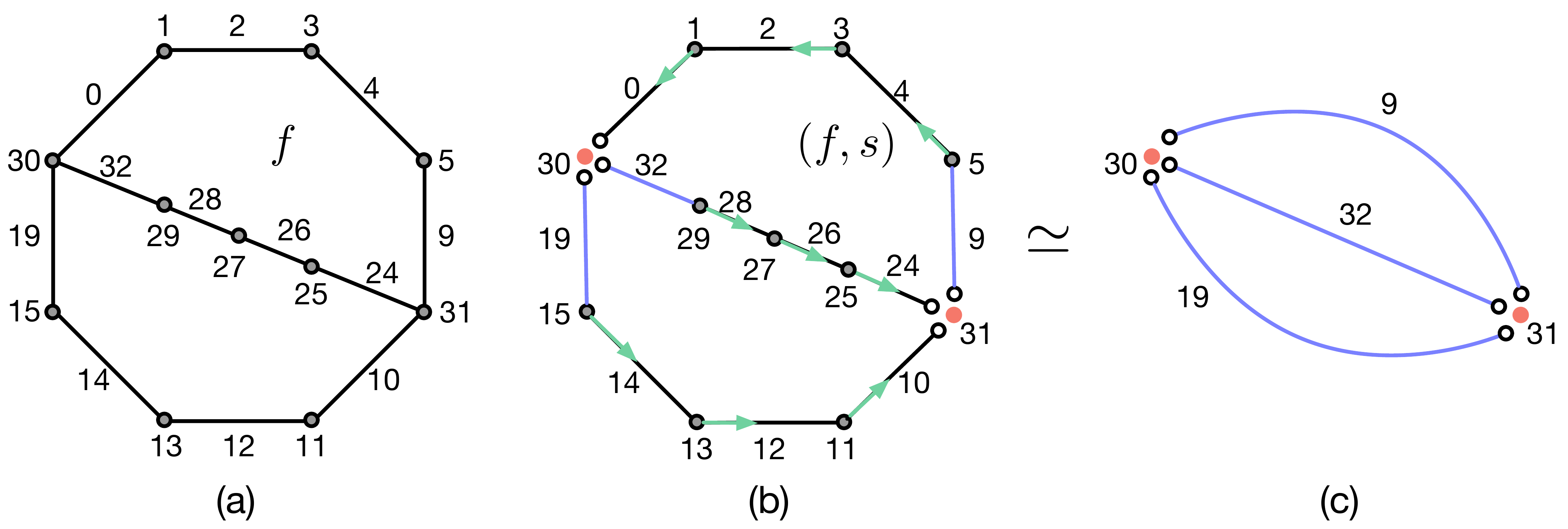}
 \end{center}
\caption{Split octagon. (a) $f$ is defined on the triangulation of a stratified space. (b) the resulting discrete stratified Morse function $(f,s)$. (c) the simplified complex. }
\label{fig:dsmf-split-octagon}
\end{figure}

\para{Example 4: tetrahedron.}
In Figure~\ref{fig:dsmf-tetra}(a), the values of the function $f$ defined on the simplices of a tetrahedron are specified for each dimension. 
For each simplex $\alpha \in K$, we list the elements of its corresponding $U(\alpha)$ and $L(\alpha)$ in Table~\ref{table:tetra}. We also classify each simplex in terms of its criticality in the setting of classical discrete Morse theory. 
\begin{table}[!ht]
\begin{center}
\resizebox{\textwidth}{!}{
\begin{tabular}{|c |c|c|c|c|c|c|c|c|c|c|c|c|c|c| } 
 \hline
 & 1 & 2 & 3 & 4 & 5 & 6 & 7 & 8 & 9 & 10 & 11 & 12 & 13 & 14 \\ 
 $U(\alpha)$ & $\emptyset$ &$\emptyset$ &$\{2\}$ &$\emptyset$ &
 $\emptyset$ & $\emptyset$ & $\{5\}$& $\{6\}$ &$\emptyset$ & $\{4,7\}$ & $\{6\}$  &$\{9\}$  & $\emptyset$& $\{8, 11,12\}$\\  
 $L(\alpha)$ & $\emptyset$ & $\{3\}$&$\emptyset$ & $\{10\}$ 
 & $\{7\}$& $\{8, 11\}$ & $\{10\}$ &$\{14\}$ & $\{12\}$& $\emptyset$
 & $\{14\}$  & $\{14\}$  & $\emptyset$&$\emptyset$ \\
 Type & C & R & R & R & R & II & III & III & R & I & III & III &C & I\\
 \hline
\end{tabular}
}
\end{center}
\caption{Tetrahedron. For simplicity, a simplex $\alpha$ is represented by its function value $f(\alpha)$ (as $f$ is 1-to-1). In terms of criticality for each simplex: C means critical; R means regular; I, II and III correspond to type I, II and III violators. }
\label{table:tetra}
\end{table}
According to Table~\ref{table:tetra} the violators $\Vcal$ have function values of $10, 14$ (type I), $6$ (type II), $7, 8, 11$ and $12$ (type III). 

\begin{figure}[h!]
 \begin{center}
  \includegraphics[width=0.8\linewidth]{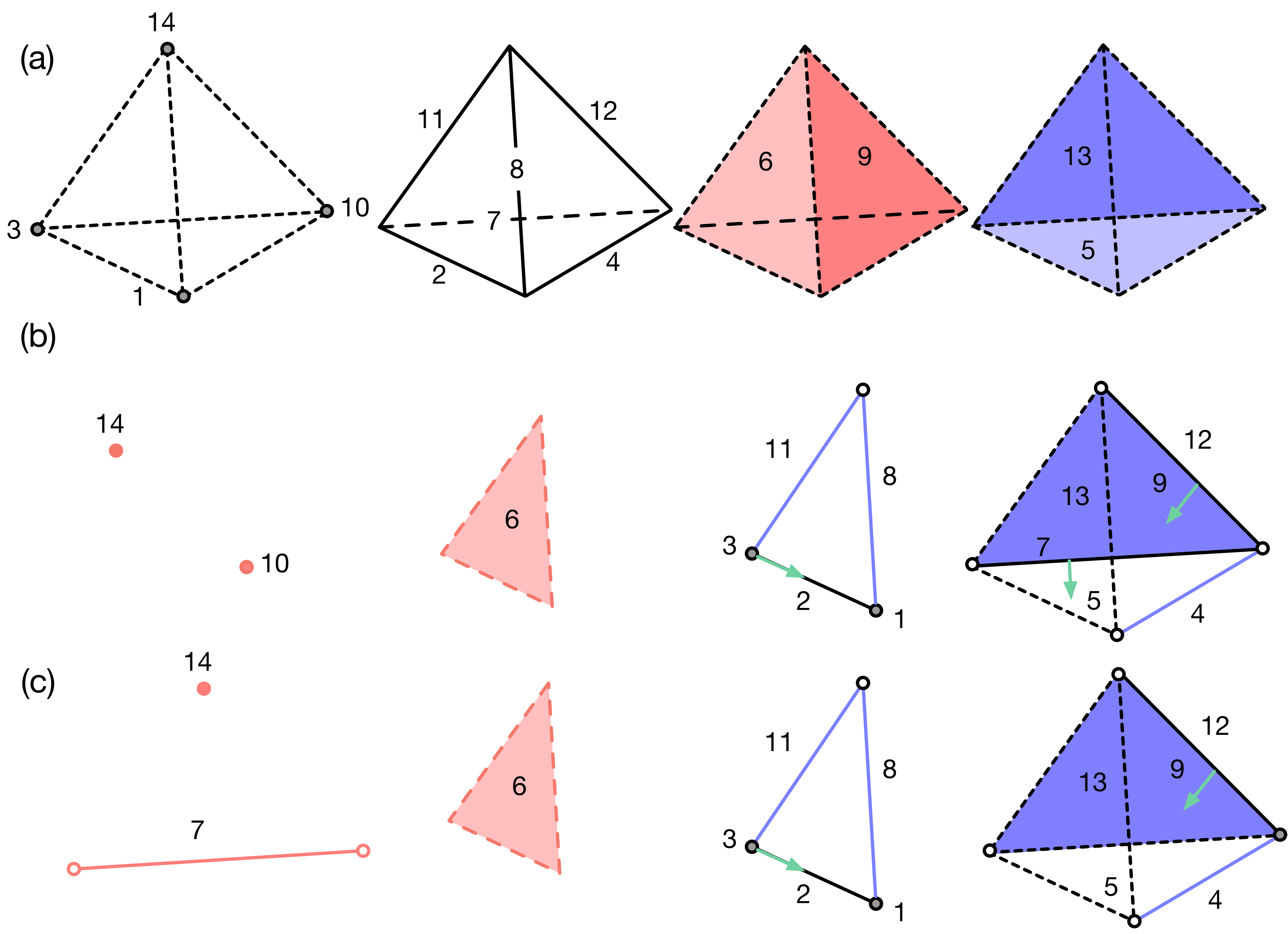}
 \end{center}
\caption{Tetrahedron. (a) $f$ is defined on the simplices of increasing dimensions. (b) the resulting discrete stratified Morse function  $(f,s)$ is shown by individual strata pieces; using the algorithm in Section~\ref{subsec:dsmf-algorithm}. Not all simplices are shown. (c) An alternative stratification.}
\label{fig:dsmf-tetra}
\end{figure}

We describe our algorithm step by step, the intermediate results (strata pieces) are illustrated in Figure~\ref{fig:dsmf-tetra}(b). 
For simplicity, a simplex $\alpha$ is represented by its function value $f(\alpha)$. 
First, initialize $\Scal = \emptyset$.  
Second, remove the vertex $10$, then $7$ is no longer a violator, remove it from the list $\Vcal$;  now $\Scal = \{ 10 \}$. 
Third, remove the vertex $14$, then $8, 11, 12$ are no longer violators, remove them from the list $\Vcal$; $\Scal = \{ 10, 14\}$. 
Fourth, $6$ is the only remaining violator, add it to $\Scal = \{10, 14, 6\}$. 
Finally, let $C = K \setminus \{10, 14, 6\}$. Then $\overline{C} = K \setminus \{6\}$ and $\mathring{C} = C \setminus \{2, 8, 11, 13\}$. Add $\mathring{C}$ and $\overline{C} \setminus (\mathring{C} \cup \{14\}) $ to $\Scal$. 
$\Scal$ now contains 5 strata pieces. 
Besides vertices $10$ and $14$ and triangle $6$, $\Scal$ also contains a strata piece $\{1,2,3,8,11\}$ that is homotopy equivalent to an open 1-manifold; vertex 3 and edge 2 forms a Morse pair. 
The last strata piece in $\Scal$ is $\{4,5,7,9,12,13\}$, which is topologically a punctured disc; in particular, there are two Morse pairs, $(12, 9)$ and $(7,5)$. 

As an alternative to the algorithm described in Section~\ref{subsec:dsmf-algorithm}, we show in Figure~\ref{fig:dsmf-tetra}(c) that we could obtain a different stratification by changing the ordering of the violators to be removed. 
As in (b), $\Vcal = \{10, 14, 6, 7, 8, 11, 12\}$. 
First, initialize $\Scal = \emptyset$.  
Second, remove the vertex $14$, then $8, 11, 12$ are no longer violators, remove them from $\Vcal$;  now $\Scal = \{ 14 \}$. 
Third, remove the edge $7$, then $10$ is no longer a violator, remove it from $\Vcal$; 
$\Scal = \{ 14, 7\}$. 
Fourth, $6$ is the only remaining violator, add it to $\Scal = \{14, 7, 6\}$. 
Finally, let $C = K \setminus \{14, 7, 6\}$. 
Then $\overline{C} = K \setminus \{6\}$ and $\mathring{C} = C \setminus \{1, 2, 3, 8, 11\}$.  
$\Scal$ now contains 5 strata pieces in (c) that are slightly different from (b).  Note that the stratifications in (b) and (c) are incomparable in the set $\Sigma(K,f)$.

\para{Example 5: split solid square.}
As illustrated in Figure~\ref{fig:dsmf-split-square}, the function $f$ defined on a split solid square is not a discrete Morse function; there are three type I violators $f^{-1}(9)$, $f^{-1}(10)$, and $f^{-1}(11)$. 
Making these violators their own strata (in the order of increasing function value following the algorithm in Section~\ref{subsec:dsmf-algorithm}) helps to convert $f$ into a discrete stratified Morse function $(f,s)$. In this example, all simplices are considered critical for $(f,s)$. For instance, consider the open 2-simplex $f^{-1}(4)$, we have $L(f^{-1}(4)) = \{f^{-1}(11)\}$ and $U(f^{-1}(4)) = \emptyset$; with the stratification $s$ in Figure~\ref{fig:dsmf-split-square} (right),  $L_s(f^{-1}(4)) = \emptyset$ and so $4$ is not a critical value for $f$ but it is a critical value for $(f,s)$. Since every simplex is critical for $(f,s)$, there is no simplification to be done.
\begin{figure}[h!]
 \begin{center}
  \includegraphics[width=0.5\linewidth]{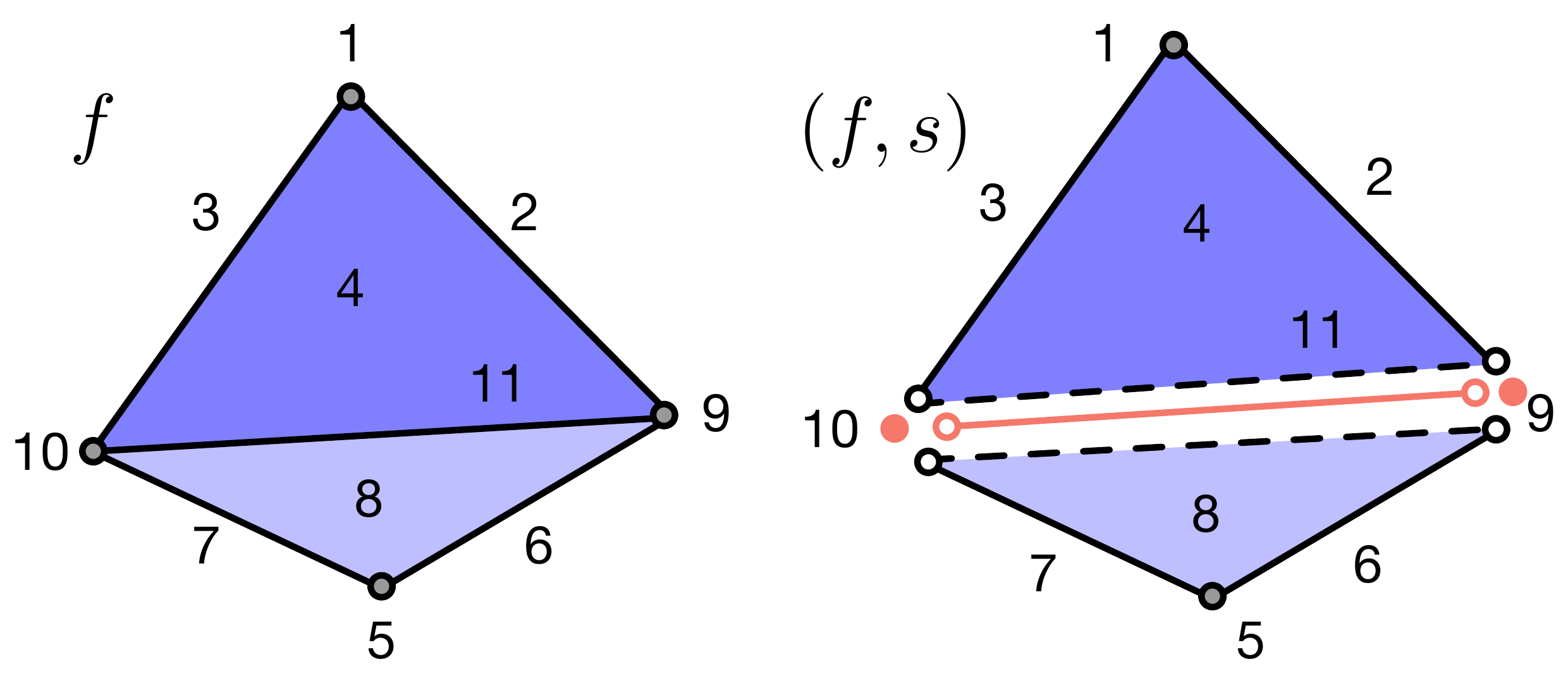}
 \end{center}
\caption{Split solid square. Every simplex is critical for $(f,s)$.}
\label{fig:dsmf-split-square}
\end{figure}

\section{Applications to Triangulations of Stratified Spaces}
\label{sec:triangulations}

\subsection{Background on Whitney Stratifications and Triangulations}
\label{subsec:whitney} 

\para{Whitney stratifications.}
We review relevant background on Whitney stratifications; the primary reference for the material in this section is~\cite{Mather2012}. For simplicity, we assume all manifolds are smooth (i.e., of class $C^\infty$). If $x,y\in \Rspace^n$ with $x\ne y$, we define the {\em secant} $\overline{xy}$ to be the line through the origin in $\Rspace^n$ parallel to the line joining $x$ and $y$. If $x\in\Rspace^n$, we identify the tangent space $T_x\Rspace^n$ with $\Rspace^n$ in the standard way.

Let $M$ be a smooth manifold without boundary and let $Z$ be a subset of $M$. A \emph{stratification} $\Scal=\{S_i\}_{i \in \Pset}$ of $Z$ is a cover of $Z$ by pairwise disjoint smooth submanifolds of $M$ which lie in $Z$; these submanifolds $S_i$ are called \emph{strata} (whose connected components are referred to as \emph{strata pieces}); where $\Pset$ is some poset. 
The stratification $\Scal$ is \emph{locally finite} if each point of $M$ has a neighborhood which meets finitely many strata. We say $\Scal$ satisfies the condition of the \emph{frontier} if the strata in $\Scal$ satisfy $S_i\cap\overline{S_j}\ne\emptyset$ if and only if $S_i\subseteq \overline{S_j}$; or equivalently, if for each stratum $S_i$ of $\Scal$ its frontier $(\overline{S_i}\setminus S_i)\cap Z$ is a union of strata. Compare with Definitions~\ref{def:poset-stratification} and~\ref{def:stratified-sc}.   

\begin{definition}\label{conditionab} Let $X$ and $Y$ be submanifolds of a smooth manifold $M$. We say that $X$ is {\em Whitney regular} over $Y$ if whenever $\{x_i\} \subset X$ and $\{y_i\} \subset Y$ are sequences of points both converging to some point $y\in Y$, the lines $\ell_i = \overline{x_iy_i}$ converge to a line $\ell \in \Rspace^n$, and the tangent spaces $T_{x_i}X$ converge to a space $T \subseteq \Rspace^n$, then
\begin{enumerate}
\item[(A)] $T_yY\subseteq T$, and
\item[(B)] $\ell\subseteq T$.
\end{enumerate}
\end{definition}

\begin{remark} Convergence here should be thought of as taking place in a small neighborhood of $y$ identified with $\Rspace^n$ via a coordinate chart. Also, Condition B above implies Condition A \cite{Mather2012}.
\end{remark}

\begin{proposition}~\cite[Proposition 2.7]{Mather2012}
\label{frontier} 
Suppose $y\in\overline{X\setminus Y}$ and $(X,Y)$ satisfies condition B at $y$. Then $\dim Y<\dim X$.
\end{proposition}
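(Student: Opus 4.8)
The plan is to argue by contradiction. First note that condition B implies condition A, so $T_yY\subseteq T$ whenever the relevant limits exist; this will immediately give $\dim Y\le\dim X$, and the whole task is to rule out equality. To set up the limits, since $y\in\overline{X\setminus Y}$ I would choose a sequence $x_i\in X\setminus Y$ with $x_i\to y$. Working in a coordinate chart that identifies a neighborhood of $y$ with an open subset of $\Rspace^n$ and using that $Y$ is a smooth submanifold, for $x_i$ close enough to $y$ there is a unique nearest point $y_i\in Y$, and the secant vector $x_i-y_i$ is orthogonal to $T_{y_i}Y$ (the standard tubular-neighborhood fact). Because $\|x_i-y_i\|\le\|x_i-y\|\to 0$ and $y\in Y$, we have $y_i\to y$ as well; moreover $x_i\ne y_i$ since $x_i\notin Y$, so the secant line $\ell_i=\overline{x_iy_i}$ is well defined.

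Next I would extract the limits demanded by condition B. The Grassmannians $G(\dim X,n)$ and $G(1,n)$ are compact, so after passing to a subsequence we may assume $T_{x_i}X\to T$ with $\dim T=\dim X$, and $\ell_i\to\ell$ with $\ell$ a line. Condition B then yields both $T_yY\subseteq T$ (whence $\dim Y\le\dim X$) and $\ell\subseteq T$.

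The heart of the argument is to show $\ell\not\subseteq T_yY$. Writing $u_i=(x_i-y_i)/\|x_i-y_i\|$, we have $\ell_i=\Rspace\, u_i$ and $u_i\perp T_{y_i}Y$; after a further subsequence $u_i\to u$ with $\|u\|=1$ and $\ell=\Rspace\, u$. Since $Y$ is smooth and $y_i\to y$, the assignment $y'\mapsto T_{y'}Y$ is continuous, so $T_{y_i}Y\to T_yY$ and the orthogonality passes to the limit: for any $v\in T_yY$ pick $v_i\in T_{y_i}Y$ with $v_i\to v$, and then $\langle u_i,v_i\rangle=0$ forces $\langle u,v\rangle=0$. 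Thus $u\perp T_yY$, and as $u\ne 0$ we conclude $\ell\cap T_yY=\{0\}$, so $\ell\not\subseteq T_yY$. Now suppose $\dim Y=\dim X$. Then $T_yY\subseteq T$ together with $\dim T_yY=\dim T$ forces $T_yY=T$, and condition B gave $\ell\subseteq T=T_yY$, contradicting $\ell\not\subseteq T_yY$. Hence $\dim Y<\dim X$.

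I expect the main obstacle to be the geometric input in the first paragraph: justifying the nearest-point projection onto the smooth submanifold $Y$ near $y$, the normality $x_i-y_i\perp T_{y_i}Y$, and the continuity of $y'\mapsto T_{y'}Y$ used to transport orthogonality to the limit. These are standard facts about smooth submanifolds and tubular neighborhoods, but they are the essential ingredients that make the contradiction with condition B go through.
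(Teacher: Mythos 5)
The paper does not actually prove this proposition; it is quoted without proof from Mather \cite[Proposition 2.7]{Mather2012}. Your argument is correct and is essentially Mather's own: pick secants orthogonal to $Y$ (via nearest-point projection onto the smooth submanifold, or equivalently by first flattening $Y$ in a chart), pass to convergent subsequences in the compact Grassmannians, and use conditions A and B to conclude that the limit plane $T$ contains both $T_yY$ and a line $\ell$ meeting $T_yY$ only at the origin. The only cosmetic difference from the standard argument is that you close by contradiction with $\dim Y=\dim X$ rather than observing directly that $T\supseteq T_yY\oplus\ell$ gives $\dim X=\dim T\ge \dim Y+1$; both are fine.
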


\begin{definition}
\label{whitneystrat} 
A stratification $\Scal=\{S_i\}$ is a {\em Whitney stratification} if it is locally finite, satisfies the condition of the frontier, and if whenever $j\le i$, $S_i$ is Whitney regular over $S_j$.
\end{definition}

\begin{remark} Let $\Scal$ be a Whitney stratification of a subset $Z$ of a manifold $M$ and let $S_i,S_j$ be strata. Proposition \ref{frontier} implies that if $i \le j$ then $\dim S_i<\dim S_j$. 
\end{remark}

Here is a useful way of constructing stratified spaces \cite{Johnson1983}. A stratified set of type 0 is a smooth manifold. To construct a stratified set $X^{(k+1)}$ of type $k+1$, take a stratified set $X^{(k)}$ of type $k$, a smooth manifold $K$, a smooth submanifold $L$ of codimension $0$ in $\partial K$, and an ``attaching'' map $\alpha:L\to X^{(k)}$, and then set $X^{(k+1)} = X^{(k)}\cup_\alpha K$. These attaching maps are not arbitrary continuous maps; they must be proper, continuous, and as close as possible to a smooth fiber bundle. The strata are then the various $X_k=X^{(k)}\setminus X^{(k-1)}$. For the pinched torus in Figure \ref{fig:pinched}, we begin with $X^{(0)}$ as the pinch point. To build $X^{(1)}$ we take the closed interval $K=[0,1]$, $L=\{0,1\}$, and $\alpha:L\to X^{(0)}$ the obvious map. To build $X^{(2)}$, we take $K$ to be the disjoint union of a disc and a square, $L$ to be the disjoint union of the boundary circle and boundary square, and $\alpha:L\to X^{(1)}$ to be the map identifying the circle via the identity and the square via the map that first yields a wedge of two circles and then collapses one to the base point.

\para{Triangulating stratified sets.}
By a {\em triangulation} of a set $Z$ we mean a finite simplicial complex $K$ and a homeomorphism $h:|K|\to Z$, where $|K|$ denotes the geometric realization of $K$. Any smooth manifold is triangulable, for example.

Suppose we have a compact set $Z$ with Whitney stratification $\Scal = \{S_i\}$. As above, we may think of $Z$ as being built up by the pieces $S_i$ in such a way that when $i \le j$, we have $S_i\subseteq \partial\overline{S_j}$ (this is essentially the condition of the frontier). We now have the following theorem (see Theorem 2.1 of \cite{Johnson1983} or Proposition 5 of \cite{Goresky1978}). 

\begin{theorem}~\cite[Theorem 2.1]{Johnson1983}
\label{thm:strat-triangulate} 
A compact Whitney stratified set $Z$ admits a triangulation by a finite simplicial complex so that each $\overline{S_i}$ is triangulated as a subcomplex.
\end{theorem}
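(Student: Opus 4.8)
The plan is to prove this by induction on the dimension of $Z$, using the Thom--Mather control data that a Whitney stratification is known to admit. First I would order the strata so that their dimensions are nondecreasing, and write $Z_k = \bigcup_{\dim S_i \le k} S_i$ for the associated filtration by closed sets (closedness follows from the condition of the frontier together with the dimension monotonicity recorded in the remark after Definition~\ref{whitneystrat}). The base case consists of the minimal strata, whose frontiers are empty; these are compact smooth manifolds and hence triangulable by the classical Cairns--Whitehead theorem. The inductive step extends a triangulation of $Z_{k-1}$ across the $k$-dimensional strata.

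The engine of the argument is the local conical structure supplied by control data. For each stratum $S$ of positive codimension, Thom--Mather theory provides a tubular neighborhood $T_S$ with a projection $\pi_S : T_S \to S$ and a tubular (distance) function $\rho_S : T_S \to [0,\infty)$ satisfying $\rho_S^{-1}(0) = S$, and for $R < S$ these obey the commutation relations $\pi_R \circ \pi_S = \pi_R$ and $\rho_R \circ \pi_S = \rho_R$ on their common domain. These data trivialize a neighborhood of each point of $S$ as a product $\mathbb{R}^{\dim S} \times \cone(L_S)$, where the link $L_S = \rho_S^{-1}(\varepsilon) \cap \pi_S^{-1}(\mathrm{pt})$ is itself a compact Whitney stratified set of dimension $\dim Z - \dim S - 1$. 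Since $\dim L_S < \dim Z$, the induction hypothesis furnishes a triangulation of $L_S$ in which the closures of its strata are subcomplexes.

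With the links triangulated, I would build the triangulation of $Z$ outward. Assuming $Z_{k-1}$ has been triangulated so that each $\overline{S_i} \cap Z_{k-1}$ is a subcomplex, one triangulates the $k$-strata as smooth manifolds relative to their already-triangulated frontiers, and then fills in the collar $\rho_S^{-1}([0,\varepsilon])$ using the cone on the triangulated link: this neighborhood is homeomorphic to the mapping cylinder of $\pi_S|_{\rho_S^{-1}(\varepsilon)}$, whose fiber is $\cone(L_S)$, and coning a triangulation produces a triangulation of the neighborhood. The commutation relations guarantee that the cone structures for nested strata are mutually compatible, so the local pieces are meant to glue into a single simplicial complex in which every $\overline{S_i}$ appears as a subcomplex.

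The hard part will be the gluing. The attaching maps between the cone neighborhood of a stratum and the lower skeleton (the maps $\alpha$ described before Theorem~\ref{thm:strat-triangulate}) are only guaranteed to be stratified fiber bundles---proper, stratum-preserving, and locally trivial---not simplicial maps, so the cone triangulation of the collar need not a priori match the triangulation of $Z_{k-1}$ along the overlap. Reconciling them requires a stratum-preserving simplicial approximation: one subdivides both complexes and perturbs the attaching map within its stratified isotopy class until it is simplicial, all while preserving the property that closures of strata remain subcomplexes. Carrying out this approximation so that it is compatible with \emph{all} the control data simultaneously---hence with every relation in the partial order $\Pset$ at once---is the technical heart of the proof, and it is exactly here that Whitney regularity enters, through the existence and compatibility of the control data that make the local cone structures coherent.
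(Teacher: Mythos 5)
The paper does not prove this statement at all: it is quoted verbatim from Johnson~\cite{Johnson1983} (see also Proposition~5 of Goresky~\cite{Goresky1978}) and used as a black box, so there is no in-paper argument to compare yours against. What you have written is a sketch of the standard strategy behind the cited theorem --- induction on dimension, Thom--Mather control data, local cone structure over triangulated links, and outward gluing --- and that strategy is indeed the right one in spirit.

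However, as a proof your proposal has a genuine gap, and you have located it yourself without closing it. The step ``perturb the attaching map within its stratified isotopy class until it is simplicial, all while preserving the property that closures of strata remain subcomplexes'' is not an application of any off-the-shelf tool. Ordinary simplicial approximation produces a simplicial map \emph{homotopic} to the given one; a triangulation requires a \emph{homeomorphism}, so you would need a stratum-preserving isotopy carrying the attaching map to a PL homeomorphism, compatible simultaneously with all the commutation relations $\pi_R\circ\pi_S=\pi_R$ and $\rho_R\circ\pi_S=\rho_R$ for every pair $R<S$. Establishing that such an isotopy exists is precisely the content of the Goresky/Johnson/Verona arguments (Goresky's ``families of lines,'' Johnson's analysis of the mapping-cylinder structure of the tube boundaries), and it occupies essentially the whole of those papers; naming it as ``the technical heart'' does not discharge it. A secondary gap: in the inductive step you triangulate the $k$-strata ``relative to their already-triangulated frontiers,'' but $\overline{S}$ is not a manifold with boundary, so Cairns--Whitehead does not apply directly; one must instead triangulate the complement of the open tubes $\rho_R^{-1}([0,\varepsilon))$ inside $\overline{S}$, which is a manifold with corners, rel its boundary, and then verify that the corner structure matches the cone triangulations being glued in. As it stands the proposal is a correct roadmap to the literature rather than a proof.
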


\begin{figure}[!ht]
\centerline{\includegraphics[width=0.8\textwidth]{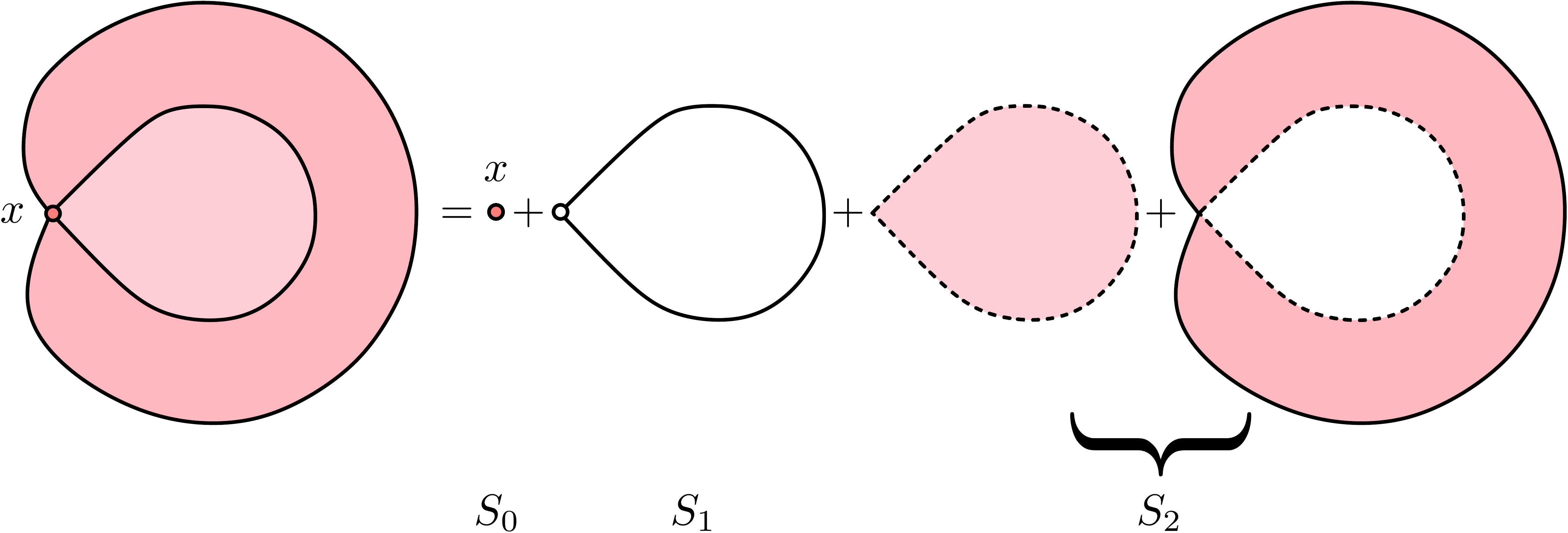}}
\vspace{-2mm}
\caption{The pinched torus as a stratified space, whose stratification is formed by strata $S_0$, $S_1$, and $S_2$.}
\label{fig:pinched}
\end{figure}

In addition, whenever $i\le j$, we have that $S_i$ is triangulated as a subcomplex of $\partial\overline{S_j}$. So, for example, in the pinched torus of Figure~\ref{fig:pinched} we have that the pinch point $x:=S_0$ is a vertex in the triangulation of $\overline{S_1}$, and that $\overline{S_1}$ is a subcomplex of the closure $\overline{S_2}$ of the two disjoint discs. The important idea here is that one may think of building the triangulation from the bottom up by first triangulating the $0$-stratum, then extending that to a triangulation of the $1$-stratum, and so on, noting that at each stage, the lower-dimensional (closed) stratum is a subcomplex of the boundary of the next (closed) stratum.

\subsection{Applications to Classical Stratified Morse Theory}
\label{subsec:smooth-discrete}

Suppose $Z$ is a Whitney stratified subset of a smooth manifold $M$ with stratification $\Scal = \{S_i\}$. A {\em stratified Morse function} $f:S\to\Rspace$ is, roughly speaking, a function that restricts to a Morse function on each stratum (see Appendix \ref{sec:prelim-SMT} for the formal definition). In this section, we investigate the following obvious question. Suppose $f:Z\to \Rspace$ is a stratified Morse function. Is there a triangulation of $Z$ and a discrete stratified Morse function on that triangulation that ``mirrors'' the behavior of $f$? That is, can we find a discrete stratified Morse function and a bijection between its critical cells and the critical points of the function $f$?

\para{Comparing classical (smooth) and discrete Morse theory.}\label{subsec:benedetti} To answer this question, we first need to address it in the classical nonstratified case. This has been solved satisfactorily by Benedetti \cite{Benedetti2012,Benedetti2016}.
Suppose $M$ is a smooth $d$-manifold with boundary (possibly empty) and $f:M\to\Rspace$ is a Morse function. Denote by $c_i$ the number of critical points of $f$ of index $i$. We call the $d$-tuple ${\mathbf c} = (c_0,c_1,\dots,c_d)$ the {\em Morse vector} of the function $f$ and we say that {\em $M$ admits ${\mathbf c}$ is a Morse vector}. A classical theorem of Morse asserts that the manifold $M$ is homotopy equivalent to a cell complex with $c_i$ cells of dimension $i$.

Similarly, if $K$ is a $d$-dimensional simplicial complex with a discrete Morse function $g:K\to\Rspace$ having $c_i$ critical cells of dimension $i$, we call ${\mathbf c} = (c_0,c_1,\dots,c_d)$ the {\em discrete Morse vector} of the function $g$ and say that {\em $K$ admits ${\mathbf c}$ as a discrete Morse vector}. If $K$ is a triangulation of a manifold $M$ with boundary, we say the function $g$ is {\em boundary critical} if all the cells in the subcomplex triangulating $\partial M$ are critical for $g$. Forman proved the analogue of Morse's theorem: the complex $K$ has the homotopy type of a cell complex with $c_i$ cells of dimension $i$. 
We recall Theorem 2.28 of~\cite{Benedetti2016} below. 

\begin{theorem}~\cite[Theorem 2.28]{Benedetti2016}
\label{thm:discsmoothmorse} 
If a smooth $d$-manifold $M$ {\em(}with boundary{\em)} admits ${\mathbf c}$ as a Morse vector, then for any PL triangulation $T$ of $M$, there exists an integer $r$ so that the $r$-th barycentric subdivision of $T$ admits 
\begin{enumerate}
\item[{\em(a)}] a discrete Morse function with $c_i$ critical $i$-faces, and
\item[{\em(b)}] a boundary-critical discrete Morse function with $c_{d-i}$ critical interior $i$-faces. 
\end{enumerate}
\end{theorem}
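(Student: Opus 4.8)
This is Benedetti's theorem; I would reconstruct the argument as follows. The plan is to translate the smooth handle decomposition induced by $f$ into a simplicial filtration on a sufficiently fine subdivision of $T$, and then read off a discrete Morse function handle by handle. First I would recall that a Morse function $f$ on $M$ with Morse vector $\mathbf{c}=(c_0,\dots,c_d)$ yields a handle decomposition: ordering the critical values $v_1<\cdots<v_m$, each sublevel set $M^{\le v_k+\ep}$ is obtained from $M^{\le v_k-\ep}$ by attaching a single handle $H_k=D^{\lambda_k}\times D^{d-\lambda_k}$ of index $\lambda_k$ equal to the Morse index of the corresponding critical point, glued along $\partial D^{\lambda_k}\times D^{d-\lambda_k}$; in total there are $c_i$ handles of index $i$.

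Next I would realize this decomposition piecewise-linearly and align it with $T$. Each handle is a PL ball and each attaching region a PL ball in its boundary, so the handle filtration is triangulable and admits a common subdivision with $T$. The crucial quantitative point---this is where the integer $r$ enters---is that some iterated barycentric subdivision $T^{(r)}$ simultaneously refines $T$ and carries a subcomplex structure matching the handle filtration: barycentric subdivision can be made arbitrarily fine, so after enough iterations the simplices of $T^{(r)}$ resolve each handle together with its core. With the filtration realized simplicially, I would build the discrete Morse function using the key local fact that a subdivided PL handle of index $i$ collapses, \emph{relative} to its attaching region, onto its core $i$-cell---the relative/handle version of Whitehead's theorem that a barycentric subdivision of a PL ball is collapsible. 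Each such collapse is a discrete gradient on the handle with exactly one critical cell, of dimension $i$. Concatenating these local gradients in the attaching order---each new handle collapsing onto the previously built subcomplex except for its single critical core---gives a global discrete vector field $V$ on $T^{(r)}$ whose critical cells are precisely the $c_i$ cores in dimension $i$. Since $V$ strictly respects the filtration order, it admits no nontrivial closed $V$-paths and hence is the gradient of a genuine discrete Morse function, proving part (a).

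For part (b) I would invoke Poincar\'e--Lefschetz duality. The function $-f$ is again Morse, with a critical point of index $d-i$ wherever $f$ has one of index $i$; equivalently, turning the handle decomposition upside down exchanges index-$i$ handles for index-$(d-i)$ handles and swaps the roles of $\partial M$ and the interior. On the discrete side this corresponds to the dual block structure supported on the barycentric subdivision, under which an $i$-cell of $T^{(r)}$ is dual to a $(d-i)$-block; transporting the function of part (a) through this duality produces a boundary-critical discrete Morse function whose critical interior $i$-faces number $c_{d-i}$, the boundary cells becoming critical by construction. Some further subdivision may be required to make the duality simplicial, and Lemma~\ref{lem:refinedmf} guarantees that refining preserves the critical-cell count, so a single $r$ can be chosen to serve both (a) and (b).

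The hard part will be the PL topology underlying the alignment step: showing that the fixed, arbitrary triangulation $T$ can be subdivided---using only barycentric subdivisions---so that the smooth handle filtration becomes a filtration by subcomplexes, and that each handle then admits a relative collapse onto its core. Controlling this uniformly by a single $r$, and ensuring that the local collapses glue into a globally acyclic matching, is the main obstacle; this is precisely the technical content of Benedetti's proof \cite{Benedetti2016}, resting on Whitehead-type collapsibility of subdivided balls together with the compatibility of smooth and PL handle structures.
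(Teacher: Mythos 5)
The paper offers no proof of this statement to compare against: Theorem~\ref{thm:discsmoothmorse} is imported verbatim from Benedetti \cite[Theorem 2.28]{Benedetti2016} and is used purely as a black box in the proof of Theorem~\ref{thm:dsmftriangulation}. Judged on its own terms, your sketch does follow the broad lines of Benedetti's actual argument --- pass from the Morse function to a handle decomposition, realize the filtration piecewise-linearly on an iterated barycentric subdivision, collapse each handle relatively onto its core, and deduce (b) from (a) via the boundary-critical duality of \cite{Benedetti2012} applied to $-f$ --- and you correctly identify where the real difficulty lives.

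That said, the two load-bearing steps are asserted with justifications that would not survive scrutiny. First, ``barycentric subdivision can be made arbitrarily fine, so after enough iterations the simplices of $T^{(r)}$ resolve each handle'' is not an argument: a smoothly embedded handle is in general not a subcomplex of \emph{any} iterated barycentric subdivision of the given $T$, however fine, so fineness alone does not convert the handle filtration into a filtration by subcomplexes. One must replace the smooth handles by PL handles compatible with a subdivision of $T$, which is where uniqueness of PL structures and genuine work enter. Second, the ``key local fact'' that a subdivided PL handle collapses onto its core relative to its attaching region is not a routine consequence of Whitehead-type results: non-collapsible triangulated balls exist, and the assertion that sufficiently many barycentric subdivisions restore (relative) collapsibility is itself a substantive theorem --- it is the technical engine of \cite{Benedetti2016}, not an input to it. Since you explicitly defer both points back to Benedetti, your proposal should be read as a correct roadmap of the cited result rather than a proof; for the purposes of this paper, which treats the statement as external, that is consistent with how it is used, but the gaps above are genuine if the sketch is meant to stand alone.
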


The statement (b) in Theorem~\ref{thm:discsmoothmorse} is related to duality. If $f:M\to\Rspace$ is a Morse function on $M$ with Morse vector ${\mathbf c} = (c_0,c_1,\dots,c_d)$, then the function $-f:M\to\Rspace$ is also a Morse function but with Morse vector ${\mathbf c}^* = (c_d,c_{d-1},\dots,c_0)$. In the discrete case, the negative of a discrete Morse function on a complex $K$ is {\em not} a discrete Morse function. However, in the case of a triangulation $T$ of a manifold, one may consider the dual block complex $T^*$ with a corresponding dual function $f^*$, yielding an analogous result.

\para{Discretizing a stratified Morse function.} 
Suppose $Z$ is a compact set with stratification $\Scal=\{S_i\}$ and that $f:Z\to\Rspace$ is a stratified Morse function. Let $d$ denote the dimension of the top stratum. Set $d_i = \dim S_i$ and denote by ${\mathbf c}^i = (c_0^i,\dots,c_{d_i}^i)$ the Morse vector of $f|_{S_i}$. According to Theorem \ref{thm:strat-triangulate}, there is a triangulation $T$ of $Z$ so that each closed stratum $\overline{S_i}$ is triangulated as a subcomplex $T_i$. This leads to our main result in this section relating discrete stratified Morse theory to (classical) stratified Morse theory. 

\begin{theorem}
\label{thm:dsmftriangulation} 
There exists an integer $r$ such that the $r$-th barycentric subdivision of $T$ admits a discrete stratified Morse function $F$ satisfying the following:
\begin{enumerate}
\item[{\em (a)}] the stratification of $T$ is given by the various $T_i\setminus T_{i-1}$, $i=0,\dots,d$; and
\item[{\em (b)}] the restriction of $F$ to the $i$-th stratum has discrete Morse vector ${\mathbf c}_i^* = (c_{d_i}^i,\dots,c_0^i)$.
\end{enumerate}
\end{theorem}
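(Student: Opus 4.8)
The plan is to reduce the statement to Benedetti's comparison theorem (Theorem~\ref{thm:discsmoothmorse}) applied one stratum at a time, and then to glue the resulting discrete Morse functions into a single discrete stratified Morse function. First I would organize the strata by increasing dimension so that, by Theorem~\ref{thm:strat-triangulate} and the bottom-up triangulation discussed above, each closed stratum $\overline{S_i}$ is a compact manifold with boundary triangulated by a subcomplex $T_i$, with $\partial\overline{S_i}$ triangulated by the union $T_{i-1}$ of the lower closed strata lying in its frontier. Because $f$ is a stratified Morse function, $f|_{\overline{S_i}}$ is a Morse function on this manifold with boundary whose interior critical points are exactly those of $f|_{S_i}$, with Morse vector $\mathbf{c}^i$.

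The key observation is that one should apply part (b) of Theorem~\ref{thm:discsmoothmorse}, not part (a). For each $\overline{S_i}$ this produces an integer $r_i$ so that the $r_i$-th barycentric subdivision of $T_i$ supports a \emph{boundary-critical} discrete Morse function with exactly $c^i_{d_i-j}$ critical interior $j$-faces. Boundary-criticality is what makes the gluing work: since every cell of $\partial\overline{S_i}$ (i.e.\ every cell lying in a lower stratum) is critical for the restriction, no gradient pair ever crosses from the open stratum $T_i\setminus T_{i-1}$ into its frontier. Hence restricting this function to the open stratum leaves all interior pairings intact and, exactly as in Lemma~\ref{lem:dmf-restrict}, the restriction is a discrete Morse function on the stratum in the sense of Definition~\ref{def:dmf-on-strata}. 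Moreover the interior critical cells, read off by increasing dimension, give the discrete Morse vector $(c^i_{d_i},\dots,c^i_0)=\mathbf{c}^*_i$, which is precisely conclusion (b); the reversal of the Morse vector is the discrete shadow of Poincar\'e--Lefschetz duality and is built into part (b) of Benedetti's theorem.

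To assemble a single function on one common subdivision I would set $r=\max_i r_i$. Since barycentric subdivision is compatible with passing to subcomplexes, subdividing $T$ to $T^{(r)}$ simultaneously subdivides the whole filtration $T_0\subset T_1\subset\cdots\subset T_d=T$ and preserves the frontier relations, so the collection $\{T_i^{(r)}\setminus T_{i-1}^{(r)}\}$ is a stratification in the sense of Definition~\ref{def:stratified-sc}; this gives conclusion (a). I would then transport each per-stratum function from level $r_i$ up to level $r$ using Lemma~\ref{lem:refinedmf}, which keeps the critical cells (hence the discrete Morse vectors) in dimension-preserving bijection. Finally, because the open strata are disjoint and no pair crosses a frontier, taking the union of the resulting discrete gradient vector fields over all strata and choosing any compatible function values (the values on distinct strata need not be related) yields a pair $(F,s)$ satisfying Definition~\ref{def:dsmf}, with the required restricted Morse vectors.

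The hard part will be the reconciliation step: guaranteeing that the \emph{boundary-critical} property, and the value $c^i_{d_i-j}$ of each interior critical count, survive the passage from the stratum-specific $r_i$ to the common $r$. Lemma~\ref{lem:refinedmf} as stated preserves global critical cells, but I would need a relative version for the pair $(T_i,T_{i-1})$ ensuring that cells of the boundary subcomplex remain critical after further subdivision. I expect this to follow either from applying Lemma~\ref{lem:refinedmf} to the closed stratum and its boundary subcomplex separately, or from the persistence-under-subdivision inherent in Benedetti's construction (if a boundary-critical function exists at level $r_i$, one exists at every larger level). Verifying this monotonicity carefully is where the real work lies; once it is in place, the acyclicity of the glued gradient field across strata is automatic from Theorem~\ref{thm:dvf-union}.
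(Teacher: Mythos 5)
Your proposal is correct in substance and follows the same strategy as the paper: apply part (b) of Theorem~\ref{thm:discsmoothmorse} to each closed stratum, use boundary-criticality so that no gradient pair crosses a frontier, restrict to the open strata to read off the reversed Morse vectors, and glue the per-stratum gradients into a discrete stratified Morse function. The one genuine difference is how the common subdivision index $r$ is assembled, and it lands exactly on the step you flag as ``the hard part.'' You take $r=\max_i r_i$ with each $r_i$ computed from the original $T_i$, which obliges you to prove that a boundary-critical discrete Morse function with the prescribed vector at level $r_i$ persists to every finer level --- and, as you note, Lemma~\ref{lem:refinedmf} alone does not give this, since after subdivision the boundary has more cells than the one-to-one correspondence of critical cells can account for. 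The paper instead proceeds strictly bottom-up with $r=r_0+\cdots+r_d$: at stage $i+1$ Benedetti's theorem is applied to $T_{i+1}$ in its already $(r_0+\cdots+r_i)$-times-subdivided form (legitimate, since that is again a PL triangulation), so the newest stratum never needs to be transported to a finer level; only the previously processed lower strata must survive the additional $r_{i+1}$ subdivisions, which the paper handles by citing Lemma~\ref{lem:refinedmf}. Even there the paper is terse --- the lemma is invoked stratum by stratum on pieces that are not subcomplexes, so the relative, stratum-respecting refinement you identify is implicitly assumed in the paper as well. Reorganizing your argument as a sequential induction with summed subdivision indices would reduce your outstanding obligation to exactly the one the paper incurs and make the two proofs essentially identical.
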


\begin{proof} Keeping in mind the discussion at the end of Section \ref{subsec:whitney}, we proceed as follows. The $0$-stratum $S_0$ is a smooth manifold. By Theorem \ref{thm:discsmoothmorse} we may choose $r_0$ so that the $r_0$-th subdivision of $T_0$ admits a (boundary critical) discrete Morse function with discrete Morse vector ${\mathbf c}_0^*$. (In this case we could also find a discrete Morse function with vector ${\mathbf c}_0$ since $S_0$ has no boundary, but this is not true moving forward). We now proceed inductively. Suppose the result is true for stratum $i \ge 0$, and consider the $r$-th subdivision of $T$, where $r=r_0+\cdots +r_i$. This means that we have stratified $T_i$ by the various $T_i\setminus T_{i-1}$ and we have a discrete stratified Morse function $F_i$ satisfing condition (b) above on $T_i$. We know that $S_{i}\subseteq \overline{S}_{i+1}$; in fact, it lies inside the boundary of $\overline{S}_{i+1}$. Again by Theorem \ref{thm:discsmoothmorse}, there is an integer $r_{i+1}$ so that the $r_{i+1}$-th subdivision of $T_{i+1}$ has a boundary critical discrete Morse function with vector ${\mathbf c}_{i+1}^*$. Observe that this requires subdivision of the subcomplex $T_i$, but by Lemma \ref{lem:refinedmf}, this subdivision of $T_i$ supports a discrete Morse function with the same Morse vector. This completes the proof.
\end{proof}

\section{Generating Discrete Stratified Morse Functions from Point Data}\label{sec:pointdata}

In this section, we solve our first motivational problem in Section~\ref{sec:introduction}. 
That is, \textbf{given a simplicial complex $K$ equipped with an injective function on its vertices $f: K_0 \to \Rspace$, can we extend $f$ to a discrete stratified Morse function $\tilde{f}$ on $K$}? 

An algorithm to extend $f$ on $K_0$ to a discrete Morse function $f$ on $K$ was presented in~\cite{KingKnudsonMramor2005}. 
In this section, we extend the work of King et al.~\cite{KingKnudsonMramor2005} to the setting of discrete stratified Morse theory. 
Let us first review the algorithm of~\cite{KingKnudsonMramor2005}. 
Since the function $f$ is injective, we may order the vertices. 
We begin with the vertex with smallest function value and proceed as follows. Given a vertex $v$, consider the lower link $K_v$ of $v$. If $K_v$ is empty then we know that $v$ is a local minimum and so we make $v$ critical. Otherwise, we restrict $f$ to $K_v$ and iteratively run the algorithm on $K_v$. During this iteration we take the extra step of canceling all possible gradient paths; that is, if there is a unique gradient path between two critical cells we reverse it to eliminate those critical cells. We then find the critical vertex $w$ in $K_v$ with smallest function value and pair $v$ with the edge $[v, w]$ (this makes sense as it should be the steepest edge away from $v$). For each regular pair $\sigma<\tau$ in $K_v$ we then pair $v\ast\sigma$ with $v\ast\tau$, and for each critical cell $\alpha\ne w$ in $K_v$ we make $v\ast\alpha$ critical. The resulting discrete vector field has no directed loops and is therefore a discrete gradient.  

To bring this into the stratified setting, we begin by assuming that we already have a stratification $\Scal=\{S_i\}$ of the complex $K$; let $s:K\to\Scal$ be the associated assignment map. Extend the partial order on $\Scal$ to a linear order if necessary and write the strata as $S_0<S_1<\cdots <S_n$. Given the function $f$ on $K_0$, consider the function $\text{maxf}$ on $K$ defined by setting $\text{maxf}(\sigma) = \max_{v\in\sigma} f(v)$. We then proceed as follows.

\begin{enumerate}\denselist
    \item The stratum $S_0$, being minimal in the order, is a subcomplex of $K$ (Lemma~\ref{lem:minimal-element}). Use the algorithm of ~\cite{KingKnudsonMramor2005} to generate a discrete Morse function $f_0$ on $S_0$ extending the restriction of $f$ to the vertices of $S_0$. We may choose such an extension to be arbitrarily close to the function $\text{maxf}$ (\cite{KingKnudsonMramor2005}, Theorem 3.4).
    
    \item Assume inductively that we have defined an extension $f_i$ on $S_i$, $i\ge 0$, that is a discrete stratified Morse function on $S_i$. The algorithm of~\cite{KingKnudsonMramor2005} works on $S_{i+1}\setminus S_i$ to generate a discrete Morse function on this space, with the following modification. Simplices adjacent to the boundary of $S_{i+1}$ may not be considered by the algorithm if the lower link of a vertex is empty. We therefore declare that all simplices that do not get considered remain unpaired (critical).
    \item In the end we obtain a discrete stratified Morse function $\tilde{f}:K\to\Rspace$ extending $f$.
\end{enumerate}

\begin{remark}
This algorithm leaves all simplices $\sigma$ having a face $\tau<\sigma$ with $s(\sigma)\ne s(\tau)$ critical. That is, the simplices in each stratum having a face in the stratum's frontier will be left unpaired by the algorithm.
\end{remark}

It is not clear that we can choose $\tilde{f}$ to be arbitrarily close to $\text{maxf}$ on all of $K$. Indeed, if the values of $f$ on lower strata are much larger than on higher strata it may not be possible to find such an extension in the inductive step. Moreover, in the inductive step, it could happen that a vertex in $S_{i+1}$ has an empty lower link, either because all its neighbors lie in $S_{i+1}$ and have higher values or because some of its neighbors lie in a lower stratum and are therefore not considered by the algorithm of \cite{KingKnudsonMramor2005}. This will force the vertex to be critical and in the latter case the adjacent simplices will be made critical as well, therefore making it impossible to keep associated function values close to the function $\text{maxf}$.

We do have the following curious result, however.

\begin{theorem}\label{thm:dsmfextdmf}
We may choose an extension $\tilde{f}:K\to\Rspace$ of $f$ that is a discrete Morse function on all of $K$.
\end{theorem}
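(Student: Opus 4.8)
The goal is to extend the vertex function $f:K_0\to\Rspace$ to an honest discrete Morse function on all of $K$, ignoring the stratification entirely. The key observation is that the algorithm of King et al.~\cite{KingKnudsonMramor2005} does not actually require its input to be a subcomplex in any essential way once we are willing to process the whole complex $K$ at once, rather than stratum by stratum. So the plan is to simply run the original King--Knudson--Mramor algorithm directly on all of $K$, using the vertex ordering induced by $f$, and produce a global discrete gradient vector field $V$ whose associated discrete Morse function $\tilde f$ extends $f$.

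\textbf{Key steps.}
First I would recall that the input to Theorem~\ref{thm:dsmfextdmf} is only the injective vertex function $f:K_0\to\Rspace$; the stratification $\Scal$ plays no role in the conclusion, since we are permitted to produce a function that is Morse on \emph{all} of $K$ rather than one that respects the strata. Second, I would invoke the algorithm of~\cite{KingKnudsonMramor2005} verbatim: order the vertices by increasing value of $f$, and for each vertex $v$ process its lower link $K_v$ recursively, pairing $v$ with the edge to the minimal critical vertex $w$ in $K_v$, coning regular pairs $\sigma<\tau$ in $K_v$ to pairs $v\ast\sigma < v\ast\tau$, and declaring $v\ast\alpha$ critical for each remaining critical cell $\alpha\ne w$. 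Third, I would cite the guarantee from~\cite{KingKnudsonMramor2005} that the resulting discrete vector field $V$ contains no nontrivial closed $V$-paths and is therefore a genuine discrete gradient. Finally, by the converse to Forman's theorem recalled in Section~\ref{sec:prelim} (a discrete vector field with no nontrivial closed $V$-paths is the gradient of some discrete Morse function), there is a discrete Morse function $\tilde f:K\to\Rspace$ with gradient $V$, and by construction $\tilde f$ extends $f$ on the vertices (or agrees with it up to the order-preserving adjustment furnished by Theorem~3.4 of~\cite{KingKnudsonMramor2005}).

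\textbf{The main obstacle.}
The genuinely interesting point — and the reason the result is called ``curious'' — is reconciling it with the preceding discussion, which explained why the \emph{stratified} extension cannot in general be kept close to $\text{maxf}$ and why it leaves boundary simplices critical. The subtlety I expect to handle carefully is that the obstruction there was not to the existence of \emph{some} global discrete Morse function, but to producing one whose gradient \emph{respects the strata}: the stratified algorithm forces simplices straddling a frontier to be critical, inflating the critical count, whereas the unstratified algorithm is free to pair a simplex in one stratum with a coface in another (exactly the ``orthogonal to the strata'' phenomenon noted after Lemma~\ref{lem:dmf-restrict} and in Theorem~\ref{thm:dvf-union}). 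Thus the heart of the proof is the conceptual clarification that dropping the stratum-compatibility requirement removes precisely the constraint that caused the difficulty, so the original~\cite{KingKnudsonMramor2005} algorithm applies unchanged; there is no hard new construction, only the recognition that the global problem is strictly easier than the stratified one we had been attempting.
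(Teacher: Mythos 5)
Your argument is internally sound, but it takes a genuinely different route from the paper's, and in doing so it arguably proves a different (and in context weaker) statement. You discard the stratification and run the original algorithm of \cite{KingKnudsonMramor2005} on all of $K$ at once; the conclusion then follows, but it is essentially just the main theorem of \cite{KingKnudsonMramor2005} restated, and the gradient you obtain may freely pair simplices across strata. The paper instead keeps the stratified construction: it observes that the inductive step already produces a discrete gradient on each $S_{i+1}\setminus S_i$ (leaving frontier-adjacent cells critical), invokes Theorem~\ref{thm:dvf-union} to conclude that the \emph{union} of these per-stratum gradients is a global discrete gradient on $K$, and then chooses $\tilde f$ compatible with that gradient. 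This is the content that makes the result ``curious'': the very same $\tilde f$ output by the stratified algorithm --- whose gradient never crosses a stratum boundary --- can be chosen to be an honest discrete Morse function on all of $K$, and its critical cells are exactly those of the stratified construction. Note that in the surrounding text $\tilde f$ denotes the output of the stratified algorithm of this section, so the theorem is best read as a statement about that specific gradient, not about the existence of \emph{some} extension. Your ``main obstacle'' paragraph correctly identifies the tension but resolves it in the opposite direction: rather than dropping stratum-compatibility to make the problem easier, the paper shows stratum-compatibility costs nothing, because Theorem~\ref{thm:dvf-union} guarantees the stratum-respecting vector field is already acyclic globally. If you intend your version to stand, you should at least acknowledge that it does not recover the correspondence between the critical cells of $\tilde f$ and those of the stratified algorithm, which is what the theorem is used for.
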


\begin{proof}
The algorithm of \cite{KingKnudsonMramor2005} actually generates a discrete gradient vector field from the function $f:K_0\to\Rspace$. There is then a great deal of flexibility in choosing an extension $\tilde{f}$. Observe the following: in the inductive step we actually first generate a discrete gradient on $S_{i+1}\setminus S_i$ which happens to leave cells on the boundary critical (i.e., those simplices having a face in $S_i$ remain unpaired). We know that the union of these gradients is a discrete gradient on all of $K$ (Theorem~\ref{thm:dvf-union}) and we may then choose a discrete Morse function $\tilde{f}$ extending $f:K_0\to\Rspace$ compatible with this gradient.
\end{proof}

Now, if we are not given a stratification of $K$, there are several ways we could proceed. We can choose some extension of $f$ to $K$, such as the function $\text{maxf}$ or the piecewise linear extension of $f$ (take the average value of the vertices of a simplex). Employing the algorithm of Section \ref{subsec:dsmf-algorithm} yields a stratification on which the extension is a discrete stratified Morse function. We could stop there, or we could discard the chosen extension and implement the algorithm above. Another approach is to use the algorithm of \cite{Nanda2019} to produce the coarsest stratification of $K$ into cohomology manifolds and then proceed using the algorithm above. It is not clear which method is preferable; this will be the subject of future research.

\section{Future Work}
\label{sec:future}

While we have laid the foundations for the study of a discrete version of stratified Morse theory and provided some examples and basic results, much work remains to be done. The most obvious question is:  can we formulate and prove discrete versions of the main theorems of stratified Morse theory, Theorem \ref{theorem:smt-a} and Theorem \ref{theorem:smt-b}? These relate the topology of sublevel sets in a stratified space to the critical points of a stratified Morse function.

There are a number of technical difficulties to be overcome to make this work. One hurdle is that discrete stratified Morse functions are not continuous in any reasonable sense. The function values may jump wildly from stratum to stratum and so examining the sublevel sets (or sublevel complexes) around a critical point in a lower-dimensional stratum may be problematic. So to have any hope, we will have to limit the types of functions we consider, such as insisting that the discrete function be a reasonable approximation to a smooth one.

There is also the matter of trying to understand the discrete analogue of the normal Morse data (see Appendix \ref{sec:prelim-SMT}). Since we are working with arbitrary simplicial complexes, it is not at all clear what the proper notion of ``normal slice" is, and so we must seek alternative formulations (see Theorem \ref{theorem:smt-c}). All of this is the focus of current research and the results will be presented elsewhere. 

\section*{Acknowledgments}
\label{sec:ack}

Bei Wang is supported in part by NSF IIS-1513616 and NSF ABI-1661375. 
We would like to thank Vin de Silva and Davide Lofano for their valuable comments regarding the definition of stratified simplicial complexes. 


\appendix
\newpage
\section{Preliminaries on classical and stratified Morse theory}
\label{sec:prelim-SMT}

For completeness, we include here a review of the basics of (stratified) Morse theory.
Given a topological space $\Xspace$, studying the relation between the critical points of a Morse function (or a stratified Morse function) on $\Xspace$ and the topology of $\Xspace$ requires more care in the smooth setting in comparison with the discrete setting. 
Most of our review originates from the seminal work of Goresky and MacPherson~\cite{GoreskyMacPherson1988}. 

\subsection{Classical Morse theory}
Let $\Xspace$ be a compact, differentiable $d$-manifold and $f: \Xspace \to \Rspace$ a smooth real-valued function on $\Xspace$.
For a given value $a \in \Rspace$, let $\Xspace_a = f^{-1}(-\infty,a] = \{x \in \Xspace \mid f(x) \leq a\}$ denote the \emph{sublevel set}.  
Morse theory studies the topological changes in $\Xspace_a$ as $a$ varies. 

\para{Morse functions.}
A point $x \in \Xspace$ is \emph{critical} if the derivative at $x$ equals zero. 
The value of $f$ at a critical point is a \emph{critical value}.
All other points are \emph{regular points} and all other values are \emph{regular values} of $f$. 
A critical point $x$ is \emph{non-degenerate} if the Hessian, the matrix of second partial derivatives at the point, 
is invertible. 
The \emph{Morse index} of the non-degenerate critical point $x$ is the number of negative eigenvalues in the Hessian matrix, denoted as $\lambda(x)$. 
\begin{definition}
\label{def:mf}
$f: \Xspace \to \Rspace$ is a \emph{Morse function} if all critical points are non-degenerate and its values at the critical points are distinct. 
\end{definition}

\para{Results.}
We now review two fundamental results of classical Morse theory (CMT). 

\begin{theorem}[CMT Theorem A] {\em (}\cite{GoreskyMacPherson1988}, p.~4; \cite{EdelsbrunnerHarer2010}, p.~129{\em )}
\label{theorem:cmt-a}
Let $f: \Xspace \to \Rspace$ be a differentiable function on a compact smooth manifold $\Xspace$. 
Let $a< b$ be real numbers such that $f^{-1}[a,b]$ is compact and contains no critical points of $f$. 
Then $\Xspace_a$ is diffeomorphic to $\Xspace_b$. 
\end{theorem}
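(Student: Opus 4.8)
The plan is to realize the diffeomorphism as the time-$(b-a)$ map of a suitably rescaled gradient flow. First I would fix an auxiliary Riemannian metric on $\Xspace$ and form the gradient vector field $\nabla f$. Because $f^{-1}[a,b]$ is compact and contains no critical points, $\|\nabla f\|^2$ is bounded below by a positive constant there, so the function $1/\|\nabla f\|^2$ is smooth and positive on a neighborhood of the slab $f^{-1}[a,b]$.

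Next I would rescale to obtain a flow that raises $f$-values at unit speed. Choose a smooth cutoff $\rho:\Xspace\to[0,1]$ that equals $1$ on $f^{-1}[a,b]$ and is supported in a slightly larger compact neighborhood avoiding the critical set, and set
\[
X = \frac{\rho}{\|\nabla f\|^2}\,\nabla f .
\]
Then $df(X)=\langle \nabla f, X\rangle=\rho$, so $X$ is a smooth vector field with $df(X)=1$ on $f^{-1}[a,b]$ and $0\le df(X)\le 1$ everywhere. Since $\Xspace$ is compact, $X$ is complete and its flow $\{\phi_t\}_{t\in\Rspace}$ is a one-parameter group of diffeomorphisms of $\Xspace$.

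The heart of the argument is tracking $f$ along integral curves. For any $q$, the map $t\mapsto f(\phi_t(q))$ has derivative $df(X)(\phi_t(q))$, which equals $1$ whenever $\phi_t(q)\in f^{-1}[a,b]$ and lies in $[0,1]$ in general. I would then verify that $\phi_{b-a}$ restricts to a diffeomorphism $\Xspace_a\to\Xspace_b$ by checking both inclusions. For $q\in\Xspace_a$ the global bound $df(X)\le 1$ gives $f(\phi_{b-a}(q))\le f(q)+(b-a)\le b$, so $\phi_{b-a}(\Xspace_a)\subseteq\Xspace_b$. For the reverse inclusion I would flow backward: if $f(q)\le a$ then $f$ only decreases, while if $a<f(q)\le b$ the unit-rate identity on the slab shows the curve reaches level $a$ after time $f(q)-a\le b-a$ and continues not to increase thereafter, so in every case $f(\phi_{-(b-a)}(q))\le a$, i.e.\ $\phi_{-(b-a)}(\Xspace_b)\subseteq\Xspace_a$. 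Since $\phi_{b-a}$ is a global diffeomorphism, applying it to the second inclusion yields $\Xspace_b\subseteq\phi_{b-a}(\Xspace_a)$, and together with the first inclusion this forces $\phi_{b-a}(\Xspace_a)=\Xspace_b$, completing the proof.

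The main obstacle is not the flow construction itself but the bookkeeping in this last step: one must choose the cutoff $\rho$ so that $f$ is monotone along the flow at rate at most $1$ globally and exactly $1$ on the slab, and then argue carefully about integral curves that start below level $a$ or cross the lower boundary of the slab, so that the time-$(b-a)$ map lands precisely on $\Xspace_b$ rather than over- or undershooting. Completeness of the flow comes for free from compactness of $\Xspace$, so the only delicate point is this rate control together with the handling of the boundary level sets $f^{-1}(a)$ and $f^{-1}(b)$.
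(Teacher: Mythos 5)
The paper does not prove this statement; it is quoted as a classical preliminary with citations to Goresky--MacPherson and Edelsbrunner--Harer, so there is no internal proof to compare against. Your argument is the standard one (essentially Theorem 3.1 of Milnor's \emph{Morse Theory}): rescale the gradient by $\rho/\|\nabla f\|^2$ so that $df(X)=\rho$ is $1$ on the slab and lies in $[0,1]$ everywhere, use compactness for completeness of the flow, and check both inclusions for the time-$(b-a)$ map. The construction of $X$ is legitimate (the critical set is closed and disjoint from the compact slab, so such a cutoff exists and $X$ extends by zero), and your monotonicity bookkeeping for the forward and backward flows, including trajectories starting below level $a$, is correct; the proof is sound.
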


On the other hand, let $f$ be a Morse function on $\Xspace$. 
We consider two regular values $a < b$ such that $f^{-1}[a,b]$ is compact but contains one critical point $u$ of $f$, 
with index $\lambda$. 
Then $\Xspace_b$ has the homotopy type of $\Xspace_a$ with a $\lambda$-cell (or $\lambda$-handle, the smooth analogue of a $\lambda$-cell) attached along its boundary
(\cite{GoreskyMacPherson1988}, page 5; \cite{EdelsbrunnerHarer2010}, page 129). 
We define \emph{Morse data} for $f$ at a critical point $u$ in $\Xspace$ to be a pair of topological spaces $(A,B)$ where $B \subset A$ with the property that as a real value $c$ increases from $a$ to $b$ (by crossing the critical value $f(u)$), the change in $\Xspace_c$ can be described by gluing in $A$ along $B$ \cite{GoreskyMacPherson1988} (page 4). 
Morse data measures the topological change in $\Xspace_c$ as $c$ crosses critical value $f(u)$. 
We have the second fundamental result of Morse theory,

\begin{theorem}[CMT Theorem B] {\em (}\cite{GoreskyMacPherson1988}, p.~5;
\cite{Matsumoto1997}, p.~77{\em )}
\label{theorem:cmt-b}
Let $f$ be a Morse function on $\Xspace$. 
Consider two regular values $a < b$ where $f^{-1}[a,b]$ is compact and contains one critical point $u$ of $f$, 
with index $\lambda$. 
Then $\Xspace_b$ is diffeomorphic to the space $\Xspace_a \cup_{B} A$, where 
$(A,B) = (D^{\lambda} \times D^{d-\lambda}, (\bdr D^{\lambda}) \times D^{d-\lambda})$ is the Morse data, $d$ is the dimension of $\Xspace$, $\lambda$ is the Morse index of $u$,
$D^{k}$ denotes the closed $k$-dimensional disk, and $\bdr D^{k}$ is its boundary.  
\end{theorem}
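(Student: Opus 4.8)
The plan is to follow the classical handle-attachment argument in the style of Milnor, combining the Morse Lemma with the gradient flow and the already-established Theorem~\ref{theorem:cmt-a}. First I would invoke the Morse Lemma to choose local coordinates $(x_1,\dots,x_d)$ centered at $u$ in a chart $U$ in which $f$ takes the normal form
$f = c - (x_1^2 + \cdots + x_\lambda^2) + (x_{\lambda+1}^2 + \cdots + x_d^2)$, where $c = f(u)$. Writing $x_- = (x_1,\dots,x_\lambda)$ and $x_+ = (x_{\lambda+1},\dots,x_d)$, this reads $f = c - |x_-|^2 + |x_+|^2$, so the $x_-$ directions are the $\lambda$-dimensional descending (unstable) directions. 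Since $u$ is the only critical point in $f^{-1}[a,b]$ and there are no critical values in $(a,c)\cup(c,b)$, Theorem~\ref{theorem:cmt-a} lets me replace $a$ and $b$ by $c-\ep$ and $c+\ep$ for a sufficiently small $\ep>0$ whose preimage strip lies inside $U$; the sublevel sets for the new values are diffeomorphic to the originals, so it suffices to treat $a=c-\ep$, $b=c+\ep$.

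Next I would exhibit the $\lambda$-handle explicitly in these coordinates. For suitably small radii I set $A = \{\,|x_-|^2 \le \ep,\ |x_+|^2 \le \eta\,\} \cong D^\lambda \times D^{d-\lambda}$, so that the attaching region $B = \{\,|x_-|^2 = \ep,\ |x_+|^2 \le \eta\,\} \cong (\bdr D^\lambda)\times D^{d-\lambda}$ lies in $\Xspace_a$ (on $B$ we have $f = c-\ep+|x_+|^2$, which after a small adjustment of $\eta$ relative to $\ep$ keeps $B$ within the sublevel set), while the core disk $D^\lambda\times\{0\}$ records precisely the index-$\lambda$ cell being attached. This matches the prescribed Morse data $(A,B) = (D^\lambda\times D^{d-\lambda},\,(\bdr D^\lambda)\times D^{d-\lambda})$.

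The heart of the argument is then to build a diffeomorphism $\Xspace_b \cong \Xspace_a \cup_B A$. Outside the chart I would use the flow of a gradient-like vector field for $f$ (one equal to $\nabla f$ away from $U$) to push $\Xspace_b$ down onto $\Xspace_a$ along trajectories avoiding a neighborhood of $u$; Theorem~\ref{theorem:cmt-a} guarantees this gives a diffeomorphism on the complement of the handle. Inside $U$ the explicit coordinate model shows directly that $\Xspace_b\cap U$ is obtained from $(\Xspace_a\cap U)\cup A$ by a controlled isotopy, and the two descriptions are matched along their common boundary to assemble the global diffeomorphism.

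The step I expect to be the main obstacle is upgrading this from a mere homotopy equivalence to an honest diffeomorphism. The naive gluing $\Xspace_a\cup_B A$ acquires corners along $\bdr A \cap \bdr\Xspace_a$, whereas $\Xspace_b$ is a smooth manifold with boundary, so the identification cannot be literal. The standard resolution is to smooth these corners: either round the handle by interpolating the defining inequalities, or replace $f$ by a modified function $\tilde f$ that agrees with $f$ outside $U$ but pushes the critical value below $c-\ep$, so that $\tilde f^{-1}(-\infty,c-\ep]$ is diffeomorphic to $\Xspace_b$ and visibly decomposes as the handle attachment. Verifying that this corner-smoothing yields a manifold genuinely diffeomorphic to $\Xspace_b$, rather than only homotopy equivalent, is the delicate technical point where the careful estimates are concentrated.
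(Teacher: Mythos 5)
The paper does not actually prove this statement: CMT Theorem B appears in Appendix A as quoted classical background, with the proof deferred to Goresky--MacPherson (p.~5) and Matsumoto (p.~77). So the only meaningful comparison is with the standard argument in those references, which is the Milnor-style handle attachment you outline: Morse Lemma normal form, reduction to $a = c-\ep$, $b = c+\ep$ via Theorem~\ref{theorem:cmt-a}, and an explicit handle in local coordinates. Your overall architecture matches that proof.

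However, one step in your proposal fails as written. You claim that after ``a small adjustment of $\eta$ relative to $\ep$'' the attaching region $B = \{|x_-|^2 = \ep,\ |x_+|^2 \le \eta\}$ lies in $\Xspace_a$. It does not, for any $\eta > 0$: on $B$ we have $f = c - \ep + |x_+|^2$, which is strictly greater than $a = c - \ep$ whenever $x_+ \ne 0$. Only the descending sphere $\{|x_-|^2 = \ep,\ x_+ = 0\}$ lies in $\Xspace_a$ (indeed in $f^{-1}(a)$). So the naive coordinate box is not attached to $\Xspace_a$ along $B$ at all; the handle must be ``bent'' so that its attaching face becomes a tubular neighborhood of the descending sphere inside the level hypersurface $f^{-1}(a)$. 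This is precisely what Milnor's bump-function modification accomplishes: replacing $f$ by $\tilde f = f - \mu(|x_-|^2 + 2|x_+|^2)$ for a suitable cutoff $\mu$ gives a function agreeing with $f$ outside the chart, with the same critical points, satisfying $\tilde f^{-1}(-\infty, c+\ep] = \Xspace_b$, and whose sublevel set $\tilde f^{-1}(-\infty, c-\ep]$ is $\Xspace_a$ with a bent handle region adjoined; Theorem~\ref{theorem:cmt-a} applied to $\tilde f$ on $[c-\ep, c+\ep]$ then yields the diffeomorphism. In other words, the device you introduce in your final paragraph as a fix for corners is not an optional refinement bolted on at the end: it (or an equivalent construction of the handle as a region bounded by level sets and gradient trajectories) is what produces the attachment in the first place, and the residual technical work is to show that this bent, corner-smoothed region is diffeomorphic as a pair to $(D^{\lambda} \times D^{d-\lambda}, (\bdr D^{\lambda}) \times D^{d-\lambda})$. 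With your middle step replaced by this construction, the outline becomes the standard proof of the cited sources.
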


\subsection{Stratified Morse Theory}
\label{sec:smt}
Morse theory can be generalized to certain singular spaces, in particular to Whitney stratified spaces~\cite{GoreskyMacPherson1988,McTague2005}. 

\para{Stratified Morse function.}
Let $\Xspace$ be a compact $d$-dimensional Whitney stratified space embedded in some smooth manifold $\Mspace$. 
A function on $\Xspace$ is \emph{smooth} if it is the restriction to $\Xspace$ of a smooth function on $\Mspace$. 
Let $\bar{f}: \Mspace \to \Rspace$ be a smooth function.
The restriction $f$ of $\bar{f}$ to $\Xspace$ is \emph{critical} at a point $x \in \Xspace$ iff it is critical when restricted to the particular manifold piece which contains $x$ \cite{Bendich2008}.
A \emph{critical value} of $f$ is its value at a critical point.  
\begin{definition}
\label{def:smf}
$f$ is a \emph{{stratified Morse function}} if (\cite{Bendich2008}, \cite{GoreskyMacPherson1988} page 13): 
\begin{itemize}
\item[1.] All critical values of $f$ are distinct.
\item[2.] At each critical point $u$ of $f$, the restriction of $f$ to the stratum $S$ containing $u$ is non-degenerate.  
\item[3.] The differential of $f$ at a critical point $u \in S$ does not annihilate (destroy) any limit of tangent spaces 
to any stratum $S'$ other than the stratum $S$ containing $u$.
\end{itemize}
\end{definition}
Condition 1 and 2 imply that $f$ is a Morse function when restricted to each stratum in the classical sense. 
Condition 2 is a non-degeneracy requirement in the tangential directions to $S$.
Condition 3 is a non-degeneracy requirement in the directions normal to $S$ \cite{GoreskyMacPherson1988} (page 13). 

\para{Results.}
Now we state the two fundamental results of stratified Morse theory. 

\begin{theorem}[SMT Theorem A] {\em (}\cite{GoreskyMacPherson1988}, p.~6{\em )}
\label{theorem:smt-a}
Let $\Xspace$ be a Whitney stratified space and $f: \Xspace \to \Rspace$ a stratified Morse function. 
Suppose the interval $[a,b]$ contains no critical values of $f$. 
Then $\Xspace_a$ is diffeomorphic to $\Xspace_b$. 
\end{theorem}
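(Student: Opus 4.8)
The plan is to mirror the proof of CMT Theorem A (Theorem~\ref{theorem:cmt-a}), where one slides $\Xspace_b$ onto $\Xspace_a$ by integrating a gradient-like vector field whose $f$-derivative is constant. The new difficulty is that $\Xspace$ is singular, so a naive gradient flow on the ambient manifold $\Mspace$ need neither respect the stratification nor be continuous as it crosses from one stratum into the closure of another. The remedy, which is exactly the content of the special case of Thom's first isotopy lemma relevant here, is to replace the ambient gradient by a \emph{controlled} stratified vector field. First I would record that the hypotheses make $f$ a proper stratified submersion over $[a,b]$: since $\Xspace$ is compact, $f^{-1}[a,b]$ is compact and $f|_{f^{-1}[a,b]}$ is proper; and since $[a,b]$ contains no critical values, for every $x$ with $f(x)\in[a,b]$ the differential of $f$ restricted to the stratum $S$ containing $x$ is surjective onto $\Rspace$, so $f|_S$ is a submersion on $f^{-1}[a,b]\cap S$ for every stratum $S$.

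The central step is the construction of the controlled vector field. On each stratum $S$ I would choose a smooth vector field $v_S$ tangent to $S$ with $df(v_S)=1$ on $f^{-1}[a,b]\cap S$, which exists precisely because $f|_S$ is a submersion there. I then assemble the family $\{v_S\}$ into a single vector field $v$ on $f^{-1}[a,b]$ by induction on the dimension (equivalently the depth) of the strata, using Thom--Mather control data: a system of tubular neighborhoods $T_S$ of each stratum together with tubular projections $\pi_S$ and tube functions $\rho_S$. Inside $T_S$ I require $v$ to be compatible with the projection, $d\pi_S(v)=v\circ\pi_S$, and to preserve the tube function, $v(\rho_S)=0$; outside the tubes I only demand $df(v)=1$, and I patch with a partition of unity. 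The Whitney conditions of Definition~\ref{conditionab} are exactly what guarantees that such compatible control data exist and that the patched lift still satisfies $df(v)=1$ everywhere on $f^{-1}[a,b]$.

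Having produced $v$, I would integrate it. Because $v$ is tangent to every stratum and controlled, its flow $\phi_t$ is defined on $f^{-1}[a,b]$ (so long as $f(\phi_t(x))\in[a,b]$), is stratum-preserving, and is continuous on $f^{-1}[a,b]$ while remaining smooth on each stratum; this continuity across strata is the whole point of controlledness. Since $df(v)=1$ we have $f(\phi_t(x))=f(x)+t$, so $\phi_{b-a}$ carries $f^{-1}(a)$ homeomorphically onto $f^{-1}(b)$ and trivializes the map, giving a stratified homeomorphism $f^{-1}[a,b]\cong f^{-1}(a)\times[a,b]$ that is a diffeomorphism on each stratum. Multiplying $v$ by a cutoff that vanishes below level $a$ and extending by the identity on $\Xspace_a$ then yields the desired stratified diffeomorphism $\Xspace_a\cong\Xspace_b$.

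The main obstacle is the second step: arranging that the flow crossing from a high-dimensional stratum into the closure of a lower-dimensional one be continuous. On a smooth manifold this is automatic, but on a Whitney stratified space one must invoke the full Thom--Mather machinery of control data, and the delicate point is verifying that the tube-compatibility conditions $d\pi_S(v)=v\circ\pi_S$ and $v(\rho_S)=0$ can be imposed \emph{simultaneously} with the lifting condition $df(v)=1$. The Whitney conditions (A) and (B) are precisely the hypotheses that make this compatibility achievable, which is why they, rather than merely the frontier condition, are needed here.
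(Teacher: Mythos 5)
The paper does not actually prove this statement: SMT Theorem A appears in the appendix as background quoted directly from Goresky--MacPherson (p.~6), so there is no in-paper proof to compare against. Your sketch is the standard argument --- reduce to Thom's first isotopy lemma by lifting $\partial/\partial t$ to a controlled stratified vector field $v$ with $df(v)=1$ assembled from Thom--Mather control data, integrate the controlled flow to trivialize $f$ over $[a,b]$, and cut off below level $a$ --- which is precisely how the cited source establishes the result, and your outline is correct, with the genuinely hard steps (existence of control data compatible with the proper stratified submersion, and continuity of the flow across strata) properly attributed to the Whitney conditions and Mather's theory rather than glossed over.
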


\begin{theorem}[SMT Theorem B] {\em (}\cite{GoreskyMacPherson1988}, p.~8 and p.~64{\em )}
\label{theorem:smt-b}
Let $f$ be a stratified Morse function on a compact Whitney stratified space $\Xspace$.
Consider two regular values $a < b$ such that $f^{-1}[a,b]$ is compact but contains one critical point $u$ of $f$. 
Then $\Xspace_b$ is diffeomorphic to the space $\Xspace_a \cup_B A$, where the Morse data $(A,B)$ is the product of the normal Morse data at $u$ and the tangential Morse data at $u$. 
\end{theorem}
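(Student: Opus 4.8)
The plan is to follow the strategy of Goresky and MacPherson: reduce the global statement to a local computation of Morse data in a small neighborhood of the single critical point $u$, and then show that this local data factors as a product of a tangential and a normal piece.

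First I would localize the problem. By SMT Theorem A (Theorem~\ref{theorem:smt-a}), the diffeomorphism type of $\Xspace_c$ is unchanged as $c$ ranges over any subinterval of $[a,b]$ containing no critical value, so all topological change is concentrated near $u$. Concretely, choose a small closed ball $B_\delta(u)$ in the ambient manifold $\Mspace$ and an $\varepsilon>0$ with $f(u)\in(a,b)$, and define the \emph{local Morse data} to be the pair $\bigl(\Xspace_{f(u)+\varepsilon}\cap B_\delta(u),\,\Xspace_{f(u)-\varepsilon}\cap B_\delta(u)\bigr)$, suitably intersected with the bounding sphere. A controlled-vector-field argument---integrating a vector field that is tangent to every stratum of $\Xspace$ and pushes the level sets of $f$ upward outside $B_\delta(u)$ (Thom's first isotopy lemma)---then identifies $\Xspace_b$ with $\Xspace_a$ with this local data glued on along its lower boundary $B$. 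Showing that the local data is independent of $\delta$, $\varepsilon$, and the chosen smooth extension $\bar f$ is itself a nontrivial point, which I would settle using the conical normal structure introduced next.

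Next I would set up the tangential/normal decomposition. Let $S$ be the stratum containing $u$ and put $s=\dim S$. By the Whitney conditions together with Thom--Mather local triviality, a neighborhood of $u$ in $\Xspace$ is stratum-preservingly homeomorphic to the product of a small disk tangent to $S$ with $\cone(L)$, where $L$ is the link of $u$ in the \emph{normal slice} $N=\Xspace\cap D_\perp$ cut out by a disk $D_\perp$ transverse to $S$ at $u$. The \emph{tangential Morse data} is the classical Morse data of the honest Morse function $f|_S$ at its nondegenerate critical point $u$ (condition~2 of Definition~\ref{def:smf}); by CMT Theorem B (Theorem~\ref{theorem:cmt-b}) it equals $\bigl(D^{\lambda}\times D^{s-\lambda},\,(\bdr D^{\lambda})\times D^{s-\lambda}\bigr)$, with $\lambda$ the Morse index of $f|_S$. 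The \emph{normal Morse data} is the local Morse data of $f|_N$ at $u$; condition~3 of Definition~\ref{def:smf}---nondegeneracy in the directions normal to $S$---is precisely what guarantees that $f|_N$ is well behaved at $u$ and that this normal piece does not depend on the choices made.

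The technical heart, and the step I expect to be the main obstacle, is the \emph{product structure theorem}: the local Morse data at $u$ is homeomorphic to the product of pairs
\[
(A,B)=\bigl(A_t\times A_n,\;(A_t\times B_n)\cup(B_t\times A_n)\bigr),
\]
where $(A_t,B_t)$ and $(A_n,B_n)$ are the tangential and normal data, respectively. The difficulty is that near $u$ the function $f$ is not literally the direct sum $f|_S\oplus f|_N$, so one must deform it into this normal form without altering its Morse data. Goresky and MacPherson achieve this with the ``moving the wall'' technique: one interpolates between the two faces of the boundary of the local model through a one-parameter family of level-like hypersurfaces and integrates a controlled vector field that respects the stratification along the entire family. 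The crux is verifying that this interpolation never sweeps across a new critical value, which relies on condition~3 to keep the differential of $f$ from annihilating limits of tangent spaces of the neighboring strata. Once the product structure is in hand, combining it with the globalization of the first step yields $\Xspace_b\cong\Xspace_a\cup_B A$ with $(A,B)$ the asserted product, completing the proof.
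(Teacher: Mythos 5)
The paper does not prove this statement: SMT Theorem B appears in Appendix~\ref{sec:prelim-SMT} purely as quoted background, imported from Goresky--MacPherson (pages 8 and 64 of \cite{GoreskyMacPherson1988}), and the surrounding text only supplies the definitions (normal slice, tangential and normal Morse data) needed to parse it. So there is no in-paper argument to compare yours against; the relevant comparison is with the proof in the cited book.

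Measured against that, your outline is faithful to the actual Goresky--MacPherson strategy: localization of the topological change near $u$ via controlled vector fields and Thom's first isotopy lemma, the definition of local Morse data and the (genuinely nontrivial) verification that it is independent of $\delta$, $\varepsilon$, and the extension $\bar f$, the tangential data coming from classical Morse theory applied to $f|_S$ (your appeal to condition~2 of Definition~\ref{def:smf} and Theorem~\ref{theorem:cmt-b} is exactly right), the normal data from $f|_N$ with condition~3 ensuring it is well posed, and finally the product structure theorem proved by ``moving the wall.'' But be clear about what you have written: it is a proof plan, not a proof. The two steps you yourself flag as the obstacles --- independence of choices for the local Morse data, and the moving-the-wall argument that never crosses a critical value --- are precisely the content that occupies the bulk of Part~I of \cite{GoreskyMacPherson1988}, and your sketch names them without executing them. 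In a context like this paper, where the theorem is used as a black box, citing it (as the authors do) is the right move; reproducing the proof would require carrying out those deformations in detail, not merely pointing at them.
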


To define tangential and normal Morse data, we have the following setup.
Let $\Xspace$ be a Whitney stratified subset of some smooth manifold $\Mspace$.
Let $f: \Xspace \to \Rspace$ be a stratified Morse function with a critical point $u$. 
Let $S$ denote the stratum of $\Xspace$ which contains the critical point $u$.
Let $N$ be a \emph{normal slice} at $u$,  that is, 
$N = \Xspace \cap N' \cap B^{\Mspace}_{\delta}(u)$, 
where $N'$ is a sub-manifold of $\Mspace$ which is traverse to each stratum of $\Xspace$, 
intersects the stratum $S$ in a single point $u$, and satisfies $\dime{S} + \dime{N'} = \dime{\Mspace}$.  
$B_\delta^{\Mspace}(u)$ is a closed ball of radius $\delta$ in $\Mspace$ based on a Riemannian metric on $\Mspace$. 
By Whitney's condition, if $\delta$ is sufficiently small then $\bdr B^{\Mspace}_{\delta}(u)$ will be transverse to each stratum of $\Xspace$, and to each stratum in $\Xspace \cap N'$,  fix such a $\delta > 0$ \cite{GoreskyMacPherson1988} (page 40). 

The \emph{tangential Morse data} for $f$ at $u$ is the pair 
$$(P, Q) = (D^{\lambda} \times D^{s-\lambda}, (\bdr D^{\lambda}) \times D^{s - \lambda}),$$
where $\lambda$ is the (classical) Morse index of $f$ restricted to $S$, $f|S$, at $u$, 
and $s$ is the dimensional of stratum $S$ \cite{GoreskyMacPherson1988} (page 65). 

The \emph{normal Morse data} is the pair
$$(J, K) = (N \cap f^{-1}[v-\ep, v+\ep], N \cap f^{-1}(v-\ep)),$$
where $f(u) = v$ and $\ep > 0$ is chosen such that $f | N$ has no critical values other than $v$ in 
the interval $[v - \ep, v+\ep]$ \cite{GoreskyMacPherson1988} (page 65).

The \emph{Morse data} is the topological product of the tangential and the normal Morse data, 
where the product of pairs is defined as $(A, B) = (P, Q) \times (J,K) = 
(P \times J , P \times K \cup Q \times J)$. 

Theorem~\ref{theorem:smt-b} corresponds to the Main theorem of~\cite{GoreskyMacPherson1988} (page 65), which  has the following homotopy consequences.
Suppose $\Xspace$ is a Whitney stratified space, $f: \Xspace \to \Rspace$ is a proper stratified Morse function, and $[a,b]$ contains no critical values except for a single isolated critical value $v \in (a,b)$ which corresponds to a critic point $p$ in some stratum $\Sspace$ of $\Xspace$. 
$\lambda$ is the Morse index of $f|_{\Sspace}$ at the point $p$.

\begin{theorem}[SMT Homotopy Consequences] {\em (}\cite{GoreskyMacPherson1988} (Section 3.12, p.~68{\em )}
\label{theorem:smt-c}
The space $\Xspace_{b}$ has the homotopy type of a space which is obtained from $\Xspace_{a}$ by attaching the pair 
$$(D^{\lambda}, \bdr D^{\lambda}) \times (\cone(l^{-}), l^{-}).$$
\end{theorem}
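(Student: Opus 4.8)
The plan is to derive this homotopy statement as a formal consequence of SMT Theorem~B (Theorem~\ref{theorem:smt-b}). That theorem already supplies a homeomorphism $\Xspace_b\cong \Xspace_a\cup_B A$, in which the local Morse data $(A,B)$ is the product of the tangential Morse data $(P,Q)$ and the normal Morse data $(J,K)$. The strategy is to replace each of these two factors by a homotopy-equivalent pair with a simpler description, and then to appeal to the homotopy invariance of the attaching (pushout) construction to transfer the simplification to $\Xspace_b$ itself.

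First I would simplify the tangential factor. Here $(P,Q)=(D^{\lambda}\times D^{s-\lambda},(\partial D^{\lambda})\times D^{s-\lambda})$, where $s=\dim S$ and $\lambda$ is the Morse index of $f|_S$ at $p$. The straight-line contraction of the contractible disk $D^{s-\lambda}$ onto its center induces a strong deformation retraction of pairs $(P,Q)\to (D^{\lambda}\times\{0\},(\partial D^{\lambda})\times\{0\})=(D^{\lambda},\partial D^{\lambda})$, so $(P,Q)\simeq (D^{\lambda},\partial D^{\lambda})$.

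Next I would identify the normal factor $(J,K)=(N\cap f^{-1}[v-\ep,v+\ep],\,N\cap f^{-1}(v-\ep))$, where $N$ is the normal slice at $p$ and $v=f(p)$. The point is that $N$ is locally conelike on the link $l$ of $p$, and the lower halflink $l^-$ is the portion of that link lying below the critical value. Using this conelike structure one shows that $J$ deformation retracts onto the cone point (it is a small neighborhood of $p$ in $N$, hence contractible), while $K$ deformation retracts onto $l^-$, and the two retractions can be chosen compatibly; this yields $(J,K)\simeq(\cone(l^-),l^-)$. Taking the product of pairs then gives $(A,B)\simeq (D^{\lambda},\partial D^{\lambda})\times(\cone(l^-),l^-)$.

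Finally I would invoke the gluing lemma. The inclusions $Q\hookrightarrow P$, $K\hookrightarrow J$, and hence $B\hookrightarrow A$, are cofibrations (all spaces in sight may be triangulated, cf.\ Theorem~\ref{thm:strat-triangulate}), so replacing $(A,B)$ by the homotopy-equivalent cofibration pair $(D^{\lambda},\partial D^{\lambda})\times(\cone(l^-),l^-)$ and composing the attaching map with the equivalence on the $B$-side produces a homotopy equivalence $\Xspace_a\cup_B A\simeq \Xspace_a\cup_{B'}A'$. Combined with the homeomorphism of Theorem~\ref{theorem:smt-b}, this yields the asserted homotopy type of $\Xspace_b$. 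The hardest step is the identification of the normal Morse data with $(\cone(l^-),l^-)$: it is exactly here that the Whitney conditions are indispensable, since producing the compatible, stratum-preserving deformation retractions requires Goresky--MacPherson's local structure theory --- the conelike triviality of the normal slice together with their ``moving the wall'' argument.
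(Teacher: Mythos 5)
The paper does not actually prove this statement: Theorem~\ref{theorem:smt-c} appears in the appendix as quoted background, cited directly to Goresky--MacPherson~\cite{GoreskyMacPherson1988} (Section 3.12, p.~68), so there is no internal proof to compare yours against. Measured against the original source, your proposal reconstructs Goresky--MacPherson's own derivation: Theorem~\ref{theorem:smt-b} gives $\Xspace_b \cong \Xspace_a \cup_B A$ with $(A,B)=(P,Q)\times(J,K)$; the tangential pair deformation retracts to $(D^{\lambda},\partial D^{\lambda})$; the normal Morse data is replaced, up to homotopy equivalence of pairs, by $(\cone(l^{-}),l^{-})$; and the gluing lemma for cofibred pairs transports these replacements across the pushout. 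The formal steps are all sound, including your handling of the attaching map (composing with the homotopy inverse on the subspace side) and the product of pair equivalences.

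The one place to be careful is the normal factor, and you only partly acknowledge this. The equivalence $(J,K)\simeq(\cone(l^{-}),l^{-})$ does \emph{not} follow from ``$J$ is contractible and $K$ deformation retracts to $l^{-}$'' --- componentwise homotopy equivalences do not in general assemble into an equivalence of pairs --- and your phrase ``the two retractions can be chosen compatibly'' is precisely the assertion that needs proof. That identification is itself a separate theorem in Goresky--MacPherson (the homotopy type of the normal Morse data in terms of the halflink), established via the conelike local structure of the normal slice and moving-the-wall arguments, and it is where essentially all of the content of Theorem~\ref{theorem:smt-c} resides. So your argument is a correct reduction of the quoted theorem to an earlier theorem of the same book plus formal homotopy theory --- which is exactly how Goresky--MacPherson organize the material --- rather than an independent proof; provided that earlier result is taken as given, the proposal is complete.
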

Here, $l^{-}$ is the \emph{lower half link} of $\Xspace$ where $l^{-} = N \cap f^{-1}(v-\ep) \cap B_{\delta}^{\Mspace}$.

\end{document}